\author{Xuan Kien Phung}{Département d'informatique et de recherche opérationnelle, Université de Montréal, Montréal, Québec, H3T 1J4, Canada.}{phungxuankien1@gmail.com}{https://orcid.org/0000-0002-4347-8931}{}
\author{Sylvie Hamel}{Département d'informatique et de recherche opérationnelle, Université de Montréal, Montréal, Québec, H3T 1J4, Canada.}{hamelsyl@iro.umontreal.ca}{
https://orcid.org/
0000-0002-8941-2284}{Supported by NSERC through an Individual Discovery Grant RGPIN-2016-04576}
\authorrunning{X.\,K. Phung and S. Hamel} 
\keywords{Kemeny problem, Kendall-tau distance, Kemeny rule, median permutation, computational social theory} 
  \theoremstyle{definition}
\title{Space reduction techniques for the $3$-wise Kemeny problem}  
\titlerunning{The $3$-wise Kemeny problem} 
\keywords{Kemeny problem, Kendall-tau distance,  Kemeny rule, Median permutation, Computational social choice, Major Order Theorems, Consensus ranking, Generalized Kemeny rule} 
\newcolumntype{M}[1]{>{\centering\arraybackslash}m{#1}}
\numberwithin{equation}{section}
\begin{document}

\maketitle

\begin{abstract}
Kemeny's rule is one of the most studied and well-known voting schemes with various important applications in computational social choice and biology \cite{andrieu,social,3/4,brancotte}. Recently, Kemeny's rule was  generalized via a set-wise approach by Gilbert et. al. in \cite{setwise-aaai, setwise}. This paradigm presents interesting advantages in comparison with Kemeny's rule since not only pairwise comparisons but also the discordance between the winners of subsets of three alternatives are also taken into account in the definition of the $3$-wise Kendall-tau distance between two rankings.  In spite of the NP-hardness of the 3-wise Kemeny problem which consists of computing the set of $3$-wise consensus rankings, namely rankings whose total $3$-wise Kendall-tau distance to a given voting profile is minimized, we establish in this paper several generalizations of the Major Order Theorems, as obtained by Milosz and Hamel \cite{hamel-space} for Kemeny's rule, for the $3$-wise Kemeny voting schemes to achieve a substantial search space reduction by efficiently determining in polynomial time the relative orders of pairs of alternatives. Essentially, our theorems quantify precisely the nontrivial property that if the preference for an alternative over another one in an election is strong enough, not only in the head-to-head competition but even when taking into account one or two more alternatives, then the relative order of these two alternatives in all $3$-wise  consensus rankings must be as expected. As an application, we also obtain an improvement of the Major Order Theorems for Kememy's rule. 
Moreover, we show that  the well-known 
 $3/4$-majority rule of Betzler et al. \cite{3/4} for Kemeny's rule is only valid in general for elections with no more than $5$ alternatives with respect to the $3$-wise Kemeny scheme. Several simulations and tests of our algorithms on real-world and uniform data are provided.  
\end{abstract}

\section{Introduction} 
In this article, an \emph{election} is a finite collection $C=\{c_1, \dots, c_n\}$ of alternatives together with a \emph{voting profile} consisting of a finite number of (not necessarily distinct) votes. A \emph{ranking} or a  \emph{vote} is simply a complete and strict total ordering $\pi \colon c_{\pi(1)}> c_{\pi(2)}> \dots >c_{\pi(n)}$ which we identify with a permutation of $\{1, 2, \dots, n\}$ also denoted by $\pi$. Here, $x>y$ means that the alternative $x$ is ranked before the alternative $y$. The space of all rankings, for fixed $n$, can be equipped with several natural distances, for example, the Kendall-tau distance which counts the number of order disagreements between pairs of elements in two permutations, namely, the bubble-sort distance between two permutations, or more generally the $k$-wise Kendall-tau distance \cite{setwise} (see Equation~\eqref{e:definition-k-wise-distance}). 
The important Kemeny problem (cf. \cite{kemeny}, \cite{kemeny-snell}, \cite{young}) consists of finding the set of $k$-wise medians of a given election, i.e., permutations whose total distance to the voting profile is minimized with respect to the $k$-wise Kendall-tau distance. In other words, a median is a ranking that agrees the most with the voting profile. 
\par 
By  taking into consideration not only pairwise comparisons but also the discordance between the winners of subsets of three alternatives, 
the $3$-wise Kemeny voting scheme seems to be more resistant to  coalitional manipulation than the classical $2$-wise  Kemeny rule: it is much more difficult for an alternative to win an election or even to simply win another specific alternative in an election under the $3$-wise Kemeny voting scheme. In fact, most of the best-known space reduction results for Kemeny's rule fail in the $3$-wise setting (see \cite[Table 1]{kien-sylvie-condorcet}), including 
the powerful Major Order Theorems discovered in \cite{hamel-space} (see Example~\ref{ex:major-order-theorem-fail-3-wise}) and the  Condorcet criterion. 
It was shown in \cite{kien-sylvie-condorcet}  that even the $2/3$ majority in every duel is not enough to guarantee that an alternative will win an election according to the $3$-wise Kemeny voting scheme. This phenomenon  is in stark contrast to  the Condorcet criterion where a Condorcet winner, namely an alternative which is  preferred by more voters than any others, must be the unique winner of the election. Nevertheless, we know that an alternative obtaining a $3/4$ majority in every duel must be the unique winner in the $3$-wise Kemeny voting scheme  \cite{kien-sylvie-condorcet}. 
\par 
In many situations, the $3$-wise Kemeny rule is also more suitable than Kemeny's rule since it puts more weight on alternatives which are more frequently ranked in top positions in the votes. Indeed,  Kemeny's rule  puts equal weight on the head-to-head competition of two alternatives $x, y$ when in a particular vote, $x$ is the winner followed by $y$ and when in another vote, $x$ and $y$ occupy the last two position in that order. However, it is reasonable to assume that typical voters only pay attention to a shortlist of their  favorite alternatives and put a rather random order for the rest of the alternatives. Such undesirable  behavior creates noises that can   alter the Kemeny median ranking while the problem can be solved effectively using the $3$-wise Kemeny voting scheme in specific instances (see Example~\ref{ex:major-order-theorem-fail-3-wise} and Appendix~\ref{s:3-wise can be more resistant}). The above limitation of the Kemeny rule leads to the notion  of weighted Kendall tau distances introduced by Kumar and Vassilvitskii \cite{kumar} as well as the notion of set-wise Kemeny distance of Gilbert et. al. \cite{setwise}.
\par 
Motivated by the above potential and interesting features of the $3$-wise Kemeny rule as well as the NP-hardness of the various Kemeny problems (see \cite{arrow}, \cite{dwork}, \cite{setwise}), 
our main goal is to formulate new quantitative results concerning the majority rules in $3$-wise Kemeny voting schemes associated with the $3$-wise Kendall-tau distance  introduced recently in  \cite{setwise}, which provide much more refined space reductions to the Kemeny problem in comparison to existing techniques in the literature.  
Our first result (cf. Theorem~\ref{t:small-3-wise-3/4}) shows that the  fundamental  
 $3/4$-majority rule of Betzler et al. \cite{3/4} for the classical Kemeny rule is also applicable for all elections with no more than $5$ alternatives with respect to the $3$-wise Kemeny scheme.  However, the $3/4$-majority rule fails in general as soon as there are at least $6$ alternatives. Note that without restriction on the number of alternatives, 
 the $5/6$-majority rule obtained in  \cite{kien-sylvie-condorcet} serves as the $3$-wise counterpart of the $3/4$-majority rule.  
 \par 
 The second and central result of the paper is the Major Order Theorem for the $3$-wise voting scheme, denoted 3MOT (Theorem~\ref{3-wise-major-order}), and its improved version, denoted Iterated 3MOT (Theorem~\ref{3-wise-major-order-iterated}), which, to the limit of our knowledge, are the most efficient space reduction techniques for the $3$-wise Kemeny rule in the literature. In essence, our Major Order Theorems show that if  the preference for an alternative $x$ over another alternative $y$ in an election is strong enough, as measured quantitatively not only in the head-to-head competition but also when taking into account the interactions with \emph{one} or \emph{two} more other alternatives, then $x$ must be ranked before $y$ in \emph{all} $3$-wise medians of the election. The corresponding algorithms not only function efficiently in polynomial time $O(n^4m)$, where $n$ is the number of alternatives and $m$ is size of the voting profile, but also drastically reduce the search space of $3$-wise medians. 
 To get an idea on the efficiency and interests of our results, let $0\leq p <1$ be the proportion of pairs of alternatives solved by 3MOT or Iterated 3MOT out of the total of $n(n-1)/2$ pairs. Then the original search space consisting of all $n!$ possible rankings would be reduced by a factor (reduction rate) at least equal to (cf. \cite[Lemma 2, Lemma 4]{linear-extension-2018}, see also Table~\ref{table:real-data})
 \begin{equation}
 \label{e:bound-restriction-space}
 \max \left(n! \left(1+ 0.5n(1-p)\right)^{-n}, \, e^{p^2n/32}  \right) \in \, \Omega(c(p)^{n}), \quad \text{ where } \,c(p)= \max\left(\frac{2}{e(1-p)}, \, e^{p^2/32} \right).
 \end{equation} 
 \par 
 \noindent
On real-world data, especially for political elections and competitions where there exists a high similarity between the votes, our algorithms prove to be particularly useful as $p$ usually ranges from $0.6$ to $0.9$ after only a few iterations of  Iterated 3MOT  (see Table~\ref{table:real-data} and also  Appendix~\ref{appendix-examples} for some concrete examples). The performance is also very encouraging even on the hard case of uniformly generated data where, for example when $m=15$, the $3$-wise Major Order Theorems can determine the relative rankings of approximately $47\%$ pairs of alternatives on average when $n=10$ and approximately $31\%$ when $n=15$.   
 \par 
 Our results not only extend and improve  the important $2$-wise Major Order Theorem of \cite{hamel-space} (see Section~\ref{s:classical-MOT-intro} and Theorem~\ref{t:improved-2MOT-3.0}) 
 but also provide a unified approach and technique which should pave the way for the research of  more refined algorithms and quantitative properties of $k$-wise Kemeny voting schemes for $k \geq 2$. It is worth comparing our method to the space reduction method based on a $3$-wise majority digraph introduced in \cite[Theorem~3]{setwise} whose vertices are the alternatives. While we can obtain, under some assumptions on the $3$-wise digraph of an election, a set of rankings which contains  \emph{some} $3$-wise median using \cite[Theorem~3]{setwise}, our $3$-wise Major Order Theorems provide, for all    pairs of alternatives $(x,y)$, easy-to-compute and mild sufficient conditions so that $x>y$ in \emph{all} $3$-wise medians. In fact, by relaxing the conditions in our Major Order Theorems, we can determine {more} relative orderings of a pair of alternatives in \emph{some} 3-wise median (see Theorem~\ref{3-wise-major-order-with-equality}). Another major difference is that \cite[Theorem~3]{setwise} only considers the strength of the preference for $x$ over $y$ in the presence of \emph{one} other alternative $z\neq x,y$ while the $3$-wise Major Order Theorems quantify the preference for $x$ over $y$  not only in the head-to-head competition but even when taking into account  \emph{one} or \emph{two} more alternatives, which should provide a more refined space reduction method (see Example~\ref{ex:compare-gilbert}). In any case, the constraints on all $3$-wise medians found by our 3-wise Major Order Theorems should greatly accelerate other complementary space reduction methods and vice versa, nontrivial constraints obtained by other methods can serve as the inputs to boost our algorithms, especially Iterated 3MOT.  
 Table~\ref{table:summary} below
 summarizes our results and some well-known space reduction criteria for the classical and $3$-wise Kemeny voting schemes.  

\begin{table}[h!]
\centering
\caption{Some efficient space reduction techniques  for set-wise Kemeny voting schemes}
\label{table:summary}
\begin{tabularx}{\textwidth}{ 
  | >{\hsize=0.87\hsize \centering\arraybackslash}X 
  | >{\hsize=1.13\hsize \centering\arraybackslash}X 
  | >{\hsize=1\hsize\centering\arraybackslash}X |}
\hline
\textbf{Criterion} & \textbf{Kemeny rule} & \textbf{$3$-wise Kemeny rule} \\ 
\hline  
 Extended Always theorem (Pareto efficiency, unanimity)  & Phung \& Hamel \cite[Theorem~3]{kien-sylvie-condorcet} & Phung \& Hamel \cite{kien-sylvie-condorcet} (Theorem~\ref{t:3-wise-unanimity-general}) \\ \hline
$3/4$-majority rule (Section~\ref{s:section-3/4-rule})  & Betzler et al.  \cite{3/4}  & Valid only for elections of $5$ alternatives or less, Theorem~\ref{t:small-3-wise-3/4} \\\hline
Extended $s$-majority rule  &    Phung \& Hamel \cite[Section~4]{kien-sylvie-condorcet} & Valid if  $s\geq 5/6$\, \cite[Section~8]{kien-sylvie-condorcet} 
\\
\hline
Major Order Theorems  & Milosz \& Hamel  \cite{hamel-space} (Section~\ref{s:classical-MOT-intro}) 
Improved Iterated MOT (Theorem~\ref{t:improved-2MOT-3.0}) & Not valid, see Example~\ref{ex:major-order-theorem-fail-3-wise} \\ 
\hline 
$3$-wise Major Order Theorems & & Theorems~\ref{3-wise-major-order}, \ref{3-wise-major-order-iterated}, \ref{3-wise-major-order-with-equality}. 
 \\ 
\hline
\end{tabularx}

\end{table}
\par 
To illustrate the utility of our obtained algorithms, we performed several simulations and tests on real-world and uniform data. 
Table~\ref{table:proportion-iterated-3mot-1} and Table~\ref{table:proportion-iterated-3mot-2} in Appendix~\ref{appendix-data} record the approximate average proportion of pairs with relative order solved by the  $3$-wise Extended Always Theorem obtained in \cite{kien-sylvie-condorcet} (see Theorem~\ref{t:3-wise-unanimity-general}) and the 3-wise Major Order Theorems
over 100 000 uniformly generated instances. Several concrete  examples with real-world data (elections with a dozen and up to 250  alternatives)  taken from PREFLIB \cite{prelib} are described in Appendix~\ref{appendix-examples} (see also Table~\ref{table:real-data}). 
A direct implementation shows that for voting profiles with $n, m \leq 40$, the list of all pairs of alternatives  with relative order in all $3$-wise medians determined by 3MOT, resp. Iterated 3MOT, can be obtained in approximately less than 2.5 seconds, resp. 1 minute, with an M1 MacBook Air 16GB RAM. 
More optimized implementation can definitely improve the running time of the algorithms.
\par 
Finally, we explain how to apply our results and the set-wise Kemeny rule to deal with incomplete votes in Section~\ref{s:conclusion} and propose the usage of the proportion of pairs whose relative order are determined by the $3$-wise Major Order Theorems as a meaningful measure of the consensus level of the electorates (see Appendix~\ref{s:measure-consensus}). 


\section{Preliminaries} 

We recall in this section the classical 2-wise Major Order Theorems, the $3/4$-majority rule as well as the definition of $k$-wise medians and $k$-wise Kendall-tau distance. 

\subsection{The $k$-wise Kemeny rule} 

Let $k \geq 2$ be an integer and let $C$ be a finite set of alternatives. Let $S(C)$ be the set of all rankings of $C$. Let $\Delta^k(C) \subset 2^C$ be the collection of all subsets of $C$ which contain no more than $k$ elements. By counting the number of winner disagreements on all subsets taken from $\Delta^k(C)$, the \emph{$k$-wise Kendall-tau distance} $ d^k_{KT}(\pi, \sigma)$ between two rankings $\pi, \sigma$ of $C$ is defined by (cf. \cite{setwise}): 
\begin{equation}
\label{e:definition-k-wise-distance}
    d^k_{KT}(\pi, \sigma) = \sum_{S \in \Delta^k(C)} 
  \mathbbm{1}_{B(\pi, \sigma)}(S).
\end{equation}
 where $B(\pi, \sigma)=\{S \in \Delta^{k}(C) \colon \mathrm{top}_S(\pi) \neq  \mathrm{top}_S(\sigma)\}$ and  $\mathrm{top}_S(\pi) \in S$ denotes the highest ranked element of the restriction $\pi\vert_S$ of $\pi$ to $S$.  
The $k$-wise Kendall-tau distance between a ranking $\pi$ of $C$ and a collection of rankings $A$ of $C$ is $  d^k_{KT}(\pi,A) = \sum_{\sigma \in A}  d^k_{KT}(\pi,\sigma)$. 
Given a voting profile $V$ over $C$, we say that a ranking $\pi^*$ of $C$ is a \emph{median with respect to the $k$-wise Kemeny rule} or a \emph{$k$-wise median} if 
$d^k_{KT}(\pi^*,V) = \min_{\pi \in S(C)} d^k_{KT} (\pi, V)$.   
For $k=2$, we recover the Kendall-tau distance  $d^2_{KT} = d_{KT}$. It was shown in \cite{setwise} that the decision  variant of the $k$-wise Kemeny problem is NP-complete for every $k \geq 3$. 

\subsection{Transitivity lemma}

\begin{definition}
Let $x,y$ be alternatives in an election with voting profile $V$ and let $s\in [0,1]$. When $x$ is ranked before $y$ in a vote $ v \in V$, we denote $x>^v y$ or simply $x>y$ when there is no possible confusion.  
We write $x \geq_s y$, resp. $x>_sy$, if  $x>y$ in at least, resp. in more than, $s|V|$ votes. 
\end{definition}

The following simple but very useful observation serves as a weak transitivity property.

\begin{lemma}
\label{l:transitive}
    Let $x,y,z$ be three distinct  alternatives in an election with voting profile $V$. If $x\geq_s y$ and $x\geq_s z$ for some $s \in [0,1]$, then in at least $(2s-1)|V|$ votes, we have $x >y, z$, i.e., $x>y$ and $x>z$.  
\end{lemma}

\begin{proof}
Let $A, B \subset V$ be the sets of votes in which $x>y$ and $x>z$ respectively. Since $x\geq_s y$ and $x\geq_s z$, we deduce that $|A|\geq s|V|$ and $|B|\geq s |V|$. Consequently, from the set equality $|A \cup B| = |A|  + |B| - |A \cap B|$, we obtain the following estimation 
$   |A\cap B|  = |A| + |B| - |A \cup B|    \geq s|V|+ s|V| - |V| = (2s-1) |V|$. 
Since in every vote $v \in A \cap B$, we have $x>y,z$, the conclusion thus follows. 
\end{proof}

\subsection{The $3/4$-majority rule} 
\label{s:section-3/4-rule}
Following \cite{3/4}, a \emph{non-dirty pair of alternatives} in an election with respect to a threshold $s \in [0, 1]$ is a pair $(x, y)$ such that either $x$ is ranked before $y$ in at least $s|V|$ votes, or $y$ ranked before $x$ in at least $s|V|$  votes. Then we say that an alternative is \emph{non-dirty} if $(x,y)$ is a non-dirty pair with respect to the threshold $s$ for every other alternative $y \neq x$. An election with a certain voting rule satisfies the  \emph{$s$-majority rule} if for every non-dirty candidate $x$ with respect to the threshold $s$ and every other candidate $y \neq x$, the relative positions of the pair $(x,y)$ in every median is determined by the head-to-head majority rule.  
By pioneering results in \cite{3/4}, every election satisfies the $3/4$-majority rule with respect to the $2$-wise Kemeny voting scheme. We shall prove in Section~\ref{s:3/4-only-small-3-wise} that with respect to the $3$-wise Kemeny voting scheme, the $3/4$-majority rule holds for all elections with no more than $5$ alternatives but fails in general otherwise. 

\subsection{The classical $2$-wise Major Order Theorems} 
\label{s:classical-MOT-intro}
Since the scattered foundational  works of Condorcet \cite{condorcet} in 1785, of Truchon \cite{truchon-XCC} in 1990, of Kemeny \cite{kemeny} in 1959, of Young-Levenglick \cite{young-levenglick} in 1978, and of Young \cite{young} in 1988, only the last two decades have witnessed the most important discoveries in the NP-hardness complexity (Bachmeier et al.  \cite{arrow}, Dwork et al. \cite{dwork}), in heuristic and approximation algorithms (Ailon et al. \cite{ailon}, Ali-Meilă \cite{ali-meila}, Nishimura-Simjour \cite{nishimura-simjour}, Schalekamp-van Zuylen \cite{schalekamp}), in lower bounds (Cotnizer et al. \cite{conitzer}, Davenport-Kalagnanam \cite{davenport}), and in exact space reduction techniques (Betzler et al. \cite{3/4}, Blin et al. \cite{Blin}, Milosz-Hamel \cite{hamel-space}, Phung-Hamel \cite{kien-sylvie-condorcet}) of the Kemeny problem. 
Among the space reduction techniques, the Major Order Theorems obtained in \cite{hamel-space} provide some of the most efficient known algorithms. Such theorems are based on the observation that if two alternatives  are close enough in all votes they will have the tendency to be placed in their major order in any consensus ranking. 
To state the theorems, let $x,y$ be two alternatives of an election with voting profile $V$ over a set of alternatives $C$. Let $\delta_{xy}$ denotes the difference between the number of votes in which $x>y$ and respectively in which $x<y$. Let $E_{xy}$ (called the interference set) be the  multiset containing all alternatives $z$ such that $x>z>y$  in some vote where the multiplicity of $z \in E_{xy}$ is the number of votes in which $x>z>y$. 

\begin{theorem}[\textbf{MOT}, \cite{hamel-space}] 
    \label{2-wise-major-order}  
       Suppose that $x>y$ in more than half of the votes and $\delta_{xy} > |E_{yx} \setminus E_{xy}|$. Then $x$ is ranked before $y$ in every $2$-wise median of the election.   
 \end{theorem}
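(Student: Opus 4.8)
The plan is to argue by contradiction via a local exchange (block-move) argument. Suppose, for contradiction, that some $2$-wise median $\pi$ ranks $y$ before $x$, and let $W=\{w_1,\dots,w_k\}$ be the block of alternatives lying strictly between $y$ and $x$ in $\pi$, so that $\pi$ reads $\dots,y,w_1,\dots,w_k,x,\dots$. From $\pi$ I would form two competitors: $\pi_1$, obtained by sliding $x$ up to the slot immediately before $y$, and $\pi_2$, obtained by sliding $y$ down to the slot immediately after $x$. In each case $x,y$ become adjacent with $x$ first, and the only pairs whose relative order changes are $(x,y)$ together with the pairs $(x,w_i)$ (for $\pi_1$) or the pairs $(y,w_i)$ (for $\pi_2$); crucially, no pair among the $w_i$, and no pair avoiding $\{x,y\}$, is disturbed.

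Next I would compute the resulting change in $d_{KT}(\cdot,V)$. Flipping $(x,y)$ from $y>x$ to $x>y$ contributes $-\delta_{xy}$ to both moves. For each $z\in W$ we have $y>z>x$ in $\pi$; moving $x$ above $z$ contributes $\alpha(z):=\#\{v\in V: z>^v x\}-\#\{v\in V: x>^v z\}$, and moving $y$ below $z$ contributes $\beta(z):=\#\{v\in V: y>^v z\}-\#\{v\in V: z>^v y\}$. Hence $d_{KT}(\pi_1,V)-d_{KT}(\pi,V)=C_1$ and $d_{KT}(\pi_2,V)-d_{KT}(\pi,V)=C_2$, where $C_1=-\delta_{xy}+\sum_{z\in W}\alpha(z)$ and $C_2=-\delta_{xy}+\sum_{z\in W}\beta(z)$.

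The step I expect to be the main obstacle is linking these median-dependent quantities to the interference data $E_{yx}\setminus E_{xy}$, which depends only on the votes. The bridge is the identity $\alpha(z)+\beta(z)=2\big(m_{yx}(z)-m_{xy}(z)\big)$, where $m_{yx}(z)$ and $m_{xy}(z)$ count the votes with $y>^v z>^v x$ and with $x>^v z>^v y$ respectively --- exactly the multiplicities of $z$ in $E_{yx}$ and in $E_{xy}$. I would prove this by the short case analysis over the six possible orderings of $\{x,y,z\}$ within a vote: in every ordering that does not separate $x$ from $y$ by $z$ the two contributions cancel, and only the orders $y>^v z>^v x$ and $x>^v z>^v y$ survive, with opposite signs.

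Finally I would finish by averaging. Since $\min(C_1,C_2)\le\tfrac12(C_1+C_2)$, the identity gives $\min(C_1,C_2)\le-\delta_{xy}+\sum_{z\in W}\big(m_{yx}(z)-m_{xy}(z)\big)$. Bounding each term by $\max(0,m_{yx}(z)-m_{xy}(z))$ and enlarging the sum to range over all $z\neq x,y$ yields $\min(C_1,C_2)\le-\delta_{xy}+|E_{yx}\setminus E_{xy}|$, which is strictly negative by the hypothesis $\delta_{xy}>|E_{yx}\setminus E_{xy}|$ (note this hypothesis already forces $x>y$ in more than half the votes, since the right-hand side is nonnegative). Thus one of $\pi_1,\pi_2$ is strictly closer to $V$ than $\pi$, contradicting the minimality of $\pi$; hence $x$ precedes $y$ in every $2$-wise median, as claimed.
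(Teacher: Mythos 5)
Your proof is correct and follows essentially the same route by which this paper recovers MOT: your competitors $\pi_1,\pi_2$ are exactly the rankings $L>x>y>Z>R$ and $L>Z>x>y>R$ used in the proof of Theorem~\ref{t:improved-2MOT-3.0} (whose $k=0$ case, combined with Corollary~\ref{c:iterated-2-wise-major-order}, yields MOT), and your bridging identity $\alpha(z)+\beta(z)=2\bigl(n_{yzx}-n_{xzy}\bigr)$ is precisely the computation $\delta_{yz}+\delta_{zx}=2(n_{yzx}-n_{xzy})$ of Lemma~\ref{l:iterated-2-wise-major-order}, which also supplies the translation $|E_{yx}\setminus E_{xy}|=\sum_{z\neq x,y}\max(0,\,n_{yzx}-n_{xzy})$. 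Your averaging step $\min(C_1,C_2)\le\tfrac{1}{2}(C_1+C_2)$ is the same device the paper uses to conclude that one of the two competitors strictly improves on the supposed median.
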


 \par 
 If we iterate Theorem~\ref{2-wise-major-order} by eliminating from the interference set $E_{xy}$ all alternatives which have been found to rank before (or after) both $x$ and $y$ by previous $i$ iterations to obtain the trimmed interference set $E^{(i+1)}_{xy}$, we obtain an iterated version of Theorem MOT (see Corollary~\ref{c:iterated-2-wise-major-order} for another formulation).   
 
\begin{theorem}[\textbf{Iterated MOT}, \cite{hamel-space}]  
    \label{iterated-2-wise-major-order} 
       If $x>y$ in more than half of the votes and $\delta_{xy} > |E^{(i)}_{yx} \setminus E^{(i)}_{xy}|$ for some $i\geq 0$, then $x$ is ranked before $y$ in every $2$-wise median of the election.   
 \end{theorem}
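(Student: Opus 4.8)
The plan is to prove the iterated statement by induction on the number $i$ of iterations, the case $i=0$ being exactly Theorem~\ref{2-wise-major-order}; the whole argument rests on a single exchange computation which I now set up. Suppose, for contradiction, that some $2$-wise median $\pi$ ranks $y$ before $x$, and let $Z$ be the set of alternatives lying strictly between $y$ and $x$ in $\pi$, so that $\pi$ reads $\dots,y,z_1,\dots,z_k,x,\dots$ on this block. I would then perform two elementary repositionings that do not disturb the internal order of $Z$: slide $x$ leftwards to sit immediately before $y$, and (separately) slide $y$ rightwards to sit immediately after $x$. Because a single element is dragged across the block in each case, the only reversed pairs are those involving the moved element, so the induced changes in the total Kendall-tau cost are exactly $-\delta_{xy}+\sum_{z\in Z}\delta_{zx}$ and $-\delta_{xy}+\sum_{z\in Z}\delta_{yz}$, respectively.

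Since $\pi$ is a median, both quantities are $\ge 0$. I would add them and carry out a per-vote bookkeeping over the six orderings of $\{x,y,z\}$ in a vote $v$; the expectation is that every vote in which $z$ is \emph{not} ranked between $x$ and $y$ contributes $0$, leaving the identity $\delta_{zx}+\delta_{yz}=2\bigl(m_{yx}(z)-m_{xy}(z)\bigr)$, where $m_{yx}(z)$ and $m_{xy}(z)$ denote the number of votes with $y>^v z>^v x$ and with $x>^v z>^v y$, respectively. Summing the two nonnegative cost changes then yields $\delta_{xy}\le \sum_{z\in Z}\bigl(m_{yx}(z)-m_{xy}(z)\bigr)\le \sum_{z\in Z}\max\bigl(0,\,m_{yx}(z)-m_{xy}(z)\bigr)$.

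For the base case this last sum is bounded by $\sum_{z}\max(0,\,m_{yx}(z)-m_{xy}(z))=|E_{yx}\setminus E_{xy}|$, contradicting the hypothesis and recovering Theorem~\ref{2-wise-major-order}. For the inductive step I would invoke the induction hypothesis that after $i$ iterations every alternative removed in forming $E^{(i)}_{xy}$ has already been shown to rank either before both $x$ and $y$, or after both $x$ and $y$, in every median. Such an alternative cannot lie strictly between $y$ and $x$ when $y$ precedes $x$, hence is absent from $Z$; therefore the sum over $Z$ ranges only over alternatives surviving in the trimmed interference sets, giving $\delta_{xy}\le |E^{(i)}_{yx}\setminus E^{(i)}_{xy}|$ and contradicting the hypothesis $\delta_{xy}>|E^{(i)}_{yx}\setminus E^{(i)}_{xy}|$.

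The hard part will be the exact per-vote cancellation $\delta_{zx}+\delta_{yz}=2(m_{yx}(z)-m_{xy}(z))$: its whole point is that the four orderings in which $z$ is not sandwiched between $x$ and $y$ cancel, which is precisely what lets the global quantities $\delta_{zx},\delta_{yz}$ be controlled by the purely local interference counts. A secondary but essential point is the decision to use two single-element slides and add them, rather than one block rearrangement, since this is what guarantees that no pair inside $Z$ is reversed, so that both cost changes are computed exactly. The remaining bookkeeping—identifying $|E_{yx}\setminus E_{xy}|$ with $\sum_{z}\max(0,\,m_{yx}(z)-m_{xy}(z))$ through the multiset difference and checking that trimmed alternatives never enter $Z$—is then routine once the inductive invariant is stated correctly.
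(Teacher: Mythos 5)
Your proof is correct and follows essentially the same route as the paper: the paper establishes this result via its stronger Theorem~\ref{t:improved-2MOT-3.0} together with Corollary~\ref{c:iterated-2-wise-major-order} and Lemma~\ref{l:iterated-2-wise-major-order}, but the mechanism there is exactly your two single-element slides (the rankings $\sigma_1^*\colon L>Z>x>y>R$ and $\sigma_2^*\colon L>x>y>Z>R$), the summed nonnegativity $\Delta_1+\Delta_2 = 2\delta_{xy}-\sum_{z\in Z}(\delta_{yz}+\delta_{zx})$, and the per-vote cancellation identity $\delta_{yz}+\delta_{zx}=2(n_{yzx}-n_{xzy})$ matching your $2\bigl(m_{yx}(z)-m_{xy}(z)\bigr)$. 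Your inductive invariant — that trimmed alternatives rank before or after both $x$ and $y$ in every median and hence never enter $Z$ — is likewise the argument the paper uses in its induction on the iteration index.
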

\par 
Over uniformly distributed elections with $15$ alternatives and $15$ votes,  Iterated MOT can solve on average more than 50$\%$ of possible relative orders of all pairs of alternatives \cite{hamel-space}. 

\section{The $3$-wise $3/4$-majority rule for elections with no more than 5 alternatives} 
\label{s:3/4-only-small-3-wise}

In this section, we shall prove that for the $3$-wise Kemeny voting scheme, 
the $3/4$-majority rule holds in general only 
for elections of at most $5$ alternatives. 

\begin{theorem}
\label{t:small-3-wise-3/4} 
Let $n \geq 1$ be an integer. The $3/4$-majority rule holds for all elections of $n$ alternatives with respect to the $3$-wise Kemeny voting scheme if and only if $n \leq 5$. 
\end{theorem}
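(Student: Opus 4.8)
The plan is to establish the two implications separately: that the $3/4$-majority rule holds for every election with $n\le 5$ alternatives, and that it can fail as soon as $n\ge 6$.

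For the forward (positive) direction, fix a non-dirty candidate $x$ with respect to the threshold $3/4$ and another candidate $y$; non-dirtiness forces either $x\geq_{3/4}y$ or $y\geq_{3/4}x$, and I treat the first case, the second being handled by an analogous argument that still exploits the non-dirtiness of $x$. Assuming $x\geq_{3/4}y$, I argue by contradiction: suppose a $3$-wise median $\pi$ has $y>^\pi x$, and exhibit a strictly better ranking. The basic surgery is to move $x$ to the position immediately above $y$, producing $\sigma$; since $\sigma$ and $\pi$ differ only in the order of $x$ relative to the block $U$ consisting of $y$ together with the alternatives ranked strictly between $y$ and $x$ in $\pi$, the quantity $d^3_{KT}(\sigma,V)-d^3_{KT}(\pi,V)$ decomposes into a pairwise part $\sum_{u\in U}\delta_{ux}$ and a triple part indexed by the subsets $\{x,u,u'\}$ with $u,u'\in U$ and $\{x,u,c\}$ with $u\in U$ and $c$ ranked below $x$. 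I would bound the pairwise part using $x\geq_{3/4}y$, which contributes $\delta_{yx}\le -|V|/2$, and classifying, by non-dirtiness, each intermediate alternative as beaten by $x$ (a favourable term $\le -|V|/2$) or as beating $x$ (an a priori unfavourable term). The triple part I would control through Lemma~\ref{l:transitive}, which bounds the number of votes in which $x$ simultaneously beats two given alternatives, together with the fact that three pairwise $3/4$-majorities can never be cyclic.

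The decisive use of the hypothesis $n\le 5$ is that at most three alternatives are distinct from $x$ and $y$, so only a bounded number of triples can enter the surgery. This reduces the verification to a finite case analysis on how these at most three alternatives are distributed among the positions above $y$, strictly between $y$ and $x$, and below $x$; in each configuration I would show that the pairwise surplus guaranteed by the $3/4$-thresholds strictly dominates the total triple contribution, so that $d^3_{KT}(\sigma,V)<d^3_{KT}(\pi,V)$, contradicting optimality of $\pi$. I expect the genuine difficulty to be precisely this domination: in contrast with the $2$-wise setting a single triple can shift the distance by as much as $|V|$, so a crude bound over all triples dwarfs the $|V|/2$ surplus, and the argument must use the fine co-occurrence estimates of Lemma~\ref{l:transitive} and the no-cyclic-$3/4$-majority principle to show the unfavourable triples are always compensated. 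Where moving $x$ above $y$ is not by itself beneficial, I would compare it against the symmetric move of pushing $y$ below $x$ (or a short combination of the two) and argue that some such move strictly decreases the $3$-wise distance.

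For the backward (negative) direction it suffices to produce one election on $n=6$ alternatives that violates the rule and then to pad it. I would construct a profile in which a candidate $x$ is non-dirty and satisfies $x\geq_{3/4}y$, yet the unique $3$-wise median ranks $y$ above $x$, the four auxiliary alternatives being arranged so that the triples they form with $x$ and $y$ drag $y$ ahead of $x$ in the set-wise distance even though the head-to-head count favours $x$; the main work here is the explicit verification, namely bounding from below the $3$-wise distance of every competing ranking and checking that $x$ remains non-dirty. To reach every $n>6$ I would append auxiliary alternatives ranked unanimously at the very top, in a fixed order above all original alternatives. Such a top-alternative $z$ is the top of every subset that contains it, in every vote and in any consensus placing it on top, so each subset meeting $\{z\}$ contributes $0$ to the $3$-wise distance; consequently appending top-alternatives changes neither the $3$-wise cost of the original rankings nor the median order of the original alternatives, while leaving $x$ non-dirty (it now loses to each $z$ unanimously). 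Hence the $n=6$ violation persists for all $n\ge 6$, completing the equivalence. Note that padding at the bottom would instead leak the pairwise structure of the original pairs into the new triples and so could repair the violation, which is exactly why the top placement is essential.
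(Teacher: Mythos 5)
There is a genuine gap on both sides of the equivalence: your text is a plan whose two decisive components are announced but never executed. For the positive direction ($n\le 5$), the paper does not argue by local surgery at all: it encodes the failure of the rule as a satisfiability question — for each $n\le 5$, each position $k$ of a non-dirty alternative, and each ``bad'' ranking $p$, it asks whether a (by homogeneity, fractional) profile exists making $p$ at least as good as \emph{every} ``good'' ranking $q$, subject to the $3/4$-majority constraints — and verifies by linear programming that every such system is infeasible (proof of Lemma~\ref{l:small-3-wise-3/4}). Your single-move surgery (shift $x$ just above $y$, with the correct decomposition into pairs $\{x,u\}$ and triples $\{x,u,u'\}$, $\{x,u,c\}$) leaves precisely the hard step open, and you concede as much: a single triple can contribute up to $|V|$, and you never show the pairwise surplus dominates in all configurations. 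The paper's own auxiliary results indicate this is not a routine case check: Lemma~\ref{l:LP-1} shows the relevant triple contributions are bounded by exactly $0$ (no slack), and even the weak versions of the rule in Section~\ref{s:weak-version-3-wise-3/4} (e.g.\ Theorem~\ref{l:second-best}) resort to LP-computed bounds for individual triples. Your fallback — comparing against ``the symmetric move of pushing $y$ below $x$, or a short combination of the two'' — is essentially the two-ranking-sum trick the paper uses to prove 3MOT (Lemma~\ref{appendix-3-wise-major-order-theorem}), but for the $3/4$-rule the paper needed comparison of $p$ against \emph{all} good rankings, not two; nothing in your sketch shows two (or finitely many pre-chosen) moves suffice. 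Likewise, the case $y\geq_{3/4}x$ is not handled by ``an analogous argument'': the situation is asymmetric (only $x$ is non-dirty), and it is exactly the configuration where the paper's analysis is most delicate.

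For the negative direction, your padding-at-the-top idea matches the paper exactly (Lemma~\ref{l:no-3-wise-3/4-for-8-or-more-alternatives} prepends a unanimous block $W$, and the reduction to the induced $6$-alternative election should be justified by the $3$-wise Extended Always/unanimity theorem, Theorem~\ref{t:3-wise-unanimity-general}, which guarantees every median places $W$ on top — a citation your sketch needs). But you never construct the $6$-alternative counterexample, and that construction \emph{is} the content of this direction: since the rule provably holds for $n\le5$, its existence is not generic, and the paper must exhibit a specific $20$-vote profile on $\{x,y,z,t,u,v\}$ in which $t$ is non-dirty with $t\geq_{3/4}u$ yet $u>t$ in both $3$-wise medians, together with a verification of what those medians are. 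Your closing observation that bottom-padding would distort the objective (each bottom alternative adds one extra copy of every pairwise disagreement among the original alternatives) is correct and a nice touch, but it does not substitute for the missing profile or the missing domination argument.
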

\par 
\noindent 
Theorem~\ref{t:small-3-wise-3/4} results directly from Lemma~\ref{l:no-3-wise-3/4-for-8-or-more-alternatives} and Lemma~\ref{l:small-3-wise-3/4} below which prove each of the two implications in Theorem~\ref{t:small-3-wise-3/4}. 

\begin{lemma}
\label{l:no-3-wise-3/4-for-8-or-more-alternatives}
For every $n \geq 6$, the $3/4$-majority rule fails for some  elections consisting of $n$ alternatives with respect to the $3$-wise Kemeny voting scheme.  
\end{lemma}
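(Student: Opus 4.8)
The plan is to prove Lemma~\ref{l:no-3-wise-3/4-for-8-or-more-alternatives} by explicit construction: first exhibit a single election on six alternatives for which the $3/4$-majority rule already fails under the $3$-wise Kemeny rule, and then inflate this witness to any number $n\ge 6$ of alternatives by appending ``dominating'' candidates that every $3$-wise median is forced to rank at the very top.

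For the base case I would search for a profile $V$ on $C=\{a,b,c,d,e,f\}$ in which $a$ is non-dirty with respect to the threshold $3/4$ and $a\geq_{3/4}b$ (so head-to-head majority demands $a>b$), yet some $3$-wise median ranks $b$ before $a$. The tool to locate and certify such a profile is the adjacent-swap cost (the elementary case analysis underlying Theorem~\ref{3-wise-major-order}). If in a ranking $\pi$ the pair $a,b$ is adjacent with $a$ immediately above $b$ and $L$ is the set of alternatives placed below both, then replacing $a>b$ by $b>a$ changes $d^3_{KT}(\cdot,V)$ only through the pair $\{a,b\}$ and the triples $\{a,b,z\}$ with $z\in L$. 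Writing $t_a(z)=\#\{v\in V:\mathrm{top}_{\{a,b,z\}}(v)=a\}$ and $t_b(z)$ analogously, a six-case check of the orders of $\{a,b,z\}$ in a vote shows that the distance \emph{decreases} by exactly
\[
\sum_{z\in L}\bigl(t_b(z)-t_a(z)\bigr)-\delta_{ab}.
\]
Thus a median can prefer $b>a$ precisely when, for the alternatives $z$ lying below $a,b$, the alternative $b$ tops the triples $\{a,b,z\}$ more often than $a$ does, by a total margin exceeding $\delta_{ab}$. I would therefore engineer a deliberate correlation: in the majority of votes with $a>^{v}b$ the alternatives of $L$ sit above $a$ (so $a$ never tops those triples, i.e.\ $t_a(z)\approx 0$), whereas in the minority with $b>^{v}a$ the alternative $b$ sits above $L$ (so $b$ tops them, i.e.\ $t_b(z)$ is large).

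The genuine difficulty---and the reason six alternatives are needed---is global optimality: the very correlation that forces the local reversal also pushes the alternatives of $L$ above $a$ in most votes, which tends to pull them above $a,b$ in the median, shrinking $L$ and killing the triple contributions. Reconciling the reversal with the requirement that the reversed ranking actually minimize $d^3_{KT}(\cdot,V)$ is the crux, and is precisely what fails for $n\le 5$ but becomes achievable at $n=6$. I would settle it by fixing explicit multiplicities for a small profile and certifying the claim by a finite verification---computing, over all $6!$ rankings, the exact set of $3$-wise medians of $V$ and checking that at least one ranks $b$ before $a$ while $a$ remains non-dirty. This finite check is the heart of the lemma.

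Given the six-alternative witness $(C,V)$ with $3$-wise median $\pi_0$ satisfying $b>^{\pi_0}a$, the extension to arbitrary $n\ge 6$ is clean. Set $t=n-6$, adjoin fresh alternatives $g_1,\dots,g_t$, and form the extended profile by prepending the common prefix $g_1>g_2>\cdots>g_t$ to every vote of $V$. Because each $g_i$ is the top of every subset in $\Delta^3$ containing it in \emph{every} vote, moving all $g_i$ to the top in that order can only decrease the distance, so every $3$-wise median ranks them there (this is also a special case of the Extended Always Theorem~\ref{t:3-wise-unanimity-general}). For any such ranking $\pi$, every subset $S\in\Delta^3$ meeting $\{g_1,\dots,g_t\}$ has its top equal to its lowest-indexed $g_i$ in both $\pi$ and every vote, hence contributes $0$; the remaining subsets lie in $\Delta^3(C)$ and reproduce exactly the base distance $d^3_{KT}(\pi|_C,V)$. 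Consequently the medians of the extended election are precisely the base medians with the prefix $g_1>\cdots>g_t$ attached, so one of them ranks $b$ before $a$. Note that padding at the \emph{top} is essential: padding at the bottom would turn each triple $\{a,b,z\}$ with a bottom-dummy $z$ into an extra copy of the pair comparison $\{a,b\}$, amplifying the pairwise weight and eventually restoring the $2$-wise regime in which the $3/4$-rule holds. Finally $a$ stays non-dirty (each new pair $(a,g_i)$ satisfies $g_i\geq_{3/4}a$ since $g_i$ beats $a$ in all votes) and $a\geq_{3/4}b$ is unchanged, so the $3/4$-majority rule fails for every $n\ge 6$.
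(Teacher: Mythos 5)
Your overall architecture coincides with the paper's: an explicit six-alternative witness election, then padding to arbitrary $n\geq 6$ with dummy alternatives $g_1>\dots>g_{n-6}$ prepended to every vote, whose placement at the top of every $3$-wise median (in that order) is forced by the $3$-wise Extended Always Theorem, with the distance decomposing so that the medians of the extended election are exactly the base medians with the prefix attached. This padding step, including the preservation of non-dirtiness via $g_i\geq_{3/4}a$, is correct and is precisely what the paper does. Your adjacent-swap computation is also sound: swapping an adjacent pair $a>b$ to $b>a$ changes $d^3_{KT}(\cdot,V)$ only through the pair $\{a,b\}$ and the triples $\{a,b,z\}$ with $z$ below both, by exactly $\delta_{ab}+\sum_{z\in L}\bigl(t_a(z)-t_b(z)\bigr)$, matching the formula you state.

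The genuine gap is that you never exhibit the witness. The base case $n=6$ is the entire existential content of the lemma, and your proposal replaces it with a search plan (``I would search for a profile \dots I would settle it by fixing explicit multiplicities \dots and certifying the claim by a finite verification''), explicitly conceding that this unperformed check ``is the heart of the lemma.'' Your swap inequality is only a local necessary condition for a reversal; as you yourself observe, the hard part is global optimality of a reversed ranking, and that cannot be certified without concrete numbers. The paper supplies them: a $20$-vote profile over $\{x,y,z,t,u,v\}$ consisting of five votes each of $x>y>z>t>u>v$, of $u>y>z>x>t>v$, of $v>z>y>x>t>u$, and of $t>u>x>v>y>z$, whose only $3$-wise medians are $u>x>y>z>t>v$ and $u>y>z>x>t>v$; the alternative $t$ is non-dirty at threshold $3/4$ (it beats $u$ and $v$ in $15$ of $20$ votes and loses to every other alternative in at least $15$ of $20$), yet $u$ precedes $t$ in \emph{both} medians. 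Two further points of contrast: the paper's violation occurs in all medians, not merely one (either suffices to refute the rule), and its non-dirty alternative sits in the middle of the majority order rather than near the top, which is the configuration your design heuristic was groping toward. As written, then, your proposal is a correct reduction plus a correct search strategy, but not a proof.
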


\begin{proof}
Let $n \geq 6$ be an integer. Let $W$ denote the ranking  
$w_1 >\dots>w_{n-6}$. 
Let us consider the following voting profile $V$ consisting of $4$ votes over $n$ alternatives $x,y,z,t,u,v, w_1, \dots, w_{n-6}$   
\begin{align*}
&r_1: W>x>y>z>t>u>v \,\,\text{ (5  votes)},\quad  &r_2: W>u>y>z>x>t>v \,\,\text{ (5 votes)}, \\
&r_3: W>v>z>y>x>t>u \,\,\text{ (5 votes)},\quad  &r_4: W>t>u>x>v>y>z \,\,\text{ (4 votes)},  \\
&r_5: W>t>u>x>v>y>z \,\,\text{ (1  vote)}.\quad  & 
\end{align*} 
\par 
\noindent 
Let $\pi^*$ be a $3$-wise median of the above election. 
Then by the $3$-wise Extended Always theorem \cite{kien-sylvie-condorcet}, the first $n-8$ positions in $\pi^*$ is given by $W \colon w_1 >\dots> w_{n-6}$. Hence, we can write $\pi^* \colon W >\sigma^*$ where $\sigma^*$ is a ranking of the 
alternatives $x,y,z,t,u,v$. 
It is then clear from the definition of $d^3_{KT}$ that $\pi^*$ is a $3$-wise median if and only if $\sigma^*$ is a $3$-wise median of the following induced election $V'$: 
\begin{align*}
&r'_1: x>y>z>t>u>v \,\,\text{ (5  votes)},\quad  &r'_2: u>y>z>x>t>v \,\,\text{ (5 votes)}, \\
&r'_3: v>z>y>x>t>u \,\,\text{ (5 votes)},\quad  &r'_4: t>u>x>v>y>z \,\,\text{ (4 votes)},  \\
&r'_5: t>u>x>v>y>z \,\,\text{ (1  vote)}.\quad  & 
\end{align*} 
\par 
\noindent 
The only $3$-wise medians of $V'$ are  $ 
\sigma^*_1 \colon u>x>y>z>t>v$ and $\sigma^*_2 \colon u>y>z>x>t>v$.  
Consequently,  
$\pi^*=  \pi^*_1$ or $\pi^*=\pi^*_2$ where $ 
\pi^*_1 \colon W>u>x>y>z>t>v$ and  $\pi^*_2 \colon W>u>y>z>x>t>v$.  
Therefore, the election $V$ admits $t$ as a non-dirty alternative according to the $3/4$ threshold since $t\geq_{3/4}u$,  $t\geq_{3/4}v$, and  
$w \geq_{3/4} t$ for all $w \in \{x,y,z\} \cup\{w_i\colon 1 \leq i \leq n-6\}$. However, $t$ is ranked after $u$ in both $\pi^*_1$ and $\pi^*_2$. Thus, the $3/4$-majority rule fails for all $3$-wise medians of $V$. 
\end{proof}

\begin{lemma}
    \label{l:small-3-wise-3/4} 
The $3/4$-majority rule holds for all elections of at most $5$ alternatives with respect to the $3$-wise Kemeny voting scheme.     
\end{lemma}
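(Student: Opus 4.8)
The plan is to fix a non-dirty alternative $x$, a second alternative $y \ne x$, and (by symmetry) assume the head-to-head majority favours $x$, i.e. $x \geq_{3/4} y$; I must then show $x > y$ in every $3$-wise median $\pi^*$. Writing $N_{ab}$ for the number of votes with $a > b$, non-dirtiness of $x$ means that each $z \ne x$ lies in exactly one of $A = \{z : x \geq_{3/4} z\}$ (so $N_{zx} \le \tfrac14|V|$) or $B = \{z : z \geq_{3/4} x\}$ (so $N_{xz} \le \tfrac14 |V|$), and $y \in A$. The cleanest reformulation of the full conclusion is that in every median each element of $A$ sits below $x$ and each element of $B$ sits above $x$; I would prove this ``correct sorting of $x$'' directly by an exchange argument, since it contains the desired statement for the pair $(x,y)$.

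The engine is the cost of an adjacent transposition. If $\pi'$ is obtained from $\pi$ by swapping an element $a$ with the element $b$ immediately below it, then only the subsets $\{a,b\}$ and $\{a,b,c\}$ with $c$ ranked below both $a,b$ change their top, and a direct count gives
$$ d^3_{KT}(\pi',V) - d^3_{KT}(\pi,V) = \delta_{ab} + \sum_{c}\bigl(p_a(c) - p_b(c)\bigr), $$
where $c$ ranges over the alternatives below $\{a,b\}$ and $p_a(c) = |\{v : \mathrm{top}_{\{a,b,c\}}(v) = a\}|$. To push $x$ upward past an immediate predecessor $w \in A$ I apply this with $a = w$, $b = x$. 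Here non-dirtiness gives $\delta_{wx} \le -\tfrac12|V|$ and $p_w(c) \le N_{wx} \le \tfrac14|V|$, while Lemma~\ref{l:transitive} with $s = \tfrac34$ is exactly what controls the beneficial triple terms: whenever $c \in A$ it yields $p_x(c) \ge \tfrac12|V|$, hence $p_w(c) - p_x(c) \le -\tfrac14|V|$. The only terms that can raise the cost are those with $c \in B \cap L$ (where $L$ denotes the elements below $\{w,x\}$), each bounded by $+\tfrac14|V|$, so
$$ d^3_{KT}(\pi',V) - d^3_{KT}(\pi,V) \ \le\ -\tfrac12|V| + \tfrac14|V|\,\bigl|B \cap L\bigr|, $$
which is strictly negative as soon as $|B \cap L| \le 1$; an entirely symmetric computation lets me push an immediate successor $u \in B$ upward past $x$ whenever few $A$-elements lie above the pair.

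With these two moves in hand I would argue by extremality: among all $3$-wise medians choose one, $\pi^*$, minimising the number of \emph{misplaced} pairs (an $A$-element above $x$, or a $B$-element below $x$), and suppose this count is positive. The place where $n \le 5$ enters is the bound $|L| \le n-2 \le 3$: the interference below any pair involves at most three alternatives, so $|B \cap L|$ (respectively the number of high $A$-elements) is harmless unless several misplaced elements of the \emph{opposite} type are stacked on the far side of $x$ — and those are themselves removed by the symmetric move. Each admissible move either strictly lowers the cost (contradicting that $\pi^*$ is a median) or keeps it constant while strictly decreasing the number of misplaced pairs (contradicting extremality), so no misplaced pair survives; specialising to $y \in A$ gives $x > y$.

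The main obstacle is precisely the barrier configuration in which two or three elements of $B$ sit below $x$ (or of $A$ above it), so that a single adjacent swap is no longer guaranteed to be non-increasing; resolving it forces me to combine the upward and downward moves through the extremal/potential argument, and this is where $n \le 5$ does real work by capping $|L|$ at $3$. I expect the borderline case $|B\cap L| = 2$ (cost change $\le 0$, possibly $=0$) to need the extremal bookkeeping rather than a raw inequality, finished by a short finite check over the few sign patterns of the at most three ``other'' alternatives. That the argument must fail for $n \ge 6$ is guaranteed in advance by Lemma~\ref{l:no-3-wise-3/4-for-8-or-more-alternatives}, whose explicit election produces exactly such an unavoidable barrier of three intermediate alternatives between a misplaced $u \in B$ and the non-dirty $t$.
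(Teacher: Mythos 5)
Your local machinery is sound as far as it goes: the adjacent-swap cost formula is correct (since $d^3_{KT}$ is top-based, only the pair $\{a,b\}$ and the triples $\{a,b,c\}$ with $c$ \emph{below} the swapped pair change contribution), and the bounds $\delta_{wx}\le -\tfrac12|V|$, $p_w(c)-p_x(c)\le -\tfrac14|V|$ for $c\in A$ (via Lemma~\ref{l:transitive}) and $\le +\tfrac14|V|$ for $c\in B$ are right. But the proof has genuine gaps exactly where the statement is delicate. First, your ``entirely symmetric computation'' is wrong as stated: because the $3$-wise distance counts only top disagreements, it has no reversal symmetry, and for the swap of $x$ with an immediate successor $u\in B$ the dangerous terms are again the $B$-elements \emph{below} the pair, not ``$A$-elements above the pair'' --- elements above an adjacent pair never affect the cost of the swap. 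Second, your extremal scheme only ever moves $x$ past an \emph{immediately adjacent} misplaced element. If the misplaced $w\in A$ sits above $x$ with correctly placed $B$-elements in between (or $u\in B$ sits below $x$ behind correctly placed $A$-elements), the available adjacent swaps involve pairs like $(e,u)$ with $e\in A$ correctly placed, for which your hypotheses give only $\delta_{eu}\le 0$ and triple terms as large as $+\tfrac14|V|$, so no move is certified non-increasing and the induction never starts. Third, the cases $|B\cap L|\in\{2,3\}$ --- precisely the ``barrier'' configurations that separate $n\le 5$ from the counterexample of Lemma~\ref{l:no-3-wise-3/4-for-8-or-more-alternatives} --- are explicitly deferred (``I expect\dots'', ``a short finite check'') rather than carried out, so the argument is a plan, not a proof, at its crux.

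For comparison, the paper does not attempt any such combinatorial exchange argument. It reduces Lemma~\ref{l:small-3-wise-3/4} to finitely many linear-programming feasibility questions: a failure of the rule for $n$ alternatives is encoded by variables $c_r$ (the fraction of voters with ranking $r$), the non-dirtiness constraints $\sum_r \alpha_{i,k,r}c_r\ge 3/4$, and the constraints $\sum_r \beta_{p,q,r}c_r\le 0$ saying a ``bad'' ranking $p$ is at least as good as every ``good'' ranking $q$; infeasibility is then verified by computer for all $n\le 5$, all positions $k$, and all bad rankings $p$. Your deferred ``finite check over sign patterns'' would in effect have to be such a computation anyway --- the fractional vote distributions realizing the tight cases form a polytope, not a short list of sign patterns --- so to complete your route you would either need to carry out that enumeration rigorously or fall back on the paper's LP reduction.
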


For the proof of Lemma~\ref{l:small-3-wise-3/4} (and Theorem~\ref{l:second-best}), we shall need the following auxiliary result. 
\begin{lemma}
\label{l:LP-1}
    Let $C=\{x,y,z,t\}$ be a set of 4 alternatives in an election such that $z\geq_{3/4}x$, and $x\geq_{3/4}y$, and $x\geq_{3/4}t$. Then the contribution of the pair of subsets $\{x,y,t\}$ and $\{y,z,t\}$ to  
$\Delta=  d^3_{KT} (zxyt ,V) - d^3_{KT} (yzxt ,V)
    $ is at most $0$. 
\end{lemma}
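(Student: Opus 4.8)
The plan is to reduce the claim to a per-vote estimate and then dominate the resulting quantity by a single linear inequality that can be summed against the three majority hypotheses. Writing $\pi_1=zxyt$ and $\pi_2=yzxt$, I would first record the winners of the two rankings on the two relevant subsets: on $\{x,y,t\}$ one has $\mathrm{top}(\pi_1)=x$ and $\mathrm{top}(\pi_2)=y$, while on $\{y,z,t\}$ one has $\mathrm{top}(\pi_1)=z$ and $\mathrm{top}(\pi_2)=y$. For a vote $v$, set $a=\mathrm{top}_{\{x,y,t\}}(v)$ and $b=\mathrm{top}_{\{y,z,t\}}(v)$. Since each indicator $\mathbbm{1}[\mathrm{top}_S(\pi_i)\neq\mathrm{top}_S(v)]$ vanishes exactly when $v$ agrees with $\pi_i$ on $S$, the contribution of $v$ to the restriction of $\Delta$ to these two subsets is $c_v=f(a)+g(b)$, where $f(x)=-1,\ f(y)=+1,\ f(t)=0$ and $g(z)=-1,\ g(y)=+1,\ g(t)=0$. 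The claim to prove is thus $\sum_{v\in V}c_v\le 0$.

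A first qualitative observation is that $c_v=+2$ precisely when $y$ is the overall winner among $x,y,z,t$ in $v$ (so $a=b=y$), that $c_v\le 0$ in every other case, and that a vote respecting all three inequalities $x>y$, $x>t$, $z>x$ forces $z>x>y,t$, hence $a=x$, $b=z$ and $c_v=-2$. This confines the positive contributions to ``rare'' votes, but the per-vote sign alone is not enough, because a single violation of $x>y$ already produces $c_v=+2$ (e.g. the vote $y>z>x>t$). The correct bookkeeping comes from solving the underlying linear program: over all weightings of the $24$ orderings of $\{x,y,z,t\}$, maximize $\sum_v c_v w_v$ subject to $\sum_{x>^v y}w_v\ge 3/4$, $\sum_{x>^v t}w_v\ge 3/4$, and $\sum_{z>^v x}w_v\ge 3/4$. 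I would exhibit an optimal dual certificate, which amounts to the single per-vote inequality
\[ c_v \;\le\; 6 - 4\,\mathbbm{1}[x>^v y] - 2\,\mathbbm{1}[x>^v t] - 2\,\mathbbm{1}[z>^v x], \]
valid for each of the $24$ relative orders of the four alternatives.

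With this inequality in hand the conclusion is immediate: summing over the $|V|$ votes and invoking the hypotheses $x\ge_{3/4}y$, $x\ge_{3/4}t$ and $z\ge_{3/4}x$ yields
\[ \sum_{v\in V} c_v \;\le\; 6|V| - 4\cdot\tfrac34|V| - 2\cdot\tfrac34|V| - 2\cdot\tfrac34|V| \;=\; 0, \]
which is exactly the asserted bound on the contribution of the pair $\{x,y,t\}$, $\{y,z,t\}$ to $\Delta$.

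The main obstacle is the discovery and verification of the coefficients $(4,2,2)$ and the constant $6$, i.e. producing the optimal dual multipliers of the linear program. The three constraints are not symmetric, since $z>x$ plays a different role from $x>y$ and $x>t$; consequently equal weights fail (the orderings $yzxt$ and $zxyt$ cannot both be accommodated), and one is forced to the asymmetric choice above. Once the inequality is guessed, its verification is a finite and routine case check over the $24$ orderings, conveniently organized by the overall winner among $x,y,z,t$, and the summation step is trivial. I therefore expect the effort to lie in setting up and certifying this inequality rather than in any analytic estimate; in particular, the explicit transitivity bound $x>y,t$ in at least $|V|/2$ votes (Lemma~\ref{l:transitive}) is subsumed by the linear-programming argument and is not separately needed.
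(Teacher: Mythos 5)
Your proof is correct, and it reaches the same conclusion as the paper by a genuinely different verification. The paper sets up exactly the linear program you describe (maximize the normalized contribution of the two subsets over the $24$ variables $c_\pi$, subject to $\sum_\pi c_\pi=1$, the three $3/4$-majority constraints, and $c_\pi\geq 0$) and then simply asserts that the optimal value is $0$, delegating the check to an LP solver via a Python script; no dual certificate appears in the paper. You instead make the argument self-contained by exhibiting the feasible dual solution, i.e.\ the per-vote inequality $c_v \leq 6 - 4\,\mathbbm{1}[x>^v y] - 2\,\mathbbm{1}[x>^v t] - 2\,\mathbbm{1}[z>^v x]$, and summing it against the hypotheses, where $4\cdot\tfrac34+2\cdot\tfrac34+2\cdot\tfrac34=6$ gives exactly $0$. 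I checked your certificate over the eight sign patterns of the three indicator events (bounding the maximum of $c_v=f(a)+g(b)$ within each pattern, which suffices since the right-hand side depends only on the pattern): it is tight when all three events hold (then $z>x>y,t$ forces $a=x$, $b=z$, so $c_v=-2=6-8$), when $x$ is the overall winner (then $c_v\leq 0=6-4-2$), when $x>y$ alone fails with $b=y$ attainable (then $c_v=2=6-2-2$), and when $x>t$ alone fails via votes such as $t>z>x>y$ (then $c_v=0=6-4-2$); all remaining patterns are slack. What your route buys is a computer-free, human-checkable proof together with the tightness information encoded in the optimal multipliers $(4,2,2)$; what the paper's route buys is uniformity, since the identical mechanical LP pipeline also handles Lemma~\ref{l:small-3-wise-3/4}, Theorem~\ref{t:3/4-for-6-candidates}, and the two LP-certified bounds inside the proof of Theorem~\ref{l:second-best}(c), where hand-crafted certificates would have to be discovered separately for each instance. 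Your closing remark is also accurate: the paper's proof of this lemma does not invoke the transitivity Lemma~\ref{l:transitive} either (it is used elsewhere, e.g.\ in Theorem~\ref{l:second-best}), so nothing is lost by subsuming it into the LP argument.
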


\begin{proof}
  Let $m$ be the number of votes. Then by the definition of $d_{KT}^3$, the maximal contribution (divided by $m$) of the pair of subsets $\{x,y,t\}$ and $\{y,z,t\}$ to $\Delta$ is the optimal objective value of following maximization problem over $24$ variables $c_\pi$ where $\pi$ denotes a strict ranking of the alternatives $x,y,z,t$:    
\begin{align*}
&\text{maximize:}  
& \sum\limits_{\pi[y] < \pi[x], \pi[t]}  c_\pi -  & \sum\limits_{\pi[x] < \pi[y], \pi[t]} c_\pi  
+ \sum\limits_{\pi[y] < \pi[z], \pi[t]} c_\pi -
\sum\limits_{\pi[z] < \pi[y], \pi[t]} c_\pi
\\
& \text{subject to:}     
\end{align*}
\begin{align*}
    \sum\limits_{\pi}  c_\pi & =1  \\ 
 \sum\limits_{\pi[z] < \pi[x]}  c_\pi & \geq \frac{3}{4}
\\
  \sum\limits_{\pi[x] < \pi[y]}  c_\pi & \geq \frac{3}{4}\\
  \sum\limits_{\pi[x] < \pi[t]} c_\pi & \geq \frac{3}{4}\\
 c_\pi &\geq 0 \quad \text{ for all }\pi.
\end{align*}
\par 
Here, $c_\pi$ represents the number of votes in $V$ with ranking $\pi$ and $\pi[x]$, $\pi[y]$, $\pi[z]$, $\pi[t] \in \{1,2,3,4\}$ represent respectively the positions of $x,y,z,t$ in  $\pi$. 
Since the optimal objective value of the above optimization problem is $0$ (a Python code is available  at https://github.com/XKPhung), the proof is complete. 
\end{proof}

\par 

 

Generalizing the usage of linear programming in the above lemma~\ref{l:LP-1}, we obtain the following proof of Lemma~\ref{l:small-3-wise-3/4} which shows that the $3$-wise $3/4$-majority rule holds for all elections with no more than $5$ alternatives. 

\begin{proof}[Proof of Lemma~\ref{l:small-3-wise-3/4}]
Let us fix a positive integer $n \leq 5$ and a set $A=\{a_1,a_2, \dots, a_n\}$ of $n$ alternatives. Let $S_n$ denotes the set of all rankings $r$ over $A$ with the convention that $r[i] \in \{1, \dots, n\}$ denotes the position of the alternative $a_i$ in the ranking $r$. 
\par 
Then observe that the $3$-wise $3/4$-majority rule fails for some election with exactly $n$ alternatives if and only if there exist 
\begin{romanenumerate}
    \item 
a voting profile $V$ over $A$ consisting of $c_r$ votes with ranking $r$ for each $r \in S_n$; 
\item 
a non-dirty alternative $a_k \in A$ with respect to $V$ and 
the $3/4$ threshold where, up to renumbering the alternatives, we can assume that $a_i\geq_{3/4} a_k$ for all $i < k$ and $a_k \geq_{3/4} a_i$ for all $i >k$;   
\item a "bad" ranking $p \in S_n$ with respect to the $3/4$-majority rule, i.e., $(i-k)(p[i] - p[k]) <0$ for some $a_i \in A$;  
\end{romanenumerate} 
such that $p$ is at least as good as a $3$-wise consensus as any "good" ranking $q \in S_n$, i.e., $(i-k)(q[i] - q[k])>0$ for all $i \neq k$. In other words, for every good ranking $q \in S_n$, we have: 
\begin{align}
     \label{e:proof-LP-3/4-majority-rule}
 \Delta(p,q) = d^3_{KT}(p, V) - d^3_{KT}(q, V) \leq 0.
\end{align} 
\par 
Let $P_k, Q_k \subset S_n$ be respectively the set of all bad rankings and the set of all good rankings. 
By homogeneity, we can furthermore suppose that $0\leq c_r \leq 1$ for each $r \in S_n$ and $\sum_{r \in S_n} c_r=1$. Consequently, the existence of the data in (i), (ii), and (iii) satisfying 
the relation \eqref{e:proof-LP-3/4-majority-rule} is equivalence to the existence of $k \in \{1,2,\dots, n\}$,  $p \in P_k$ and the satisfiability of the following system $(\star)$ of  linear constraints on the variables $\{c_r\colon r  \in S_n\}$: 
\[
\begin{cases}
 \sum_{r \in S_n}  c_r   =1,   \quad c_r  \geq 0  &\text{for all} \quad  r \in S_n, \\ 
    \sum_{r \in S_n} \alpha_{i,k, r}\, c_r   \geq 3/4 & \text{for all} \quad i \in \{1,2,\dots, n\},\\ 
    \sum_{r \in S_n} \beta_{p, q, r} \,c_r   \leq 0 & \text{for all} \quad q \in Q_k, 
\end{cases}
\]  
\par 
\noindent
where the constants $\alpha_{i,k, r}, \beta_{p, q, r}$ are defined as follows
\begin{alphaenumerate}
    \item $\alpha_{i,k, r} = 1$ if $(i-k)(r[i]-r[k])>0$ and $\alpha_{i,k, r} = 0$ otherwise; 
    \item $\beta_{p, q, r} = d^3_{KT}(p, r) - d^3_{KT} (q,r)$ . 
\end{alphaenumerate}
\par 
\noindent
It is immediate that the conditions $  \sum_{r \in S_n} \alpha_{i,k, r}\, c_r   \geq 3/4$  for all $ i \in \{1,2,\dots, n\}$ translate the property that $a_k$ is a non-dirty alternative satisfying (ii). Moreover, we have: 
\begin{align*}
\Delta(p,q) = d^3_{KT}(p, V) - d^3_{KT}(q, V) & = \sum_{r \in S_n} c_r \, d^3_{KT}(p, r) - \sum_{r \in S_n} c_r \, d^3_{KT} (q,r)\\
& = \sum_{r \in S_n} \left(  d^3_{KT}(p, r) -  d^3_{KT} (q,r)\right)  c_r \\&=  \sum_{r \in S_n} \beta_{p, q, r} \,c_r. 
\end{align*}
\par 
\noindent
Thus, the conditions $ \sum_{r \in S_n} \beta_{p, q, r} \,c_r   \leq 0 $  for all $q \in Q_k$ altogether translate the property \eqref{e:proof-LP-3/4-majority-rule} for all $q \in Q_k$. We can check by a direct implementation with  linear programming (see, e.g., a python code at https://github.com/XKPhung)  that the system $(\star)$ does not admit a solution for any value of $k \in \{1,2,\dots, n\}$ and $p \in P_k$ as long as $n \leq 5$. 
Therefore, we can conclude that the 3-wise $3/4$-majority rule holds for all elections with no more than $5$ alternatives. 
\end{proof}

\section{Some weak versions of the 3-wise 3/4-majority rule}
\label{s:weak-version-3-wise-3/4}
For elections with $6$ alternatives, we have the following weak form of the $3$-wise $3/4$-majority rule. 

\begin{theorem}
\label{t:3/4-for-6-candidates}
    Let $C$ be a set of $6$ alternatives. Suppose that in an election over $C$, we have a partition $C= A \cup \{x\} \cup B$ where $|A| \leq 2$ such that $y \geq_{3/4} x$ for all $y \in A$ and $x \geq_{3/4}z$ for all $z \in B$. Then the election satisfies the $3$-wise $3/4$-majority rule. 
\end{theorem}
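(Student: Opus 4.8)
The plan is to adapt the linear-programming feasibility argument used for Lemma~\ref{l:small-3-wise-3/4} to the case $n=6$, exploiting the extra hypothesis $|A|\le 2$. Exactly as in that proof, the $3$-wise $3/4$-majority rule fails for an election if and only if there exist a voting profile $V$, a non-dirty candidate, and a \emph{bad} ranking $p$ (one violating the head-to-head majority order at that candidate) such that $p$ is at least as good a $3$-wise consensus as every \emph{good} ranking $q$. I would first record that the hypothesis makes $x$ non-dirty with interference structure $A\cup\{x\}\cup B$; after renumbering the six alternatives so that $x=a_k$ with $a_i\ge_{3/4}a_k$ for $i<k$ and $a_k\ge_{3/4}a_i$ for $i>k$, the constraint $|A|\le 2$ becomes precisely $k\in\{1,2,3\}$, since $|A|=k-1$ and $|B|=6-k$.

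With $S_6$ the $720$ rankings and $\{c_r\}_{r\in S_6}$ the normalized vote counts, I would reuse verbatim the system $(\star)$ of Lemma~\ref{l:small-3-wise-3/4}: the normalization $\sum_r c_r=1$ with $c_r\ge 0$; the non-dirtiness inequalities $\sum_r\alpha_{i,k,r}\,c_r\ge 3/4$ for every $i$ (of which exactly $k-1\le 2$ force $a_i>x$ and the remaining $6-k$ force $x>a_i$); and the optimality inequalities $\sum_r\beta_{p,q,r}\,c_r\le 0$ for every good $q\in Q_k$, where $\beta_{p,q,r}=d^3_{KT}(p,r)-d^3_{KT}(q,r)$ is read off from the definition of $d^3_{KT}$. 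The theorem for the distinguished candidate $x$ then reduces to the purely finite assertion that $(\star)$ is \emph{infeasible} for each $k\in\{1,2,3\}$ and each bad ranking $p\in P_k$, which I would discharge by solving the finitely many linear programs by computer, exactly as for $n\le 5$. To upgrade this to the statement for the whole election rather than for $x$ alone, I would run the same infeasibility check for every other candidate $c$ that might be non-dirty, augmenting its system with the inequalities encoding that $x$ itself is non-dirty with $|A|\le 2$; the genuinely new phenomenon compared with the $n\le 5$ analysis is that these extra constraints coming from the hypothesis on $x$ are what restore infeasibility.

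The decisive point, and where I expect the main obstacle to lie, is that for $n=6$ the \emph{unrestricted} rule is already false: Lemma~\ref{l:no-3-wise-3/4-for-8-or-more-alternatives} exhibits a counterexample in which the non-dirty candidate has $|A|=3$. Hence the proof cannot appeal to a blanket infeasibility and must show that the hypothesis $|A|\le 2$ (i.e.\ $k\le 3$) is exactly what excludes such profiles. I expect the subtlety to be the impossibility of a clean local exchange argument: if one tries to repair a violation $x>y$ with $y\in A$ by transposing $x$ with the adjacent $y$, the gain from the size-$2$ subset $\{x,y\}$ is at most $-m/2$ (since $y\ge_{3/4}x$), but the size-$3$ winner-disagreement terms coming from triples $\{x,y,w\}$ with $w$ ranked below both do not telescope and can contribute up to $+m$, so they swamp the pairwise gain. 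It is exactly this failure of the triple contributions to cancel that makes the threshold delicate and forces a global LP certificate in place of a direct swap; the content of the theorem is then that the finitely many programs produced under $|A|\le 2$ all certify infeasibility, whereas the ones permitted when $|A|=3$ need not.
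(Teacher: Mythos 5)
Your proposal matches the paper's own proof, which simply invokes the same translation to a linear-programming feasibility question as in Lemma~\ref{l:small-3-wise-3/4} and discharges the finitely many systems by computer. In fact you spell out details the paper leaves implicit (the reduction of $|A|\le 2$ to $k\in\{1,2,3\}$, the need to verify the rule for every potentially non-dirty candidate with the constraints from $x$'s hypothesis adjoined, and why a local swap argument cannot replace the LP certificate), so your plan is a correct and somewhat more explicit version of the same approach.
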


\begin{proof}The proof follows immediately by using a  similar translation procedure to a problem in linear programming as described in detail in the above proof of Lemma~\ref{l:small-3-wise-3/4}. 
\end{proof}

By results in \cite{kien-sylvie-condorcet}, the $3$-wise $3/4$-majority holds for all non-dirty alternatives which win every duel by the ratio $3/4$. For non-dirty alternatives which  lose the head-to-head competition to exactly one other alternative, we have the following useful results. We drop the symbol $>$ in a ranking for simplicity. 

\begin{theorem}
\label{l:second-best}
 Let $C=\{x,z\}\cup J$ be a partition of the set of alternatives of an election $V$ such that $z\geq_{3/4}x$ and $x\geq_{3/4} y$ for all $y \in J$.  
 Then the following properties hold: 
 \begin{enumerate}[\rm (a)] 
     \item For all partitions $J=A\cup B \cup C$ where $A,B,C$ are ordered sets with $B\neq \varnothing$, we have: 
     \[ 
     d^3_{KT}(AzBxC, V) > d^3_{KT}(AzxBC, V). 
     \]
     \item For all partitions $J=  A \cup B$ where $A,B$ are ordered sets with $A \neq \varnothing$, we have: 
     \[
     d^3_{KT}(AzxB, V) > d^3_{KT}(zxAB, V).
     \]
 \item 
For all partitions $J= A \cup B \cup C$ where $A,B,C$ are ordered sets with $|B \cup C|=5$,  $|B|\leq 2$, we have: 
 \[
 d^3_{KT}(AxBzC, V) > d^3_{KT}(AzxBC, V) 
 \]
 
 \end{enumerate}
\end{theorem}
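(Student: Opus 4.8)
The plan is to exploit the additivity of $d^3_{KT}$ over subsets. Writing
\[
d^3_{KT}(R,V)=\sum_{S\in\Delta^3(C)}\sum_{v\in V}\mathbbm{1}\big[\mathrm{top}_S(R)\neq\mathrm{top}_S(v)\big],
\]
I note that when two rankings $R_1,R_2$ differ only by a local rearrangement, almost every subset $S$ satisfies $\mathrm{top}_S(R_1)=\mathrm{top}_S(R_2)$ and cancels. So in each of (a), (b), (c) the first step is to isolate the \emph{affected} subsets (those with $\mathrm{top}_S(R_1)\neq\mathrm{top}_S(R_2)$) and to write, with $R_2$ always the right-hand ranking,
\[
d^3_{KT}(R_1,V)-d^3_{KT}(R_2,V)=\sum_{S\text{ affected}}\Big(\#\{v:\mathrm{top}_S(v)=\mathrm{top}_S(R_2)\}-\#\{v:\mathrm{top}_S(v)=\mathrm{top}_S(R_1)\}\Big).
\]
For (a), moving $x$ from behind $B$ to just after $z$ flips only the order of $x$ against each $b\in B$, so the affected subsets are exactly the pairs $\{x,b\}$ and the triples $\{x,b,c\}$, $\{x,b_1,b_2\}$ ($b,b_1,b_2\in B$, $c\in C$), and on each of them the top flips \emph{to} $x$ in $R_2=AzxBC$. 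Since $b,c\in J$ we have $x\geq_{3/4}b$ and $x\geq_{3/4}c$, so Lemma~\ref{l:transitive} (with $s=3/4$) makes $x$ the winner of each such subset in at least $\tfrac12|V|$ votes, while the old top can win in at most $\tfrac14|V|$ votes; hence every affected subset contributes at least $\tfrac14|V|>0$, and $B\neq\varnothing$ guarantees there is one. This proves (a).

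For (b) the block $zx$ is moved in front of $A$, flipping each $a\in A$ against both $z$ and $x$. The affected subsets split into those whose top flips to $x$ (the pairs $\{x,a\}$ and triples $\{x,a_1,a_2\}$, $\{x,a,b\}$) and those whose top flips to $z$ (the pairs $\{z,a\}$, and triples $\{z,x,a\}$, $\{z,a_1,a_2\}$, $\{z,a,b\}$). The $x$-subsets and the mixed triple $\{z,x,a\}$ are each strictly positive by the same Lemma~\ref{l:transitive} argument as in (a). The delicate point is that $z$ dominates $A$ only weakly: from $z\geq_{3/4}x$ and $x\geq_{3/4}a$ one gets merely $\#\{v:a>z\}\le\tfrac12|V|$, so a subset such as $\{z,a_1,a_2\}$ may contribute negatively by itself.

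I would resolve this by pairing each leftover $z$-subset with its $x$-counterpart, matching $\{z,a\}$ with $\{x,a\}$, $\{z,a_1,a_2\}$ with $\{x,a_1,a_2\}$, and $\{z,a,b\}$ with $\{x,a,b\}$, a bijection that covers all the remaining affected subsets exactly once. For a pair such as $\{z,a_i,a_j\},\{x,a_i,a_j\}$ the combined contribution equals $\#\{v:z>a_i,z>a_j\}+\#\{v:x>a_i,x>a_j\}-\#\{v:a_i>z,a_i>a_j\}-\#\{v:a_i>x,a_i>a_j\}$; using $\#\{v:x>a_i,x>a_j\}\ge\tfrac12|V|$ (Lemma~\ref{l:transitive}), $\#\{v:z>a_i,z>a_j\}\ge\tfrac14|V|$ (inclusion–exclusion on $z>x>a_i$, $z>x>a_j$), together with $\#\{v:a_i>z\}\le\tfrac12|V|$ and $\#\{v:a_i>x\}\le\tfrac14|V|$, the sum is $\ge\tfrac34|V|-\tfrac34|V|=0$; the same bookkeeping handles the $\{z,a,b\}$ pairs. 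Adding these nonnegative paired sums to the strictly positive $\{x,a\}$ and $\{z,x,a\}$ contributions (here $A\neq\varnothing$ is used) gives (b).

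For (c), $z$ is moved from behind $B$ to the front of the block, flipping its order against $x$ and against every $b\in B$ while leaving its order against $A$ and against $C$ unchanged; thus subsets meeting $A$ are unaffected and the difference depends only on the orders among the $\le 7$ alternatives $\{x,z\}\cup B\cup C$. The pair $\{z,x\}$ and the triples $\{z,x,w\}$ ($w\in B\cup C$) are again strictly positive by Lemma~\ref{l:transitive}, since $z\geq_{3/4}x$ and $x\geq_{3/4}w$. The remaining affected subsets $\{z,b\}$ and $\{z,b,w\}$ contain $z$ but not $x$, so — unlike in (b) — they have no $x$-analogue to be paired with, and the crude bound $\#\{v:b>z\}\le\tfrac12|V|$ is too weak to finish (a direct count shows the negative terms could, a priori, overwhelm the positive ones). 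This is the main obstacle, and it is precisely why the hypotheses $|B|\le 2$ and $|B\cup C|=5$ are imposed: they fix the number of relevant alternatives, so the entire difference becomes a linear functional of the proportions of votes realizing each order of $\{x,z\}\cup B\cup C$, subject to the linear constraints $z\geq_{3/4}x$ and $x\geq_{3/4}w$ for all $w\in B\cup C$. As in Lemma~\ref{l:LP-1}, I would set this up as a (finite) linear program over those proportions and verify by computer that its optimum keeps the difference strictly positive, which establishes (c).
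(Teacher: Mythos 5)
Your overall strategy is sound and, for parts (a) and (c), essentially coincides with the paper's. For (a), the paper performs adjacent transpositions of $x$ past one element of $B$ at a time and concludes by induction, using Lemma~\ref{l:transitive} exactly as you do; your one-shot move with per-subset counting is the same substance. For (c), the paper also surrenders to computer-verified linear programming, but structures it differently: it uses two small LPs over rankings of \emph{four} alternatives (showing each $\{x,z,t\}$ contributes at most $-\tfrac14|V|$ and each $\{y,z,t\}$ at most $+\tfrac14|V|$ in its sign convention), then closes with a hand computation over the three cases $(|B|,|C|)\in\{(0,5),(1,4),(2,3)\}$ --- this is where the hypotheses $|B\cup C|=5$, $|B|\le 2$ enter, exactly as you diagnosed. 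Your single LP over rankings of the seven relevant alternatives per case is a legitimate (if computationally heavier) alternative; your observation that constraints involving $A$ may be dropped because affected subsets avoid $A$ is correct, since dropping constraints only enlarges the feasible region. Where you genuinely diverge from the paper is part (b): the paper moves one element $y$ of $A$ at a time past the block $zx$ and invokes an LP-based lemma (Lemma~\ref{l:LP-1}) to bound the combined contribution of the triples $\{x,y,t\}$ and $\{y,z,t\}$, whereas your pairing of each $z$-subset with its $x$-counterpart gives a purely by-hand counting proof that eliminates the LP entirely --- a nice simplification, and it buys a self-contained argument where the paper relies on machine verification.

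Two local repairs are needed in your (b), though neither is fatal. First, your parenthetical justification of $\#\{v:z>a_i,\,z>a_j\}\ge\tfrac14|V|$ by ``inclusion--exclusion on $z>x>a_i$, $z>x>a_j$'' does not work as stated: each of those events has measure at least $\tfrac12|V|$, so intersecting them yields only the trivial bound $0$, and with $0$ in place of $\tfrac14|V|$ your paired sum is only bounded below by $-\tfrac14|V|$, which does not suffice. The claimed bound is nonetheless true: intersect $\{z>x\}$ (measure $\ge\tfrac34|V|$) with $\{x>a_i \wedge x>a_j\}$ (measure $\ge\tfrac12|V|$ by Lemma~\ref{l:transitive}), on which $z>a_i,a_j$; this gives $\ge\tfrac14|V|$. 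Second, your final sentence double-counts: the pairs $\{x,a\}$ are already consumed inside your bijective pairing with $\{z,a\}$, so they cannot be added again as separate strictly positive terms. Strictness should instead rest on the triples $\{z,x,a\}$ (each contributing at least $\tfrac14|V|$, since $\#\{v: z>x,\,z>a\}\ge\tfrac12|V|$ while $a$ tops $\{z,x,a\}$ in at most $\tfrac14|V|$ votes), or equivalently on the pair-combination $\{z,a\}$ plus $\{x,a\}$, which is at least $\tfrac12|V|$; either way $A\neq\varnothing$ delivers a strictly positive term, and your argument closes.
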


\begin{proof} 
    For (a), let us consider the rankings $ AzByxC$ and $ AzBxyC$ where $J=A\cup B \cup C$ is a partition with $A,B,C$ ordered sets. Then the only subsets that can contribute to 
    \[
     \Delta_a = d^3_{KT}(AzBxyC, V) - d^3_{KT}(AzByxC, V) 
    \]
    are $\{x,y\}$ and $\{x,y,t\}$ where $t \in C$. Since $x\geq_{3/4}y$ and $x \geq_{3/4}t$, we have $x>y, t$ in at least half of the votes (Lemma~\ref{l:transitive}). It follows that the contribution of every subset $\{x,y,t\}$ to $\Delta_a$ is at most $0$. Likewise, the contribution of the subset $\{x,y\}$ to $\Delta_a$ is at most $\frac{1}{4}|V| - \frac{3}{4}|V|=-\frac{1}{2}|V|<0$. Therefore, $\Delta_a <0$ and (a) follows by an immediate induction. 
    \par 
    For (b), let $y \in J$ and let us consider two rankings $AyzxB$ and $AzxyB$ where $J\setminus \{y\}=A\cup B  $ is a partition with $A,B$ ordered sets. Then the only subsets that can contribute to 
    \[
     \Delta_b = d^3_{KT}(AyzxB, V) - d^3_{KT}(AzxyB, V) 
    \]
    are $\{x,y\}$, $\{y,z\}$, and $\{x,y,t\}$, $\{y,z,t\}$ for all $t \in B$. Since $z\geq_{3/4}x$ and $x\geq_{3/4} y$, we have $z\geq_{1/2}y$ (Lemma~\ref{l:transitive}). Hence, the pair $(x,z)$ contributes at most $0$ to $\Delta_b$ and the pair $(x,y)$ contributes at most $\frac{1}{4}|V| - \frac{3}{4}|V|=-\frac{1}{2}|V|<0$ to $\Delta$. 
    \par 
     Lemma~\ref{l:LP-1} shows that for every $t \in B$, the pair of subsets $\{x,y,t\}$ and $\{y,z,t\}$ contributes at most $0$ to $\Delta_b$. To summarize, we have proved  that $\Delta_b <0$. It is then clear that (b) follows by an immediate induction. 
    \par 
    For (c), the only subsets that can contribute to 
    \[
     \Delta_c = d^3_{KT}(AzxBC, V) - d^3_{KT}(AxBzC, V) 
    \]
    are $\{x,z\}$, $\{y,z\}$, $\{x,z,t\}$, and $\{y,z,t\}$ for all  $y \in B$ and $t \in B\cup C$. Since $z \geq_{3/4}x$, the contribution of the pair $(x,z)$ to $\Delta_c$ is at most 
    \[
    \frac{1}{4}|V| - \frac{3}{4}|V|=-\frac{1}{2}|V|. 
    \]
    \par 
    Since $z \geq_{3/4}x$ and $x \geq_{3/4}y$ for all $y \in B$, we have $z\geq_{1/2}y$ by Lemma~\ref{l:transitive} and thus the contribution of all such pairs $(y,z)$ is at most $0$. 
    \par 
    Let $t \in B \cup C$. Under the assumption that $x\geq_{3/4} t$ and $z\geq_{3/4}x$, linear programming (see, e.g., https://github.com/XKPhung) shows that the contribution of the subset $\{x,z,t\}$ to $\Delta_c$ is at most  $-\frac{1}{4}|V|$. Hence, the total contribution of such subsets $\{x,z,t\}$ ($t \in B \cup C$) to $\Delta_c$ is at most 
    \[
   - \frac{1}{4}|V|(|B|+|C|). 
    \]
    \par 
    Similarly, under the assumption that $x\geq_{3/4}y, t$ and $z\geq_{3/4}x$, linear programming (a Python code is available at https://github.com/XKPhung) shows that for all $y \in B$ and $t \in B \cup C$, the contribution of the subset $\{y,z,t\}$  to $\Delta_c$ is at most $\frac{1}{4}|V|$. Hence, the total contribution of such subsets is at most 
    \[
    \frac{1}{4}|V|\left(\frac{|B|(|B|-1)}{2} + |B||C|\right). 
    \]
    \par 
    To summarize, we deduce the following estimation: 
    \begin{align*}
        \frac{\Delta_c}{|V|} \leq -\frac{1}{2} - \frac{1}{4}(|B|+|C|) +  \frac{1}{4}\left(\frac{|B|(|B|-1)}{2} + |B||C|\right).
    \end{align*}
\par 
Since $|B|+|C|=5$ and $|B|\leq 2$, by substituting all possible values of $(|B|,|C|)$, we deduce that $\Delta_c<0$ and the proof of (c) is complete. 
\end{proof}

 \section{$3$-wise Major Order Theorem and applications}

\subsection{Major Order Theorems fail for the $3$-wise Kemeny voting scheme} 
The example below shows that all the Major Order Theorems established in \cite{hamel-space} fail for the $3$-wise Kemeny  voting scheme. The discussion following Example~\ref{ex:major-order-theorem-fail-3-wise} also explains why the $3$-wise Kemeny voting scheme is more suitable than the Kemeny rule to avoid manipulations.

\begin{example}
    \label{ex:major-order-theorem-fail-3-wise}
 Let us consider the example taken from \cite{kien-sylvie-condorcet}. 
Let $C= \{x,y,z, t\}$ be a set of 4 alternatives and consider the voting profile $V$:  
\begin{align*}
&r_1: z>t>x>y \,\,\text{ (5 votes)},\quad  & r_2: y>t>x>z \,\,\text{ (2 votes)}, \\
&r_3: x>y>t>z \,\,\text{ (2 votes)},\quad  & r_4: t>x>y>z \,\,\text{ (2 votes)}.
\end{align*}
\par 
\noindent 
Then  $r_1$ is the unique $3$-wise median $r_1$ of the election. 
Note that the Major Order Theorems in \cite{hamel-space} do not hold for the pair of alternatives $(z,t)$ in the above election with respect to the $3$-wise Kemeny voting scheme. Indeed, $t>z$ in a majority of the votes and the multiset consisting of alternatives $c$ with $z>c>t$ in a vote is empty so that the conditions in the Major Order Theorems are satisfied. However,  we have $z>t$ in the unique $3$-wise median $r_1$ of the election. 
\end{example} 

\subsection{$3$-wise Major Order Theorem}

\begin{definition}
\label{def:number-votes-given-order}
Let $V$ be a voting profile. Let $x_1, \dots, x_k$ be some alternatives of the election. Then we denote by $n_{x_1\dots x_k}(V)$ or simply  $n_{x_1\dots x_k}$ the number of votes in $V$ in which the relative orders of $x_1, \dots, x_k$ is 
$x_1>x_2>\dots > x_k$. 
For two alternatives $x,y$, we also define $\delta_{xy} =n_{xy} -n_{yx}$. 
If $\delta_{xy}>0$, we say that the \emph{major order} and the \emph{minor order} of the pair $(x,y)$ are respectively $x>y$ and $x<y$.  
\end{definition}

\begin{definition}
For alternatives $x,y,z,t$ of an election with voting profile $V$, we define 
\begin{align*}
    P_{x,y,z} &   = 3n_{xzy} + n_{xyz}+n_{zxy}+n_{zyx} - 4n_{yzx}  - 2n_{yxz}  \\
    Q_{x,y,z} & = n_{xyz} + n_{xzy} -n_{yxz}  -n_{yzx} = - Q_{y,x,z}
    \\
    R_{y,x,z,t} &  = 2 n_{yztx} + 2n_{yzxt} +  n_{ytzx}  +  n_{ytxz}\\
    S_{y,x,z,t} & = R_{y,x,z,t} - R_{x,y,z,t}. 
\end{align*}
\end{definition}
\par
The underlying idea is that 
the quantity $S_{y,x,z,t}$ captures the preference of the alternative $y$ over the alternative $x$ in the presence of the alternatives $z$, $t$.  On the other hand, $P_{x,y,z}$ and $Q_{x,y,z}=-Q_{y,x,z}$ are certain measures of the preference of $x$ over $y$ when $z$ comes into play. Note that $S_{y,x,z,t}= -(Q_{z,y,t}+Q_{x,z,t})$ (see Lemma~\ref{appendix-3-wise-major-order-theorem-relation-s-q} below). 
Hence, the intuition is that if the preference for $x$ over $y$ in the election is strong enough, not only in the head-to-head competition but even when taking into consideration one or two more alternatives, then 
 $P_{x,y,z}$ and $Q_{x,y,z}$ tend to be positive while $S_{y,x,z,t}$ should be negative. 
 To quantify the above intuition and the interplay between the quantities  $Q_{x,y,z}$, $P_{x,y,z}$, and $S_{y,x,z,t}$, we establish the following $3$-wise Major Order Theorem in the vein of the various Major Order Theorems \cite{hamel-space} for the classical Kemeny rule.
 
 \begin{theorem}[\textbf{3MOT}]
     \label{3-wise-major-order}  
     Let $x,y$ be two alternatives of an election with voting profile $V$ over a set of alternatives $C$.  Suppose that $x>y$ is the major order and 
     \begin{enumerate} [\rm (i)] 
     \item $Q_{x,y,z}\geq 0$ for all $z \in C\setminus \{x,y\}$, 
\item 
$2 \delta_{xy}   > \sum_{z\neq x,y} \max\left(0, - P_{x,y,z} + \sum_{t \neq x,y,z} \max\left(0, S_{y,x,z,t} \right) \right)$. 
     \end{enumerate}
     \par 
     Then $x$ is ranked before $y$ in every $3$-wise median of the election.  
 \end{theorem}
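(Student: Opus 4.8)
The plan is to argue by contradiction via a local exchange (swap) argument in the spirit of the classical MOT (Theorem~\ref{2-wise-major-order}). Suppose there is a $3$-wise median $\pi^*$ of the election in which $y$ is ranked before $x$ (the minor order). I would produce a competing ranking $\sigma$ by extracting $x$ from its position and reinserting it immediately before $y$, leaving all other alternatives fixed, and then show that $\Delta := d^3_{KT}(\sigma, V) - d^3_{KT}(\pi^*, V) < 0$, contradicting the optimality of $\pi^*$.

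The first step is to identify which subsets $S \in \Delta^3(C)$ can change their top element under this move. Since the only relative orders affected are those between $x$ and the alternatives it overtakes, namely $y$ together with the set $Z^-$ of alternatives lying strictly between $y$ and $x$ in $\pi^*$, a subset contributes to $\Delta$ only if it contains $x$ and at least one element of $\{y\} \cup Z^-$. This splits the contributions into three groups: the pair $\{x,y\}$, contributing $-\delta_{xy}$; the triples $\{x,y,z\}$, whose top flips from $y$ to $x$ precisely when $z$ is ranked below $y$, each contributing $-Q_{x,y,z}$; and the \emph{interference} subsets, i.e.\ the pairs $\{x,z\}$ and the triples $\{x,z,t\}$ together with their companions $\{y,z,t\}$, where $z \in Z^-$. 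Condition (i), $Q_{x,y,z}\ge 0$, guarantees that the triples of the second group are favorable (contribute at most $0$), so they may be safely discarded from an upper bound on $\Delta$; combined with the pair $\{x,y\}$ they produce the gain term governed by $\delta_{xy}$ on the left-hand side of (ii).

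The heart of the proof is to bound the interference group uniformly, since I do not know a priori which alternatives belong to $Z^-$. For each candidate intermediate $z$, I would collect the contributions of $\{x,z\}$ together with the triples $\{x,z,t\}$ and $\{y,z,t\}$ and show, exactly in the manner of Lemma~\ref{l:LP-1}, that their combined contribution is at most $-P_{x,y,z} + \sum_{t\ne x,y,z}\max\left(0,\, S_{y,x,z,t}\right)$; here the identity $S_{y,x,z,t} = -(Q_{z,y,t}+Q_{x,z,t})$ from Lemma~\ref{appendix-3-wise-major-order-theorem-relation-s-q} is what lets me rewrite the four-element interactions as sums of triple winner-discrepancies and then bound each by its positive part. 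Taking $\max(0,\cdot)$ over $z$ accounts for the fact that a given $z$ may or may not actually lie in $Z^-$: if its net contribution is nonpositive it is harmless, and if positive it is charged at most once. Summing over all $z \ne x,y$ and invoking condition (ii) then forces $\Delta < 0$, the desired contradiction.

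The main obstacle is the combinatorial bookkeeping in this last step: verifying that the precise coefficients packaged into $P_{x,y,z}$, $Q_{x,y,z}$ and $S_{y,x,z,t}$ (including the factor $2$ multiplying $\delta_{xy}$, which I expect to emerge from normalizing the fractional per-subset contributions that appear as the $-\tfrac12|V|$ and $\tfrac14|V|$ estimates in the proof of Theorem~\ref{l:second-best}) genuinely dominate the worst-case contribution of every interference subset, independently of the unknown relative positions of $z$, $t$ and the boundary alternatives $x,y$ in $\pi^*$. As in Lemma~\ref{l:LP-1} and in the proof of Lemma~\ref{l:small-3-wise-3/4}, I anticipate that the cleanest way to certify these local inequalities is to cast the maximal contribution of each small family of subsets as a linear program over the distribution of vote-orderings restricted to $\{x,y,z,t\}$ and to check that its optimum matches the claimed bound.
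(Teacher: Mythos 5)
There is a genuine gap: your plan uses a \emph{single} exchange (extract $x$ and reinsert it immediately before $y$) and tries to show that this one move strictly improves the ranking, but the hypotheses (i) and (ii) are not strong enough to control that single move. The paper's proof instead compares the putative median $\pi\colon L>y>Z>x>R$ \emph{simultaneously} with two rankings, $\sigma_1^*\colon L>Z>x>y>R$ (moving $y$ down) and $\sigma_2^*\colon L>x>y>Z>R$ (your move, moving $x$ up), and proves via Lemma~\ref{appendix-3-wise-major-order-theorem} the exact identity
\begin{align*}
\Delta_1+\Delta_2 \;=\; 2\sum_{t\in R}Q_{x,y,t} \;+\; 2\delta_{xy} \;+\; \sum_{z\in Z}P_{x,y,z} \;-\; \sum_{z\in Z,\ t\in Z\cup R,\ z>^{\pi}t}S_{y,x,z,t},
\end{align*}
so that (i) and (ii) force $\Delta_1+\Delta_2>0$ and hence at least \emph{one} of the two moves is strictly improving. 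The quantities you invoke are intrinsically tied to this pairing: by Lemma~\ref{appendix-3-wise-major-order-theorem-relation-p-q-delta}, $P_{x,y,z}=\delta_{xz}+\delta_{zy}+Q_{x,y,z}+Q_{z,y,x}$, and by Lemma~\ref{appendix-3-wise-major-order-theorem-relation-s-q}, $-S_{y,x,z,t}=Q_{z,y,t}+Q_{x,z,t}$; the terms $\delta_{zy}$, $Q_{z,y,x}$, and $Q_{z,y,t}$ record top-element flips caused by moving $y$, not $x$. Under your single move of $x$, every subset not containing $x$ --- in particular each ``companion'' $\{y,z,t\}$ you place in the interference group --- keeps the same top element and contributes nothing, so the per-$z$ bound $-P_{x,y,z}+\sum_{t}\max\bigl(0,S_{y,x,z,t}\bigr)$ is not derivable for your $\Delta$. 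Worse, the single move can genuinely fail: if some $z\in Z$ heavily beats $x$ in duels ($\delta_{xz}\ll 0$) while also heavily beating $y$ ($\delta_{zy}\gg 0$), then $P_{x,y,z}$ can remain large (so (ii) holds) although reinserting $x$ above $z$ is strictly costly; only the alternative move of $y$ saves the argument, which is exactly why the averaged two-swap comparison is needed.

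Two smaller corrections. First, the factor $2$ multiplying $\delta_{xy}$ does not come from normalizing fractional per-subset estimates as in Theorem~\ref{l:second-best}; it appears simply because both $\Delta_1$ and $\Delta_2$ contain the term $\delta_{xy}$ once each. Second, no linear programming is needed here (unlike Lemma~\ref{l:LP-1} and Lemma~\ref{l:small-3-wise-3/4}): the proof of Theorem~\ref{3-wise-major-order} is an exact combinatorial computation resting on the decomposition identities of Lemma~\ref{l:appendix-observation-decomposition}, so the ``LP certification'' step you anticipate is superfluous once the two-ranking identity is in hand.
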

The following observation is clear. 

\begin{lemma}
 The theoretical time complexity of the corresponding algorithm of 3MOT to determine the relative order of pairs of alternatives is $\mathcal{O}(n^4m)$ where $n$ is the number of alternatives and $m$ is the number of votes. \qed \end{lemma}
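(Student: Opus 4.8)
The plan is to bound separately the cost of processing a single pair $(x,y)$ and then multiply by the number of pairs. The crux is the observation that every quantity appearing in conditions~(i) and~(ii)---namely $\delta_{xy}$, $Q_{x,y,z}$, $P_{x,y,z}$ and $S_{y,x,z,t}$---is a fixed integer linear combination of order-counts $n_{a_1\cdots a_k}$ with $k\le 4$, and that each such count is computable in $\mathcal{O}(m)$ time once we know, for every vote, the position of each alternative.

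First I would preprocess the profile: scanning the $m$ votes once and recording, for each vote, the rank of every alternative costs $\mathcal{O}(nm)$ and lets us decide the relative order of any fixed tuple of alternatives inside a given vote in $\mathcal{O}(1)$ per pairwise comparison. With this table in hand, any single count $n_{a_1\cdots a_k}$ with $k\le 4$ is obtained by sweeping the $m$ votes and testing the chain $a_1>\cdots>a_k$ in each, i.e.\ in $\mathcal{O}(m)$ time; since each of $\delta_{xy}$, $Q_{x,y,z}$, $P_{x,y,z}$, $S_{y,x,z,t}$ is a sum of boundedly many such counts, every one of these statistics is evaluated in $\mathcal{O}(m)$.

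Next I would fix a pair $(x,y)$ and account for the cost of verifying conditions~(i) and~(ii). Condition~(i) requires testing $Q_{x,y,z}\ge 0$ for the $\mathcal{O}(n)$ choices of $z$, hence $\mathcal{O}(nm)$. Condition~(ii) is governed by the doubly nested sum $\sum_{z\neq x,y}\max\bigl(0,-P_{x,y,z}+\sum_{t\neq x,y,z}\max(0,S_{y,x,z,t})\bigr)$, whose evaluation touches one $P_{x,y,z}$ per $z$ and one $S_{y,x,z,t}$ per pair $(z,t)$; as there are $\mathcal{O}(n)$ values of $z$ and $\mathcal{O}(n)$ values of $t$, the sum comprises $\mathcal{O}(n^2)$ statistic-evaluations, each costing $\mathcal{O}(m)$. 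Thus a single pair is handled in $\mathcal{O}(n^2 m)$ time.

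Finally, summing over the $\binom{n}{2}=\mathcal{O}(n^2)$ pairs $(x,y)$ yields a total of $\mathcal{O}(n^2)\cdot\mathcal{O}(n^2 m)=\mathcal{O}(n^4 m)$, which, together with the negligible $\mathcal{O}(nm)$ preprocessing, gives the claimed bound. There is no genuine obstacle here; the only point demanding care is the bookkeeping of condition~(ii)---confirming that the nested $\max$-sums contribute exactly $\mathcal{O}(n^2)$ statistic-evaluations per pair rather than more, and that no statistic is recomputed more often than this accounting assumes. Alternatively, one may precompute all $\mathcal{O}(n^4)$ order-counts $n_{a_1\cdots a_4}$ in a single $\mathcal{O}(n^4 m)$ sweep and then answer every statistic in $\mathcal{O}(1)$, reducing the per-pair cost to $\mathcal{O}(n^2)$ and leaving the overall bound $\mathcal{O}(n^4 m)$ unchanged.
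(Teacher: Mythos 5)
Your proposal is correct, and it matches the paper's (implicit) reasoning: the paper states this lemma as a clear observation without writing out a proof, and your counting argument --- $\mathcal{O}(n^2)$ pairs, each requiring $\mathcal{O}(n^2)$ evaluations of statistics that are bounded integer combinations of order-counts $n_{a_1\cdots a_k}$ ($k\le 4$), each computable in $\mathcal{O}(m)$ after an $\mathcal{O}(nm)$ position-table preprocessing --- is exactly the routine bookkeeping the authors leave to the reader. Your alternative of precomputing all $\mathcal{O}(n^4)$ order-counts in one $\mathcal{O}(n^4m)$ sweep is a valid refinement that does not change the stated bound.
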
 
 \par 
 For the proof of 3MOT,  we need to establish a  series of auxiliary results.  The first lemma is the following simple but useful observation. 

\begin{lemma}
    \label{l:appendix-observation-decomposition} 
     Given distinct alternatives $x,y,z,t$ in an election, the following relations hold: 
     \begin{align*}
         n_{xy} & = n_{zxy} + n_{xzy} + n_{xyz}, \\
         n_{xyz} & = n_{txyz}+n_{xtyz}+n_{xytz}+n_{xyzt}, 
     \end{align*}
     where the quantities $n_{xy}$, $n_{xyz}$, $n_{xyzt}$, etc. are defined in Definition~\ref{def:number-votes-given-order}. 
\end{lemma}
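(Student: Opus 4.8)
The plan is to prove each identity by a direct counting argument, partitioning the set of votes enumerated on the left-hand side according to the position of the extra alternative introduced on the right.

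For the first identity, I would start from the collection of votes contributing to $n_{xy}$, namely those votes $v \in V$ in which $x >^v y$. Since each vote is a strict and complete total order and $z$ is distinct from both $x$ and $y$, in every such vote the alternative $z$ occupies exactly one of three mutually exclusive positions relative to the chain $x > y$: either $z$ precedes $x$ (giving $z > x > y$), or $z$ lies strictly between $x$ and $y$ (giving $x > z > y$), or $z$ follows $y$ (giving $x > y > z$). These three cases are pairwise disjoint and jointly exhaustive, so the votes counted by $n_{xy}$ split as the disjoint union of those counted by $n_{zxy}$, $n_{xzy}$, and $n_{xyz}$. Summing the cardinalities yields $n_{xy} = n_{zxy} + n_{xzy} + n_{xyz}$.

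For the second identity, I would repeat the argument starting from the votes counted by $n_{xyz}$, i.e.\ those with $x > y > z$. A fourth distinct alternative $t$ must fall into exactly one of the four gaps determined by the chain $x > y > z$: before $x$, between $x$ and $y$, between $y$ and $z$, or after $z$. These four positions correspond precisely to $n_{txyz}$, $n_{xtyz}$, $n_{xytz}$, and $n_{xyzt}$, and they partition the votes counted by $n_{xyz}$, giving the claimed equality.

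There is no real obstacle here: the only point to check is that in each case the listed possibilities are genuinely exhaustive and pairwise disjoint, which is immediate from the fact that every vote is a strict total order, so each additional alternative has a unique, well-defined position relative to a fixed chain. The same insertion principle extends verbatim to any number of alternatives, but only these two instances will be needed in the subsequent proofs.
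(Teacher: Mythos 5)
Your proof is correct and follows essentially the same route as the paper: both partition the votes counted on the left-hand side according to the (unique) relative position of the extra alternative with respect to the given chain, observing that the cases are mutually exclusive and exhaustive because each vote is a strict total order. Nothing is missing.
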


\begin{proof}
    For the relation $n_{xy} = n_{zxy} + n_{xzy} + n_{xyz}$, it suffices to observe that for each vote where $x>y$, we have exactly three mutually exclusive possibilities according to three possible relative positions of $z$ with respect to $x$ and $y$, namely, $z>x>y$ or $x>z>y$ or $x>y>z$. A similar argument shows that $n_{xyz}  = n_{txyz}+n_{xtyz}+n_{xytz}+n_{xyzt}$. 
\end{proof}

The next lemma describes the relation between the quantities $S_{y,x,z,t}$ and $Q_{z,y,t}$, $Q_{x,z,t}$. 

\begin{lemma}
 \label{appendix-3-wise-major-order-theorem-relation-s-q} 
 For all distinct alternatives $x,y,z,t$ in an election, we have \[
-  S_{y,x,z,t}=Q_{z,y,t}+Q_{x,z,t}.
\]
\end{lemma}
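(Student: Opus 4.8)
The plan is to use the fact that every term in the claimed identity is a $\Z$-linear functional of the voting profile: each of $S_{y,x,z,t}$, $Q_{z,y,t}$ and $Q_{x,z,t}$ depends only on the restriction of the votes to $\{x,y,z,t\}$, and is an integer combination of the counts $n_\pi$, where $\pi$ ranges over the $24$ total orders of $\{x,y,z,t\}$. For the triple-based terms this requires one preliminary step: I would rewrite each triple-count occurring in $Q_{z,y,t}$ and $Q_{x,z,t}$ (for instance $n_{zyt}$ or $n_{xzt}$) as the corresponding sum of four quadruple-counts by inserting the missing alternative in each of its four possible positions, exactly as in Lemma~\ref{l:appendix-observation-decomposition}. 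After this both sides of $-S_{y,x,z,t}=Q_{z,y,t}+Q_{x,z,t}$ live in the common basis $\{n_\pi\}$, so it suffices to verify the identity on a single vote, i.e.\ to check that the coefficient of each $n_\pi$ agrees on the two sides.

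The bulk of the argument is then a case analysis organized by the overall winner of the four alternatives in the vote. First I would record the support of $R$: unwinding its definition, $R_{x,y,z,t}$ is nonzero only when $x$ is the top element among all four, with weight $2,1,0$ according to whether $z$, $t$, or $y$ is the runner-up (the top of $\{y,z,t\}$), and symmetrically $R_{y,x,z,t}$ is supported on votes with $y$ on top, weighted by the top of $\{x,z,t\}$. Matching against the right-hand side: when $x$ wins it is the top of $\{x,z,t\}$, so $Q_{x,z,t}$ contributes $+1$, while $Q_{z,y,t}\in\{+1,0,-1\}$ according to whether the runner-up is $z$, $t$, or $y$; their sum is precisely the weight $2,1,0$ produced by $R_{x,y,z,t}$. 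The case in which $y$ wins is symmetric, $Q_{z,y,t}$ now contributing $-1$, and reproduces $-R_{y,x,z,t}$. When the winner is $z$, it is the top of both $\{y,z,t\}$ and $\{x,z,t\}$, hence contributes with opposite signs to the two $Q$ terms, which cancel; when the winner is $t$, both $Q$ terms vanish outright. In each case both sides agree, and $-S_{y,x,z,t}=R_{x,y,z,t}-R_{y,x,z,t}$ likewise vanishes whenever neither $x$ nor $y$ wins, which completes the verification.

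I expect the only real obstacle to be bookkeeping rather than conceptual difficulty: one must be careful that the three integer weights $2,1,0$ built into the coefficients of $R$ coincide exactly with the values in $\{0,1,2\}$ attained by the signed sum $Q_{z,y,t}+Q_{x,z,t}$, and one must correctly handle the cases where the winner lies outside $\{x,y\}$, where the matching rests on the cancellation (for winner $z$) or simultaneous vanishing (for winner $t$) of the two $Q$ terms. Since the check is finite and purely arithmetic, a single pass over one representative of each winner class, or equivalently a direct comparison of coefficients in the basis $\{n_\pi\}$, suffices to establish the identity.
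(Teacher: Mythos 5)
Your proof is correct, but it is organized differently from the paper's. The paper proves Lemma~\ref{appendix-3-wise-major-order-theorem-relation-s-q} by brute-force algebra: it expands each triple-count in $Q_{z,y,t}+Q_{x,z,t}$ into quadruple-counts via Lemma~\ref{l:appendix-observation-decomposition} and then cancels terms line by line until the expression collapses to $R_{x,y,z,t}-R_{y,x,z,t}=-S_{y,x,z,t}$. You instead reduce by linearity to checking the identity on each of the $24$ single-vote orders of $\{x,y,z,t\}$ and carry out a case analysis organized by the winner of the four alternatives, using the semantic reading of $Q_{a,b,c}$ as the per-vote difference of top-of-triple indicators for $a$ and $b$, and of $R_{x,y,z,t}$ as the weighted indicator ``$x$ wins, with weight $2,1,0$ according to the runner-up being $z$, $t$, or $y$'' (your reading of the supports and weights of $R$ matches the definition exactly, as do all four winner cases, including the cancellation of the two $Q$ terms when $z$ wins and their simultaneous vanishing when $t$ wins). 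Both arguments are finite linear verifications in the basis of permutation counts, so neither is more general, but yours buys an explanation of the structure that the paper's computation leaves opaque: it shows \emph{why} the weights $2,1,0$ occur (they are precisely the attainable values of $Q_{z,y,t}+Q_{x,z,t}$ conditioned on the winner) and that $S_{y,x,z,t}$ is supported on votes won by $x$ or $y$; it is also less error-prone than tracking the paper's ten lines of signed quadruple-counts. Conversely, the paper's expansion is purely mechanical in its own notation and requires no auxiliary interpretation. One small remark: once you adopt the per-vote semantic reading, the preliminary rewriting of triple-counts into quadruple-counts is actually unnecessary, since the top-of-triple interpretation of $Q$ already evaluates each side on a single vote directly.
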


\begin{proof}
By direct computation, we infer from Lemma~\ref{l:appendix-observation-decomposition} and the definition of $Q_{z,y,t}$, $Q_{x,z,t}$, $R_{x,y,z,t}$,  $R_{y,x,z,t}$, and $S_{y,x,z,t}$ that: 

    \begin{align*}
        & Q_{z,y,t}+Q_{x,z,t}  \nonumber\\=  \,\,  &n_{zyt} + n_{zty} -n_{yzt}  -n_{ytz} + n_{xzt} + n_{xtz} -n_{zxt}  -n_{ztx}
  \nonumber\\ = \,\,  & n_{xzyt}+ n_{zxyt}+ n_{zyxt}+ n_{zytx} 
  \nonumber\\    
 + \,\, & n_{xzty} + n_{zxty}+ n_{ztxy}+ n_{ztyx}
 \nonumber\\ 
- \,\, & n_{xyzt}  -n_{yxzt}  -n_{yzxt}  -n_{yztx}  
 \nonumber\\  
-\,\, & n_{xytz} -n_{yxtz} -n_{ytxz} -n_{ytzx} 
 \nonumber\\ 
+ \,\, &n_{yxzt} + n_{xyzt} + n_{xzyt} + n_{xzty} 
 \nonumber\\
+ \,\, &n_{yxtz} + n_{xytz} + n_{xtyz} + n_{xtzy} 
 \nonumber\\ 
- \,\, & n_{yzxt} -n_{zyxt} -n_{zxyt} -n_{zxty} 
 \nonumber\\  - \,\, & n_{yztx}-n_{zytx}-n_{ztyx}-n_{ztxy}
  \nonumber \\ = \,\, 
 &2n_{xzyt}   +  2 n_{xzty}   -2n_{yzxt}-2n_{yztx}
   -n_{ytxz}-n_{ytzx}
   + n_{xtyz}+ n_{xtzy}
 \nonumber  \\ = \,\,   & R_{x,y,z,t} - R_{y,x,z,t}  \nonumber\\= \,\,     & -  S_{y,x,z,t}. 
    \end{align*}
    \par 
    Therefore, $ Q_{z,y,t}+Q_{x,z,t}=-S_{y,x,z,t}$ and the proof is complete. 
\end{proof}
\par 
The next lemma allows us to express the quantity $P_{x,y,z}$ as a sum of $Q_{z,y,x}$, $Q_{x,y,z}$, and  $\delta_{xz}$, $\delta_{zy}$. 
\begin{lemma}
    \label{appendix-3-wise-major-order-theorem-relation-p-q-delta}
    For all distinct alternatives $x,y,z$ in an election, we have: 
    \[
    P_{x,y,z} = \delta_{xz} +\delta_{zy} + Q_{x,y,z} + Q_{z,y,x}. 
    \]
\end{lemma}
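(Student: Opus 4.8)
The plan is to prove the identity by direct computation, reducing both sides to signed sums over the six orderings of the three alternatives $x,y,z$, so that no genuine insight beyond the additive decomposition of vote counts is required.

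First I would use the decomposition Lemma~\ref{l:appendix-observation-decomposition} to expand the two differences $\delta_{xz}$ and $\delta_{zy}$. Inserting the third alternative ($y$ for the pair $(x,z)$, and $x$ for the pair $(z,y)$) in all three possible relative positions, one writes
\[
n_{xz} = n_{yxz} + n_{xyz} + n_{xzy}, \qquad n_{zx} = n_{yzx} + n_{zyx} + n_{zxy},
\]
and likewise
\[
n_{zy} = n_{xzy} + n_{zxy} + n_{zyx}, \qquad n_{yz} = n_{xyz} + n_{yxz} + n_{yzx}.
\]
Subtracting expresses each of $\delta_{xz}$ and $\delta_{zy}$ as a signed sum of the six quantities $n_\sigma$ ranging over the permutations $\sigma$ of $\{x,y,z\}$. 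Adding the two, the contributions of $n_{yxz}$, $n_{xyz}$, $n_{zyx}$, $n_{zxy}$ cancel pairwise and one is left with $\delta_{xz} + \delta_{zy} = 2 n_{xzy} - 2 n_{yzx}$.

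Next I would expand the two $Q$-terms straight from the definition, namely $Q_{x,y,z} = n_{xyz} + n_{xzy} - n_{yxz} - n_{yzx}$ and $Q_{z,y,x} = n_{zyx} + n_{zxy} - n_{yzx} - n_{yxz}$, whose sum is $n_{xyz} + n_{xzy} + n_{zyx} + n_{zxy} - 2 n_{yxz} - 2 n_{yzx}$.

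Finally I would add the two contributions and compare, coefficient by coefficient, the resulting signed sum with $P_{x,y,z} = 3 n_{xzy} + n_{xyz} + n_{zxy} + n_{zyx} - 4 n_{yzx} - 2 n_{yxz}$; the coefficients of all six orderings agree, which yields the identity. The only real obstacle is careful sign bookkeeping across the six terms — there is no conceptual difficulty here, exactly as in the analogous Lemma~\ref{appendix-3-wise-major-order-theorem-relation-s-q}, and the argument parallels that computation.
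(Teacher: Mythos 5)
Your proposal is correct and matches the paper's own proof essentially verbatim: both expand $\delta_{xz}+\delta_{zy}$ via the decomposition Lemma~\ref{l:appendix-observation-decomposition} to obtain $2(n_{xzy}-n_{yzx})$, expand $Q_{x,y,z}+Q_{z,y,x}$ from the definition, and match the resulting signed sum term by term against the definition of $P_{x,y,z}$. All intermediate identities you state check out, so there is nothing to add.
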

\begin{proof}
For strict rankings, we deduce from Lemma~\ref{l:appendix-observation-decomposition} that: 
    \begin{align*}
 &  \delta_{xz} +\delta_{zy} = n_{zy}+n_{xz}  - n_{yz}- n_{zx} 
 \\ 
 = \,\, &  n_{xzy} +n_{zxy}+ n_{zyx} + n_{yxz} + n_{xyz} +n_{xzy} - n_{xyz} -n_{yxz}-n_{yzx}- n_{yzx}-n_{zyx}-n_{zxy}
 \\  =  \,\, &   2    (n_{xzy}        -n_{yzx}), 
\end{align*}
and  
\begin{align*}
    &   Q_{x,y,z}+Q_{z,y,x} \\  = & \,\,    n_{xyz} + n_{xzy} -n_{yxz}  - n_{yzx} + n_{zxy} + n_{zyx} -n_{yxz}  -n_{yzx} \\= &  \,\,
     n_{xyz} + n_{xzy}+ n_{zxy} + n_{zyx} -2n_{yxz}  -2n_{yzx}. 
\end{align*}
\par 
\noindent
Therefore, we obtain 
\begin{align}
      \delta_{xz} +\delta_{zy} +Q_{x,y,z}+Q_{z,y,x} =  P_{x,y,z}. 
\end{align}
 and the proof is thus complete.  
\end{proof}
 \par 
We can now state     
the following key technical lemma in the proof of 3MOT. The main point is that Lemma~\ref{appendix-3-wise-major-order-theorem}, using only the quantities of the form $\delta_{x,y}$, $P_{x,y,z}$, $Q_{x,y,z}$, $S_{s,y,z}$, allows us to compare simultaneously two potential consensus rankings with a given ranking where one would like to switch the relative order of a certain pair of  alternatives. 

\begin{lemma}
\label{appendix-3-wise-major-order-theorem}
    Let $\pi \colon L>y>Z>x>R$, $\sigma_1^* \colon L>Z>x>y>R$, and  $\sigma^*_2 \colon L>x>y>Z>R$ be three rankings of an election where $L,Z,R$ are linearly ordered sets of alternatives. For $i \in \{1,2\}$, let $ 
 \Delta_i= d_{KT}^3(\pi, V) - d_{KT}^3(\sigma_i^*, V)$.  
Then we have: 
 \begin{align}
 \label{e:proof-3-wise-major-order-eq-1}
 \displaystyle 
    \Delta_1 + \Delta_2 =    2 \sum_{t \in R}  Q_{x,y,t}  + 2 \delta_{xy} + \sum_{z \in Z} P_{x,y, z} - \sum_{z\in Z,\, t\in Z\cup R, \,z>^{\pi}t} S_{y,x,z,t}. 
 \end{align} 
\end{lemma}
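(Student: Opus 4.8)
The plan is to compute $\Delta_1 + \Delta_2$ directly from the definition of $d^3_{KT}$ by a subset-by-subset comparison of the three rankings $\pi$, $\sigma_1^*$, $\sigma_2^*$. Writing $M = \{x,y\}\cup Z$ for the reshuffled middle block, I would first observe that all three rankings place $L$ above $M$ above $R$ and share the internal orders of $L$, of $R$, and of $Z$. Consequently, for any $S \in \Delta^3(C)$ whose top element does not fall inside the reshuffled part, $\mathrm{top}_S$ is the same in the three rankings and cancels in $\Delta_1$ and $\Delta_2$. Concretely, only the subsets $S \subseteq M\cup R$ with $|S\cap M|\ge 2$ can contribute: any $S$ meeting $L$, any $S\subseteq R$, and any $S$ with $|S\cap M|\le 1$ contribute $0$ because their top is forced to a common element.

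Next I would record that for a single contributing subset $S$ its contribution to $\Delta_i$ equals $N_{b_i}-N_a$, where $a=\mathrm{top}_S(\pi)$, $b_i=\mathrm{top}_S(\sigma_i^*)$, and $N_c$ is the number of votes with $\mathrm{top}_S(v)=c$; this follows by summing the pointwise identity $\mathbbm{1}[\mathrm{top}_S(v)\neq a]-\mathbbm{1}[\mathrm{top}_S(v)\neq b_i]=\mathbbm{1}[\mathrm{top}_S(v)= b_i]-\mathbbm{1}[\mathrm{top}_S(v)= a]$ over $v\in V$. I then enumerate the finitely many types of contributing subsets according to $S\cap M$, reading off the three tops from the explicit restrictions $\pi|_M: y>Z>x$, $\sigma_1^*|_M: Z>x>y$, $\sigma_2^*|_M: x>y>Z$. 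The relevant types are $\{x,y\}$, $\{x,z\}$, $\{y,z\}$ (each optionally joined to some $t\in R$), and the triples $\{x,y,z\}$, $\{x,z,z'\}$, $\{y,z,z'\}$ with $z,z'\in Z$; pairs and triples lying entirely inside $Z$ give equal tops and drop out. For each type the quantity $N_{b_1}-N_a+N_{b_2}-N_a$ collapses, after expanding $N_c$ into sums of the $n_{\cdots}$ via Lemma~\ref{l:appendix-observation-decomposition} and matching with the definitions, into one of $\delta_{xy},\delta_{xz},\delta_{zy},Q_{x,y,t},Q_{x,z,t},Q_{z,y,t}$, or the combinations $Q_{x,y,z}+Q_{z,y,x}$ (from $\{x,y,z\}$) and $Q_{x,z,z'},Q_{z,y,z'}$ (from the $Z$-triples).

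Finally I would assemble the pieces. The subsets based at $\{x,y\}$ already yield $2\delta_{xy}+2\sum_{t\in R}Q_{x,y,t}$, the first two terms of the target. For the rest, I rewrite the target using the two structural identities $P_{x,y,z}=\delta_{xz}+\delta_{zy}+Q_{x,y,z}+Q_{z,y,x}$ (Lemma~\ref{appendix-3-wise-major-order-theorem-relation-p-q-delta}) and $-S_{y,x,z,t}=Q_{z,y,t}+Q_{x,z,t}$ (Lemma~\ref{appendix-3-wise-major-order-theorem-relation-s-q}), and split the index set $\{(z,t):z\in Z,\,t\in Z\cup R,\,z>^{\pi}t\}$ into the part with $t\in R$ (where $z>^{\pi}t$ holds automatically) and the part with $t\in Z$ (where $z>^{\pi}t$ is the internal $Z$-order). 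Matching term by term: the $\{x,z\}$ and $\{y,z\}$ subsets reproduce $\sum_z(\delta_{xz}+\delta_{zy})$ together with the $t\in R$ part of $-\sum S$; the triple $\{x,y,z\}$ reproduces $\sum_z(Q_{x,y,z}+Q_{z,y,x})$; and the $Z$-triples, summed over unordered pairs with $z>^{\pi}z'$, reproduce the $t\in Z$ part of $-\sum S$. Adding these gives exactly $\sum_z P_{x,y,z}-\sum_{z>^{\pi}t}S_{y,x,z,t}$, completing the identity.

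The main obstacle I anticipate is the bookkeeping in the $|S\cap M|=3$ case: one must correctly identify $\mathrm{top}_S$ in each ranking when two $Z$-elements are present (the higher one in the shared $Z$-order acts as the common top of $\pi$ and $\sigma_1^*$), and then verify that summing $Q_{x,z,z'}+Q_{z,y,z'}$ over unordered pairs $z>^{\pi}z'$ exactly reconstitutes the ordered-pair sum over $z>^{\pi}t$ in the $S$-term, a matching that hinges on $z>^{\pi}t$ coinciding with the internal order of $Z$. The partial cancellations in the $|S\cap M|=2$ cases, where one of the two summands of $\Delta_1+\Delta_2$ vanishes because $\mathrm{top}_S(\pi)=\mathrm{top}_S(\sigma_i^*)$ for one $i$, also require care to avoid miscounting.
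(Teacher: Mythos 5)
Your proposal is correct and follows essentially the same route as the paper's proof: a direct subset-by-subset computation of $\Delta_1$ and $\Delta_2$ (the paper writes out exactly the contributions you enumerate — $\delta_{xy}$, $\delta_{xz}$, $\delta_{zy}$, and the $Q$-terms, with the same identification of tops and the same partial cancellations for the $\{x,z\}$- and $\{y,z\}$-type subsets), followed by the two structural identities $P_{x,y,z}=\delta_{xz}+\delta_{zy}+Q_{x,y,z}+Q_{z,y,x}$ (Lemma~\ref{appendix-3-wise-major-order-theorem-relation-p-q-delta}) and $-S_{y,x,z,t}=Q_{z,y,t}+Q_{x,z,t}$ (Lemma~\ref{appendix-3-wise-major-order-theorem-relation-s-q}) to assemble the right-hand side of \eqref{e:proof-3-wise-major-order-eq-1}. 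Your splitting of the index set $\{(z,t)\colon z\in Z,\ t\in Z\cup R,\ z>^{\pi}t\}$ into its $t\in R$ and $t\in Z$ parts, and your handling of the two-$Z$-element triples, correspond exactly to the paper's single summation over $z>^{\pi}t$, so no gap remains.
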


\begin{proof}
By the definition of $\delta_{xy}$, $Q_{u,v,w}$, and the $3$-wise Kendall-tau distance $d_{KT}^3$, we find that:  
\begin{align*}
\Delta_1 = & \,\, n_{xy} - n_{yx} + \sum_{z\in Z} (n_{zy} -n_{yz}) + \sum_{z\in Z,\, t\in Z\cup R, \,z>^{\pi}t} ( n_{zyt} +n_{zty} - n_{yzt} -n_{ytz}) \nonumber  \\ & + \sum_{z \in Z} (n_{zxy} + n_{zyx} - n_{yxz} - n_{yzx}) + \sum_{t \in R} (n_{xyt} + n_{xty} - n_{yxt} - n_{ytx} )  \nonumber \\
  = & \,\,\delta_{xy} + \sum_{z\in Z} (n_{zy} -n_{yz}) + \sum_{z\in Z,\, t\in Z\cup R, \,z>^{\pi}t} Q_{z,y,t}+ \sum_{z \in Z}  Q_{z,y,x}+ \sum_{t \in R} Q_{x,y,t}  
\end{align*}
and similarly 
\begin{align*}
    \Delta_2 = & \,\, n_{xy} - n_{yx} + \sum_{z\in Z} (n_{xz} - n_{zx}) + \sum_{z\in Z,\, t\in Z\cup R, \,z>^{\pi}t} (n_{xzt} +n_{xtz} - n_{zxt} -n_{ztx}) \nonumber  \\ &  + \sum_{z \in Z} (n_{xyz} + n_{xzy} - n_{yxz} - n_{yzx}) + \sum_{t \in R} (n_{xyt} + n_{xty} - n_{yxt} - n_{ytx})  \nonumber \\
  =& \,\, \delta_{xy}  + \sum_{z\in Z} (n_{xz} - n_{zx}) + \sum_{z\in Z,\, t\in Z\cup R, \,z>^{\pi}t} Q_{x,z,t} + \sum_{z \in Z} Q_{x,y,z} + \sum_{t \in R} Q_{x,y,t}   . 
\end{align*}
\par 
\noindent 
Hence, we deduce that: 
\begin{align}
    \label{e:proof-3-wise-major-order-eq-5} 
     \Delta_1+ \Delta_2 = & \,\, 2\delta_{xy} + 2 \sum_{t \in T} Q_{x,y,t}  +  \sum_{z\in Z,\, t\in Z\cup T, \,z>^{\pi}t} (Q_{z,y,t}+Q_{x,z,t})
   \nonumber \\ & + \sum_{z\in Z} ( n_{zy}+n_{xz}  - n_{yz}- n_{zx} +Q_{x,y,z}+Q_{z,y,x}). 
\end{align}
\par 
\noindent 
On the one hand, we have $\delta_{ab}=n_{ab}-n_{ba}$ and $ P_{a,b,c} = \delta_{ac} +\delta_{cb} + Q_{a,b,c} + Q_{c,b,x}$ for all distinct alternatives $a,b,c$  by Lemma~\ref{appendix-3-wise-major-order-theorem-relation-p-q-delta}. Therefore,   
\begin{align}
\label{e:proof-3-wise-major-order-eq-6}
     \sum_{z\in Z} ( n_{zy}+n_{xz}  - n_{yz}- n_{zx} +Q_{x,y,z}+Q_{z,y,x})= \sum_{z\in Z} P_{x,y,z}. 
\end{align}
\par 
\noindent 
On the other hand, since 
$- S_{b,a,c,d} = Q_{c,b,d} + Q_{a,c,d}$ for all  distinct alternatives $a,b,c,d$ by  
 Lemma~\ref{appendix-3-wise-major-order-theorem-relation-s-q}, we deduce that:   

\begin{align}
\label{e:proof-3-wise-major-order-eq-7}
 \sum_{z\in Z,\, t\in Z\cup R, \,z>^{\pi}t} (Q_{z,y,t}+Q_{x,z,t}) = - \sum_{z\in Z,\, t\in Z\cup R, \,z>^{\pi}t}  S_{y,x,z,t}. 
\end{align}
\par 
\noindent
By combining the equalities \eqref{e:proof-3-wise-major-order-eq-5}, \eqref{e:proof-3-wise-major-order-eq-6}, \eqref{e:proof-3-wise-major-order-eq-7}, we obtain the desired relation \eqref{e:proof-3-wise-major-order-eq-1}.  
\end{proof}

We can finally give the proof of 3MOT. 

\begin{proof}[Proof of Theorem~\ref{3-wise-major-order}]
Suppose on the contrary that $\pi \colon L>y>Z>x>R$ is a $3$-wise median of the election where $L,Z,R$ are ordered sets of alternatives. We will show that for the rankings $\sigma_1^* \colon L>Z>x>y>R$ and  $\sigma^*_2 \colon L>x>y>Z>R$, there exists $i \in \{1,2\}$ such that  
 \[
 \Delta_i= d_{KT}^3(\pi, V) - d_{KT}^3(\sigma_i^*, V) >0.
 \]
\par 
\noindent 
 Indeed, by Lemma~\ref{appendix-3-wise-major-order-theorem} on the differences  $\Delta_1$ and $\Delta_2$, we find that:
 \begin{align}
 \label{e:3-wise-major-order-eq-1}
 \displaystyle 
    \Delta_1 + \Delta_2 =  2 \sum_{t \in R}  Q_{x,y,t}  + 2 \delta_{xy} + \sum_{z \in Z} P_{x,y, z}  - \sum_{z\in Z,\, t\in Z\cup R, \,z>^{\pi}t} S_{y,x,z,t}. 
 \end{align} 
 \par 
 \noindent 
 We deduce that $\Delta_1+ \Delta_2>0$ by conditions (i) and (ii). Note that the inequality is strict by condition (ii). Therefore, there exists $i \in \{1,2\}$ such that $\Delta_i>0$. In other words, $\sigma^*_i$ would then be a strictly better $3$-wise Kemeny consensus than $\pi$, which contradicts the hypothesis that $\pi$ is a $3$-wise median of the election. Therefore, the alternative 
  $x$ must be ranked before $y$ in every $3$-wise median of the election. 
 \end{proof}

\begin{example}
\label{ex:3-wise-major-order-1}
In Example~\ref{ex:major-order-theorem-fail-3-wise} where the Major Order Theorems \cite{hamel-space} fail with respect to the $3$-wise Kemeny voting scheme, the $3$-wise Major Order Theorem \ref{3-wise-major-order} tells us that 
in every $3$-wise median, we have $t>x$ and $x>y$. Therefore, $t>x>y$ in every $3$-wise median and we only need to determine the position of the alternative $z$, i.e., to check which of the following 4 rankings is the $3$-wise median: 
\begin{align*}
&r_1: z>t>x>y,\quad \quad &r_2: t>z>x>y,\\
&r_3: t>x>z>y,\quad \quad &r_4: t>x>y>z. 
\end{align*}
 The research space is thus reduced by $4!/4=6$ times. Note that $r_1$ is the unique $3$-wise median. 
\end{example}

\begin{example}
\label{ex:not-3-wise-major-order-1}
    In Example~\ref{ex:major-order-theorem-fail-3-wise}, we have 
    $\delta_{tz}=1$, $P_{t,z,x}=4$, and $P_{t,z,y}= 0$ but $Q_{t,z,x} = -1$, $Q_{t,z,y}=-3$,  and moreover $S_{z,t,y,x}= 2$, $S_{z,t,x,y}= 4$. Thus, the preference for $t$ over $z$ is not strong enough to guarantee that $t$ is ranked before $z$ in a $3$-wise median: all the conditions (i), (ii) in the $3$-wise Major Order Theorem (Theorem~\ref{3-wise-major-order}) fail for the pair $(t,z)$ in the majority order.  
\end{example}

\subsection{Applications on consecutive alternatives in a median} 
 As an immediate consequence of the above proof of Theorem~\ref{3-wise-major-order}, we obtain the following criteria for the ranking of two consecutive alternatives in a $3$-wise median which extends the corresponding criteria for the classical Kemeny rule \cite{Blin}. 

\begin{theorem}
\label{e:3-wise-major-order-consequence-1}
Let $x,y$ be two consecutive alternatives in a $3$-wise median ranking $\pi^*$ of an election. Suppose that 
$\delta_{xy} > \sum_{z \neq x,y} \max(0,Q_{y,x,z})$. Then $x$ must be ranked before $y$ in $\pi^*$. 
\end{theorem}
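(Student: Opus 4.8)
The plan is to argue by contradiction and to reuse the key computation of Lemma~\ref{appendix-3-wise-major-order-theorem} in the degenerate case where the middle block is empty. Suppose that $x$ is \emph{not} ranked before $y$ in $\pi^*$; since $x$ and $y$ are consecutive, this forces $\pi^* = \pi \colon L>y>x>R$ for ordered sets $L$ (the alternatives above $y$) and $R$ (the alternatives below $x$), with \emph{nothing} between $y$ and $x$. I would then compare $\pi$ with the single ranking obtained by transposing the consecutive pair, namely $\sigma^* \colon L>x>y>R$, and aim to show that $\Delta := d^3_{KT}(\pi, V) - d^3_{KT}(\sigma^*, V) > 0$, which contradicts the assumption that $\pi$ is a $3$-wise median.

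The main simplification is that the consecutiveness hypothesis is exactly the case $Z = \varnothing$ in Lemma~\ref{appendix-3-wise-major-order-theorem}. With $Z$ empty the two auxiliary rankings $\sigma_1^* \colon L>Z>x>y>R$ and $\sigma_2^* \colon L>x>y>Z>R$ both collapse to the single ranking $\sigma^* \colon L>x>y>R$, so $\Delta_1 = \Delta_2 = \Delta$. Moreover every sum indexed over $z \in Z$ on the right-hand side of \eqref{e:proof-3-wise-major-order-eq-1} vanishes, so the formula of Lemma~\ref{appendix-3-wise-major-order-theorem} reduces to $2\Delta = \Delta_1 + \Delta_2 = 2\delta_{xy} + 2\sum_{t \in R} Q_{x,y,t}$, that is,
\[
\Delta = \delta_{xy} + \sum_{t \in R} Q_{x,y,t}.
\]

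The final step is a sign manipulation. Using the identity $Q_{x,y,t} = -Q_{y,x,t}$ recorded in the definition of $Q$, together with the elementary bound $-Q_{y,x,t} \geq -\max(0, Q_{y,x,t})$, I would estimate
\[
\sum_{t \in R} Q_{x,y,t} \geq -\sum_{t \in R} \max(0, Q_{y,x,t}) \geq -\sum_{z \neq x,y} \max(0, Q_{y,x,z}),
\]
where the last inequality only enlarges the (nonnegative) index set from $R$ to all $z \neq x,y$. Combining this with the displayed expression for $\Delta$ and the hypothesis $\delta_{xy} > \sum_{z \neq x,y} \max(0, Q_{y,x,z})$ yields $\Delta > 0$, the desired contradiction, so $x$ must precede $y$ in $\pi^*$. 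There is essentially no hard step here: the only point requiring care is recognizing that the consecutiveness hypothesis is precisely the degenerate instance $Z = \varnothing$ of the earlier lemma, which removes the $P$- and $S$-terms and merges $\sigma_1^*$ with $\sigma_2^*$; everything else is bookkeeping with the sign of $Q$.
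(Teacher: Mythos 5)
Your proposal is correct and follows essentially the same route as the paper: the paper's proof also argues by contradiction, compares $\pi^* \colon L>y>x>R$ with the single transposed ranking $\sigma^* \colon L>x>y>R$, invokes the identity of Lemma~\ref{appendix-3-wise-major-order-theorem} (equation~\eqref{e:3-wise-major-order-eq-1}) in the case $Z=\varnothing$ to get $2\Delta = 2\delta_{xy} + 2\sum_{t \in R} Q_{x,y,t}$, and then uses $Q_{x,y,t}=-Q_{y,x,t}$ together with the bound $\sum_{t\in R}\max(0,Q_{y,x,t}) \leq \sum_{z\neq x,y}\max(0,Q_{y,x,z})$ to conclude $\Delta>0$. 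Your only addition is making explicit that $\sigma_1^*$ and $\sigma_2^*$ merge when $Z=\varnothing$, which the paper leaves implicit.
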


 \begin{proof}
 Suppose on the contrary that $\pi^* \colon L>y>x>R$ where $L,R$ are ordered sets of alternatives. Let  $\sigma^* \colon L>x>y>R$, we denote 
 \[
 \Delta= d_{KT}^3(\pi^*, V) - d_{KT}^3(\sigma^*, V).
 \]
 \par 
 \noindent 
By the relation \eqref{e:3-wise-major-order-eq-1} in the proof of Theorem~\ref{3-wise-major-order} (cf. Lemma~\ref{appendix-3-wise-major-order-theorem}), we deduce that 
\begin{align*}
 \label{c:3-wise-major-order-eq-2}
 \displaystyle 
  2  \Delta & =  2\delta_{xy} + 2 \sum_{t \in R}  Q_{x,y,t}   = 2\delta_{xy}  -2 \sum_{t \in R}  Q_{y,x,t}   \geq 2\delta_{xy}  -2 \sum_{z \neq x,y} \max(0, Q_{y,x,z}) 
 \end{align*}
 which is strictly positive by the hypothesis. Hence, $\sigma^*$ must be a strictly  better $3$-wise Kemeny consensus than $\pi^*$, which is a contradiction. The theorem thus follows. 
 \end{proof}

\begin{corollary}
\label{e:3-wise-major-order-consequence-2}
Let $x,y$ be two consecutive alternatives in a $3$-wise median  $\pi^*$ of an election. If $x>y$ is the major order and  
$Q_{x,y,z} \geq 0$ for every alternative $z \neq x,y$, then $x$ must be ranked before $y$ in $\pi^*$.  
\end{corollary}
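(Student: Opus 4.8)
The plan is to deduce this corollary immediately from Theorem~\ref{e:3-wise-major-order-consequence-1}, which already handles consecutive alternatives. That theorem asserts that if $x,y$ are consecutive in a $3$-wise median $\pi^*$ and $\delta_{xy} > \sum_{z \neq x,y} \max(0,Q_{y,x,z})$, then $x$ precedes $y$ in $\pi^*$. So the only task is to check that the two hypotheses of the corollary force this single inequality to hold.

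First I would exploit the antisymmetry of $Q$ recorded in the definition, namely $Q_{x,y,z} = -Q_{y,x,z}$. Since we are given $Q_{x,y,z} \geq 0$ for every alternative $z \neq x,y$, this yields $Q_{y,x,z} \leq 0$ for all such $z$, whence $\max(0, Q_{y,x,z}) = 0$ for each $z$. Summing over $z$, the entire interference term collapses:
\[
\sum_{z \neq x,y} \max(0, Q_{y,x,z}) = 0.
\]

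Next I would use the remaining hypothesis that $x>y$ is the major order of the pair $(x,y)$: by Definition~\ref{def:number-votes-given-order} this means precisely $\delta_{xy} = n_{xy} - n_{yx} > 0$. Combining with the previous display gives the strict inequality $\delta_{xy} > 0 = \sum_{z \neq x,y} \max(0, Q_{y,x,z})$, which is exactly the hypothesis of Theorem~\ref{e:3-wise-major-order-consequence-1}. Applying that theorem then concludes that $x$ is ranked before $y$ in $\pi^*$, as desired.

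I do not expect any genuine obstacle here, since the argument is purely a substitution into an already-established result. The only point that must be stated carefully is the reduction of the sum to zero via the relation $Q_{x,y,z} = -Q_{y,x,z}$; everything else is immediate from the definitions and from Theorem~\ref{e:3-wise-major-order-consequence-1}.
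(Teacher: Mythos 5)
Your proposal is correct and follows exactly the paper's own argument: both use the antisymmetry $Q_{x,y,z}=-Q_{y,x,z}$ to collapse the sum $\sum_{z\neq x,y}\max(0,Q_{y,x,z})$ to zero and then invoke Theorem~\ref{e:3-wise-major-order-consequence-1} with $\delta_{xy}>0$ from the major-order hypothesis. No gaps.
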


\begin{proof}
Since for every alternative $z \neq x,y$, we have $Q_{x,y,z} \geq 0$ thus $Q_{y,x,z} \leq 0$ and 
$\max(0, Q_{y,x,z})=0$. Hence, as $x>y$ is the major order, we deduce that 
$$
\delta_{xy} > 0 =  \sum_{z \neq x,y} \max(0,Q_{y,x,z})$$ and the result follows from Theorem~\ref{e:3-wise-major-order-consequence-1}. 
\end{proof}

\section{Iterated $3$-wise Major Order Theorem} 
Developing the ideas of the classical Iterated MOT (Theorem ~\ref{iterated-2-wise-major-order}), 
we can strengthen even further  the $3$-wise Major Order Theorem \ref{3-wise-major-order} based on the following two observations. 
The first one is the transitivity of preferences in a single ranking as used in the proof of Theorem ~\ref{iterated-2-wise-major-order}. 
Suppose that $(x,y)$ and $(y,z)$ are  ordered pairs of alternatives found by the $3$-wise Major Order Theorem \ref{3-wise-major-order} such that $x>y$ and $y>z$ in every $3$-wise median. Then by transitivity, we also have $x>z$ in every $3$-wise median. 
\par 
The second observation is similar but slightly more sophisticated. Suppose that we know $z>x$ and $z>y$ in every $3$-wise median. Then it is clear that a $3$-wise median of the election must be of the form $L>x>Z>y>R$ or $L>y>Z>x>R$ where $L,Z,R$ are ordered set of alternatives such that $z \in L$. Then we can completely ignore such an alternative $z$ in conditions (i) and (ii) in the $3$-wise Major Order Theorem \ref{3-wise-major-order}. Similarly,  if we know that   $x>z$ and $y>z$ in every $3$-wise median, then we can ignore the alternative $z$ in condition (ii) in the outermost  summation. In fact, Theorem~\ref{3-wise-major-order-iterated} shows that we can even further relax conditions (i) and (ii) by a more careful analysis. 
\par 
We can obviously iterate in parallel the above two processes corresponding to the two observations to obtain larger and larger sets consisting of constraints determined by  ordered pairs of alternatives until these sets stabilize. 
Hence, we obtain the following iterated version of the $3$-wise Major Order Theorem. 

\begin{definition}
    \label{d:3-wise-major-order-iterated-1} Let $V$ be an election. Let $U$ be an anti-symmetric set consisting of ordered pairs of distinct alternatives $(x,y)$ in the election with no two pairs violating the  transitivity property, i.e., 
    \begin{enumerate} [\rm (a)]
        \item 
    $(y,x) \notin U$ whenever  $(x,y) \in U$,  
    \item
    $(z,x) \notin U$ whenever  $(x,y), (y,z) \in U$.   
        \end{enumerate} 
    \par 
    \noindent  We denote by $\overline{U} \supset U$ the \emph{transitive closure} of $U$:  
    if $(x,y), (y,z) \in U$ then $(x,z) \in \overline{U}$. We define also 
    \begin{align*} 
    L_{x}(U) & =\{z \colon (z,x)  \in \overline{U} \}, 
 &   L_{x,y}(U) & =\{z \colon (z,x), (z,y) \in \overline{U} \},&\\
 R_{x}(U) & =\{z \colon (x,z) \in \overline{U} \}, & R_{x,y}(U) & =\{z \colon (x,z), (y,z) \in \overline{U} \}. & 
 \end{align*} 
\end{definition}

\begin{theorem}[\textbf{Iterated 3MOT}]
    \label{3-wise-major-order-iterated}  Let $V$ be an election over a set of alternatives $C$. Let $U_{-1}=\varnothing$ and 
for $k \geq 0$, we define inductively $U_k$ as the set of ordered pairs $(x,y)$ of distinct alternatives such that
\par 
\noindent 
     \begin{enumerate} [\rm (i)] 
     \item $Q_{x,y,z}\geq 0$ for all $z \in C\setminus \left(L_{y}\left(\overline{U}_{k-1}\right)\cup \{x,y\}\right)$, 
\item 
$2 \delta_{xy}   > \sum_{z\in X_k} \max\left(0, - P_{x,y,z} + \sum_{t \in Y_k \setminus \{z\}} \max\left(0, S_{y,x,z,t} \right) \right)$ 
     \end{enumerate}
     \par 
     \noindent 
where $X_k= C\setminus \left(L_{y}\left(\overline{U}_{k-1}\right) \cup R_{x}\left(\overline{U}_{k-1}\right) \cup \{x,y\} \right)$ and $Y_k= C\setminus\left(L_{y}\left(\overline{U}_{k-1}\right) \cup \{x,y\} \right)$. 
     Then for every integer $k \geq -1$, we have ${U}_k \subset U_{k+1}$.  Moreover, if $(x,y) \in \overline{U}_{k}$ for  some   $k \geq 0$, the alternative $x$ is ranked before $y$ in every $3$-wise median of the election. 
\end{theorem}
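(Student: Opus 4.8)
The plan is to establish the two assertions in turn, each by induction on $k$, with the identity for $\Delta_1+\Delta_2$ from Lemma~\ref{appendix-3-wise-major-order-theorem} (equation~\eqref{e:proof-3-wise-major-order-eq-1}) and Theorem~\ref{3-wise-major-order} serving as the engine. Note at the outset that $U_0$ is governed by conditions identical to those of 3MOT: since $\overline{U}_{-1}=\varnothing$, both $L_y(\overline{U}_{-1})$ and $R_x(\overline{U}_{-1})$ are empty, so $X_0=Y_0=C\setminus\{x,y\}$ and (i)--(ii) become verbatim the hypotheses of Theorem~\ref{3-wise-major-order}.

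For the monotonicity $U_k\subset U_{k+1}$ I would induct on $k$, the case $k=-1$ being $\varnothing\subset U_0$. Assuming $U_{k-1}\subset U_k$, taking transitive closures gives $\overline{U}_{k-1}\subset\overline{U}_k$, hence $L_y(\overline{U}_{k-1})\subset L_y(\overline{U}_k)$ and $R_x(\overline{U}_{k-1})\subset R_x(\overline{U}_k)$, so the index sets contract: $X_{k+1}\subset X_k$ and $Y_{k+1}\subset Y_k$. Because every summand $\max(0,\cdot)$ in (ii) is nonnegative, shrinking both the outer range $X$ and the inner range $Y\setminus\{z\}$ can only decrease the right-hand side of (ii), while (i) at stage $k+1$ quantifies $z$ over a smaller set than at stage $k$. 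Thus any pair satisfying (i)--(ii) at stage $k$ still satisfies them at stage $k+1$, giving $U_k\subset U_{k+1}$.

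For correctness I would induct on $k$, proving $(x,y)\in\overline{U}_k\Rightarrow x>y$ in every median; by transitivity of the relation "precedes in every median" it suffices to treat $(x,y)\in U_k$. Arguing by contradiction, suppose a median has the form $\pi\colon L>y>Z>x>R$. The inductive hypothesis (vacuous when $k=0$) forces $z>y$ in $\pi$ for each $z\in L_y(\overline{U}_{k-1})$ and $x>z$ for each $z\in R_x(\overline{U}_{k-1})$, so $L_y(\overline{U}_{k-1})\subset L$ and $R_x(\overline{U}_{k-1})\subset R$; as $Z$ sits strictly between $y$ and $x$ it is disjoint from $L$ and $R$, whence $Z\subset X_k$ and $Z\cup R\subset Y_k$. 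Feeding these inclusions into equation~\eqref{e:proof-3-wise-major-order-eq-1}, I would bound term by term: $R\subset Y_k$ together with (i) gives $\sum_{t\in R}Q_{x,y,t}\geq 0$; and for the block $\sum_{z\in Z}P_{x,y,z}-\sum_{z\in Z,\,t\in Z\cup R,\,z>^{\pi}t}S_{y,x,z,t}$, for each fixed $z\in Z$ I would enlarge the $\pi$-dependent range $\{t\in Z\cup R:z>^{\pi}t\}$ to $Y_k\setminus\{z\}$ and replace $S_{y,x,z,t}$ by $\max(0,S_{y,x,z,t})$ (both legitimate lower-bounding moves since the added terms are nonnegative), then use $P_{x,y,z}-\sum_{t\in Y_k\setminus\{z\}}\max(0,S_{y,x,z,t})\geq -\max(0,-P_{x,y,z}+\sum_{t\in Y_k\setminus\{z\}}\max(0,S_{y,x,z,t}))$, and finally enlarge the outer range from $Z$ to $X_k$. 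This produces $\Delta_1+\Delta_2\geq 2\delta_{xy}-\sum_{z\in X_k}\max(0,-P_{x,y,z}+\sum_{t\in Y_k\setminus\{z\}}\max(0,S_{y,x,z,t}))>0$ by (ii), so some $\Delta_i>0$, contradicting the minimality of $\pi$.

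The main obstacle I anticipate is precisely this last chain of inequalities: one must pass from the geometric, $\pi$-dependent summation ranges of Lemma~\ref{appendix-3-wise-major-order-theorem} to the combinatorial, $\pi$-independent sets $X_k,Y_k$ of the theorem without losing the strict inequality. The only leverage is the nonnegativity of the truncated terms $\max(0,\cdot)$ together with the inclusions $Z\subset X_k$ and $Z\cup R\subset Y_k$ supplied by the inductive hypothesis; keeping each monotonicity-of-summation step pointing in the correct direction --- noting that enlarging an index set which is being subtracted decreases the total --- is where the bookkeeping must be done with care. Everything else reduces to the already-proved identity and to 3MOT, and the upgrade from $U_k$ to $\overline{U}_k$ is immediate from transitivity of the ranking.
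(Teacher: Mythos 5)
Your proposal is correct and takes essentially the same route as the paper's proof: the same induction on $k$, the same contradiction against a median of the form $L>y>Z>x>R$ compared with $\sigma_1^*\colon L>Z>x>y>R$ and $\sigma_2^*\colon L>x>y>Z>R$, the same identity for $\Delta_1+\Delta_2$ from Lemma~\ref{appendix-3-wise-major-order-theorem}, and the same inclusions $L_{y}\left(\overline{U}_{k-1}\right)\subset L$, $R_{x}\left(\overline{U}_{k-1}\right)\subset R$ (hence $Z\subset X_k$ and $Z\cup R\subset Y_k$) supplied by the inductive hypothesis, finishing by transitivity for $\overline{U}_k$. The only difference is that you make explicit the term-by-term $\max(0,\cdot)$ estimates and the monotonicity argument for $U_k\subset U_{k+1}$ that the paper states tersely, and each of your bounding steps points in the correct direction.
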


\begin{proof}
Since $U_{-1}=\varnothing$, we have  $U_{-1} \subset U_0$ trivially. Note that $U_0$ is exactly the set consisting of ordered pairs $(x,y)$ of distinct alternatives satisfying Theorem~\ref{3-wise-major-order}. The first assertion $U_k \subset U_{k+1}$ for all $k \geq -1$ is then clear by an  immediate induction on $k$ using the definition of conditions (i) and (ii) and the obvious properties $\overline{A} \subset \overline{B}$, $L_{y}(A) \subset L_{y}(B)$, and $R_{x}(A) \subset R_{x}(B)$ for all $A \subset B$. 
\par 
Consider the second assertion that  $x$ is ranked before $y$ in every $3$-wise median of the election whenever $(x,y) \in \overline{U}_{k}$ for  some   $k \geq 0$. The case  $k=0$ results directly from Theorem~\ref{3-wise-major-order}. 
Assume that the second assertion  of Theorem~\ref{3-wise-major-order-iterated} holds for some $k \geq 0$ and let $(x,y) \in U_{k+1}$. Suppose on the contrary that  $\pi \colon L>y>Z>x>R$ is a $3$-wise median of the election $V$ where $L,Z,R$ are ordered sets of alternatives. 
Consider the rankings $\sigma_1^* \colon L>Z>x>y>R$ and  $\sigma^*_2 \colon L>x>y>Z>R$ as in the proof of Theorem~\ref{3-wise-major-order}. Then for 
 \[
 \Delta_i= d_{KT}^3(\pi, V) - d_{KT}^3(\sigma_i^*, V), \quad i\in \{1,2\}, 
 \]
 we also have (cf.~Lemma~\ref{appendix-3-wise-major-order-theorem}): 
 \begin{align}
 \label{e:3-wise-major-order-eq-1-iterated}
 \displaystyle 
    \Delta_1 + \Delta_2 =  2 \sum_{t \in R}  Q_{x,y,t}  + 2 \delta_{xy} + \sum_{z \in Z} P_{x,y, z}  - \sum_{z\in Z,\, t\in Z\cup R, \,z>^{\pi}t} S_{y,x,z,t}. 
 \end{align} 
 \par 
\noindent 
By our induction hypothesis, we have 
$L_{y}\left(\overline{U}_{k}\right) \subset L$ and $R_{x}\left(\overline{U}_{k}\right) \subset R$.  
Consequently, we find that 
\begin{align}
\label{e:3-wise-major-order-proof-2-iterated}
Z &\subset X_k= C\setminus \left(L_{y}\left(\overline{U}_{k}\right) \cup R_{x}\left(\overline{U}_{k}\right) \cup \{x,y\} \right) \text{ and}\\ 
R & \subset Y_k \setminus \{z\} = C\setminus\left(L_{y}\left(\overline{U}_{k}\right) \cup \{x,y,z\} \right). \nonumber 
\end{align}
\par 
\noindent
Therefore, we infer from   \eqref{e:3-wise-major-order-proof-2-iterated}, \eqref{e:3-wise-major-order-eq-1-iterated} and conditions (i) and (ii) satisfied by $(x,y)$ that 
\begin{align*}
    \Delta_1 + \Delta_2&=  2 \sum_{t \in R}  Q_{x,y,t}  + 2 \delta_{xy} + \sum_{z \in Z} P_{x,y, z}  - \sum_{z\in Z,\, t\in Z\cup R, \,z>^{\pi}t} S_{y,x,z,t} >0. 
\end{align*}
\par 
\noindent
We deduce that either $\Delta_1 >0$ or $\Delta_2>0$. Thus, either $\sigma_1^*$ or $\sigma_2^*$ is a strictly better $3$-wise consensus than $\pi$, which is a contradiction to the choice of $\pi$. Hence, $x$ must be ranked before $y$ in every $3$-wise median whenever $(x,y) \in U_{k+1}$. 
As every $3$-wise median of $V$ is a linear, strict, and complete ranking of the set $C$ of alternatives, it follows from the transitivity that $x$ must be ranked before $y$ in every $3$-wise median whenever $(x,y) \in \overline{U}_{k+1}$.
The proof is complete. 
\end{proof}
\par 
In practice, Theorem~\ref{3-wise-major-order-iterated} requires 3 to 6 iterations. 
We now compare Theorem~\ref{3-wise-major-order-iterated} with the 3-wise majority digraph method of Gilbert et al.  \cite[Theorem~3]{setwise} on the voting profile taken from \cite[Example~5]{setwise}. 

\begin{example}
\label{ex:compare-gilbert}
Consider the following voting profile $V$ over a set of 6 alternatives: 
\begin{align*}
&r_1: c_1>c_2>c_4>c_3>c_5>c_6 \,\,\text{ (4  votes)},\quad  &r_3: c_6>c_1>c_2>c_4>c_3>c_5 \,\,\text{ (1 vote)},\\
&r_2: c_1>c_3>c_2>c_4>c_5>c_6 \,\,\text{ (4 votes)},\quad  &r_4: c_6>c_1>c_4>c_3>c_2>c_5 \,\,\text{ (1 vote)}. 
\end{align*} 
By \cite[Theorem~3]{setwise}, there is \emph{some} 3-wise median of $V$ among the 4 rankings $c_1>c_2>c_3>c_4>c_5>c_6$, $c_1>c_2>c_3>c_4>c_6>c_5$, $c_1>c_2>c_4>c_3>c_5>c_6$, and $c_1>c_2>c_4>c_3>c_6>c_5$. 
On the other hand, Iterated 3MOT (Theorem~\ref{3-wise-major-order-iterated})  obtains (after 4 iterations) the following constraints $c_1 > c_2 > c_4 > c_5 > c_6$ and $c_1>c_3>c_5$ in \emph{all} 3-wise medians of $V$. 
Consequently, Iterated 3MOT implies that \emph{every} 3-wise median must be one of the following 3 rankings 
$c_1 > c_3 > c_2 > c_4 > c_5 > c_6$, $c_1 > c_2 > c_3 > c_4 > c_5 > c_6$, and $c_1 > c_2 > c_4 > c_3 > c_5 > c_6$. 
Note that the only 3-wise median of $V$ is $c_1>c_2>c_4>c_3>c_5>c_6$. 
\end{example}
\par 

\section{Improved Iterated MOT}
As a byproduct of our technique, we also obtain the following net improvement of Iterated MOT of Milosz and Hamel (Theorem~\ref{iterated-2-wise-major-order}) for classical Kemeny voting scheme. 

\begin{theorem}
\label{t:improved-2MOT-3.0} 
Let $V$ be an election over a set of alternatives $C$. 
Let $W_{-1}=\varnothing$ and 
for every $k \geq 0$, we define inductively $W_k$ as the set of ordered pairs $(x,y)$ of distinct alternatives such that
$ 2 \delta_{xy}  > \sum_{z\in Z_k}  \max\left(0, \delta_{yz} + \delta_{zx} \right)$ 
where $Z_k= C\setminus \left( L_{y}\left(\overline{W}_{k-1}\right) \cup R_{x}\left(\overline{W}_{k-1}\right) \cup \{x,y\}\right)$. 
Suppose that  $(x,y) \in \overline{W}_{k}$ for  some   $k \geq 0$. Then $x$ is ranked before $y$ in every $2$-wise median of the election. 
\end{theorem}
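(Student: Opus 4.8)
The plan is to follow the same scheme as the proof of Iterated 3MOT (Theorem~\ref{3-wise-major-order-iterated}), replacing the three-wise distance $d^3_{KT}$ by the ordinary Kendall-tau distance $d_{KT}=d^2_{KT}$ and replacing the computation of Lemma~\ref{appendix-3-wise-major-order-theorem} by its much simpler two-wise analogue. Concretely, I would prove by induction on $k \geq 0$ the statement $(\mathrm{P}_k)$: for every $(x,y)\in \overline{W}_k$, the alternative $x$ is ranked before $y$ in every $2$-wise median. Since the defining inequality forces $2\delta_{xy} > \sum_{z\in Z_k}\max(0,\delta_{yz}+\delta_{zx}) \geq 0$, hence $\delta_{xy}>0$, each pair in $W_k$ already has $x>y$ as its major order, so no separate major-order hypothesis is needed.

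For the inductive step, suppose $(\mathrm{P}_{k-1})$ holds (the case $k=0$ being vacuous since $\overline{W}_{-1}=\varnothing$) and let $(x,y)\in W_k$. Assuming for contradiction that some $2$-wise median has the form $\pi \colon L>y>Z>x>R$ with $L,Z,R$ ordered sets of alternatives, I consider exactly the two rankings $\sigma_1^* \colon L>Z>x>y>R$ and $\sigma_2^* \colon L>x>y>Z>R$ used in the proof of Theorem~\ref{3-wise-major-order} and set $\Delta_i = d_{KT}(\pi,V)-d_{KT}(\sigma_i^*,V)$. The key computation is the two-wise counterpart of Lemma~\ref{appendix-3-wise-major-order-theorem}: moving $y$ across $Z$ and $x$ changes only the pairs $\{y,z\}$ for $z\in Z$ and the pair $\{x,y\}$, giving $\Delta_1 = \delta_{xy}+\sum_{z\in Z}\delta_{zy}$, while moving $x$ across $Z$ and $y$ changes only the pairs $\{x,z\}$ for $z\in Z$ and the pair $\{x,y\}$, giving $\Delta_2 = \delta_{xy}+\sum_{z\in Z}\delta_{xz}$. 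Using $\delta_{zy}=-\delta_{yz}$ and $\delta_{xz}=-\delta_{zx}$, these add to
\[
\Delta_1+\Delta_2 = 2\delta_{xy} - \sum_{z\in Z}(\delta_{yz}+\delta_{zx}).
\]

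The remaining task is to bound this below using the defining inequality of $W_k$. By $(\mathrm{P}_{k-1})$, every $z\in L_y(\overline{W}_{k-1})$ satisfies $(z,y)\in \overline{W}_{k-1}$, hence $z$ precedes $y$ in $\pi$, so $z\in L$; likewise $R_x(\overline{W}_{k-1})\subseteq R$. Since $Z$ is disjoint from $L\cup R\cup\{x,y\}$, this yields the crucial inclusion $Z\subseteq Z_k$. As every summand $\max(0,\delta_{yz}+\delta_{zx})$ is nonnegative, I obtain
\[
\Delta_1+\Delta_2 \geq 2\delta_{xy}-\sum_{z\in Z}\max(0,\delta_{yz}+\delta_{zx}) \geq 2\delta_{xy}-\sum_{z\in Z_k}\max(0,\delta_{yz}+\delta_{zx}) > 0,
\]
the last inequality being precisely the hypothesis $(x,y)\in W_k$. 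Hence $\Delta_1>0$ or $\Delta_2>0$, so one of $\sigma_1^*,\sigma_2^*$ is a strictly better consensus than $\pi$, contradicting its minimality. This establishes $x>y$ in every median for all $(x,y)\in W_k$; extending to $\overline{W}_k$ is immediate, since every median is a total order and the relations compose transitively along any chain in $W_k$ (acyclicity of $W_k$, needed for $\overline{W}_k$ to be a strict order, is automatic because $W_k$ lies in the intersection of all median orders).

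I expect no serious obstacle here. Unlike the three-wise case, the distance only involves pairs, so the computation of $\Delta_1+\Delta_2$ is elementary and no analogues of $P_{x,y,z}$, $Q_{x,y,z}$, $S_{y,x,z,t}$ or of Lemmas~\ref{appendix-3-wise-major-order-theorem-relation-s-q}--\ref{appendix-3-wise-major-order-theorem} are required. The only point demanding care is the bookkeeping of the inductive hypothesis through the transitive closures $\overline{W}_{k-1}$ in order to secure the inclusion $Z\subseteq Z_k$, exactly as in the proof of Theorem~\ref{3-wise-major-order-iterated}.
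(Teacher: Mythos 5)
Your proposal is correct and follows essentially the same route as the paper's own proof: the same induction on $k$, the same two comparison rankings $\sigma_1^* \colon L>Z>x>y>R$ and $\sigma_2^* \colon L>x>y>Z>R$, the same identity $\Delta_1+\Delta_2 = 2\delta_{xy}-\sum_{z\in Z}(\delta_{yz}+\delta_{zx})$ obtained by direct inspection, and the same use of the inductive hypothesis to secure $L_y(\overline{W}_{k-1})\subseteq L$, $R_x(\overline{W}_{k-1})\subseteq R$, hence $Z\subseteq Z_k$. Your closing remark that acyclicity of $W_k$ (needed for $\overline{W}_k$ to be well defined) is automatic because $W_k$ lies in the intersection of all median orders is a small point the paper leaves implicit, handled correctly here.
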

\par 
Indeed, in Theorem~\ref{t:improved-2MOT-3.0}, if we replace $Z_k$ by a larger set $$Z'_k=C\setminus \left( L_{x,y}\left({W}_{k-1}\right) \cup R_{x,y}\left({W}_{k-1}\right) \cup \{x,y\}\right)$$ for all $k \geq 0$ then we obtain as a consequence  Iterated MOT (see Corollary~\ref{c:iterated-2-wise-major-order} and Lemma~\ref{l:iterated-2-wise-major-order}). Therefore, in comparison with  Iterated MOT, Theorem~\ref{t:improved-2MOT-3.0} will either find more constraints on the relative orders of pairs of alternatives in all 2-wise medians  or reduce the number of required iterations otherwise. 

\begin{proof}[Proof of Theorem~\ref{t:improved-2MOT-3.0}]  
We proceed by induction on $k \geq -1$  as in the proof of Theorem~\ref{3-wise-major-order-iterated}. The case $k=-1$ is trivial since $W_{-1}=\varnothing$. Assume that the conclusion holds for some integer $k \geq -1$ and let $(x,y) \in W_{k+1}$. Suppose on the contrary that  $\pi \colon L>y>Z>x>R$ is a $2$-wise median of the election where $L,Z,R$ are ordered sets of alternatives. 
Let  $\sigma_1^* \colon L>Z>x>y>R$ and  $\sigma^*_2 \colon L>x>y>Z>R$.  Let $
 \Delta_i= d_{KT}^2(\pi, V) - d_{KT}^2(\sigma_i^*, V)$ where $i\in \{1,2\}$. We find by a direct  inspection that:  
 \begin{align*}
 \displaystyle 
    \Delta_1 + \Delta_2 =  \left( \delta_{xy} -  \sum_{z \in Z} \delta_{yz} \right) + \left( \delta_{xy} -  \sum_{z \in Z} \delta_{zx} \right)= 2\delta_{xy} -    \sum_{z \in Z} \left( \delta_{yz} + \delta_{zx} \right). 
 \end{align*} 
 \par 
\noindent 
By our induction hypothesis, we have 
$L_{y}\left(\overline{W}_{k}\right) \subset L$ and $R_{x}\left(\overline{W}_k\right) \subset R$.  
Therefore, 
\begin{align*}
\label{e:2-wise-major-order-proof-2-iterated}
Z \subset Z_k = C\setminus \left(L_{y}\left(\overline{W}_k\right) \cup R_{x}\left(\overline{W}_k\right) \cup \{x,y\} \right). 
\end{align*}
\par 
\noindent
Consequently, since $(x,y) \in W_{k+1}$,  we deduce that 
\begin{align*}
    \Delta_1 + \Delta_2&=   2\delta_{xy} -    \sum_{z \in Z} \left( \delta_{yz} + \delta_{zx} \right) \geq 2\delta_{xy} -    \sum_{z \in Z_k} \max \left(0,  \delta_{yz} + \delta_{zx} \right) >0.  
\end{align*}
\par 
\noindent
Hence, $\Delta_i >0$ for some $i \in \{1,2\}$ and $\sigma_i^*$ is a strictly better $2$-wise consensus than $\pi$, which contradicts the choice of $\pi$. Thus, $x$ is ranked before $y$ in every $2$-wise median whenever $(x,y) \in W_{k+1}$. By transitivity, $x$ must be ranked before $y$ in every $2$-wise median whenever $(x,y) \in \overline{W}_{k+1}$. 
The proof is complete. 
\end{proof}
\par 
As a direct consequence of Theorem~\ref{t:improved-2MOT-3.0}, we obtain the following weaker result which turns out to be equivalent to Iterated MOT (Theorem~\ref{iterated-2-wise-major-order}).   
\begin{corollary}
\label{c:iterated-2-wise-major-order}
Let $V$ be an election over a set of alternatives $C$. 
Let $W'_{-1}=\varnothing$ and 
for every $k \geq 0$, we define inductively $W'_k$ as the set of ordered pairs $(x,y)$ of distinct alternatives such that
$ 2 \delta_{xy}  > \sum_{z\in Z'_k}  \max\left(0, \delta_{yz} + \delta_{zx} \right)$ 
where $Z'_k=C\setminus \left( L_{x,y}\left({W}'_{k-1}\right) \cup R_{x,y}\left({W}'_{k-1}\right) \cup \{x,y\}\right)$. 
Suppose that  $(x,y) \in \overline{W'_k}$ for  some   $k \geq 0$. Then $x$ is ranked before $y$ in every $2$-wise median of the election.
\end{corollary}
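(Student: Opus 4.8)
The plan is to deduce this corollary directly from Theorem~\ref{t:improved-2MOT-3.0} by showing that the primed recursion is uniformly at least as demanding as the unprimed one, so that $W'_k \subseteq W_k$ for every $k$. Since the transitive closure operation is monotone, this immediately yields $\overline{W'_k} \subseteq \overline{W_k}$, and then any pair $(x,y) \in \overline{W'_k}$ already lies in $\overline{W_k}$; the desired conclusion that $x$ precedes $y$ in every $2$-wise median is then a one-line appeal to Theorem~\ref{t:improved-2MOT-3.0}. Thus no new ranking inequalities need to be derived: everything reduces to a comparison of the two index sets $Z'_k$ and $Z_k$.

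First I would set up an induction on $k \geq -1$ with hypothesis $\overline{W'_{k-1}} \subseteq \overline{W_{k-1}}$, the base case $k=-1$ being immediate from $W'_{-1} = W_{-1} = \varnothing$. The crux of the inductive step is the comparison of the excluded sets. Here I would exploit that $L_{x,y}(U)$ demands \emph{both} $(z,x)$ and $(z,y)$ to lie in $\overline{U}$, whereas $L_y(\overline{U})$ demands only $(z,y)$; hence $L_{x,y}(U) \subseteq \{z \colon (z,y) \in \overline{U}\} = L_y(\overline{U})$ for any $U$, and symmetrically $R_{x,y}(U) \subseteq R_x(\overline{U})$. Feeding in the inductive hypothesis $\overline{W'_{k-1}} \subseteq \overline{W_{k-1}}$ then gives $L_{x,y}(W'_{k-1}) \subseteq L_y(\overline{W}_{k-1})$ and $R_{x,y}(W'_{k-1}) \subseteq R_x(\overline{W}_{k-1})$, so the set removed from $C$ to form $Z'_k$ is contained in the set removed to form $Z_k$, and therefore $Z_k \subseteq Z'_k$.

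With $Z_k \subseteq Z'_k$ in hand, the sum $\sum_{z \in Z'_k} \max(0, \delta_{yz}+\delta_{zx})$ differs from $\sum_{z \in Z_k} \max(0, \delta_{yz}+\delta_{zx})$ only by the addition of nonnegative terms, so it dominates the latter. Consequently the defining strict inequality $2\delta_{xy} > \sum_{z \in Z'_k}\max(0,\delta_{yz}+\delta_{zx})$ for membership in $W'_k$ forces the weaker inequality $2\delta_{xy} > \sum_{z \in Z_k}\max(0,\delta_{yz}+\delta_{zx})$ defining $W_k$. This shows $W'_k \subseteq W_k$, whence $\overline{W'_k} \subseteq \overline{W_k}$, closing the induction; the corollary then follows from Theorem~\ref{t:improved-2MOT-3.0}.

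I expect the only delicate point to be the bookkeeping between the two flavors of neighborhood sets, namely $L_{x,y}, R_{x,y}$ (demanding both constraints and evaluated on the raw set $W'_{k-1}$) versus $L_y, R_x$ (demanding a single constraint and evaluated on the closure $\overline{W}_{k-1}$). One must verify that the closure is applied consistently — using that $\overline{\overline{W}_{k-1}} = \overline{W}_{k-1}$ and the monotonicity of closure under the inductive hypothesis — so that the inclusions genuinely point in the direction $Z_k \subseteq Z'_k$. Once these inclusions are pinned down, the rest is a routine monotonicity comparison of the two sums and the reduction to Theorem~\ref{t:improved-2MOT-3.0}.
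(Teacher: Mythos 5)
Your proposal is correct and follows essentially the same route as the paper: an induction on $k$ establishing $W'_k \subseteq W_k$ and $Z_k \subseteq Z'_k$ via the chain of inclusions $L_{x,y}(W'_{k-1}) \subseteq L_y\left(\overline{W}_{k-1}\right)$ and $R_{x,y}(W'_{k-1}) \subseteq R_x\left(\overline{W}_{k-1}\right)$, followed by monotonicity of the sum of nonnegative terms and an appeal to Theorem~\ref{t:improved-2MOT-3.0}. Your phrasing of the induction hypothesis on closures rather than raw sets is an immaterial variant, since the two are equivalent by monotonicity and idempotence of the transitive closure.
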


\begin{proof}
By an immediate induction on $k$, it is clear that $W'_k \subset W_k$ and $Z_k \subset Z'_k$ for all integers $k \geq -1$. Indeed, the case $k=-1$ is trivial.  For the induction step, it suffices to use the the induction hypothesis $W'_k \subset W_k$ and observe that (cf. Definition~\ref{d:3-wise-major-order-iterated-1})
\begin{align*}
L_{x,y}(W'_{k}) \subset L_{x,y}(W_k) \subset L_{y}(W_k) \subset L_y(\overline{W}_k), \quad \quad 
R_{x,y}(W'_{k}) \subset R_{x,y}(W_k) \subset R_{x}(W_k) \subset R_{x}(\overline{W}_k),
\end{align*}
which subsequently imply $Z_{k+1} \subset Z'_{k+1}$ and thus $W'_{k+1} \subset W_{k+1}$.  
\end{proof}

\begin{lemma}
\label{l:iterated-2-wise-major-order}
Theorem~\ref{iterated-2-wise-major-order} is equivalent to Corollary~\ref{c:iterated-2-wise-major-order}. In particular, Theorem~\ref{iterated-2-wise-major-order} is a consequence of Theorem~\ref{t:improved-2MOT-3.0}.  
\end{lemma}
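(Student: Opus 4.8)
The plan is to reduce the stated equivalence to a single combinatorial identity relating the weight $\max(0,\delta_{yz}+\delta_{zx})$ that appears in Corollary~\ref{c:iterated-2-wise-major-order} to the multiplicity of $z$ in the multiset difference $E_{yx}\setminus E_{xy}$ that appears in Theorem~\ref{iterated-2-wise-major-order}. First I would fix distinct alternatives $x,y,z$ and expand $\delta_{yz}+\delta_{zx}$ by writing each of $n_{yz},n_{zy},n_{zx},n_{xz}$ as a sum over the six relative orderings of $\{x,y,z\}$ via Lemma~\ref{l:appendix-observation-decomposition}. A direct cancellation leaves only the contributions in which $z$ lies strictly between the two others, yielding $\delta_{yz}+\delta_{zx}=2(n_{yzx}-n_{xzy})$ and hence $\max(0,\delta_{yz}+\delta_{zx})=2\max(0,n_{yzx}-n_{xzy})$. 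Since the multiplicity of $z$ in $E_{yx}$ is $n_{yzx}$ and its multiplicity in $E_{xy}$ is $n_{xzy}$, summing over any admissible range $Z$ of third alternatives gives $\sum_{z\in Z}\max(0,\delta_{yz}+\delta_{zx})=2\sum_{z\in Z}\max(0,n_{yzx}-n_{xzy})$, where the right-hand sum is the cardinality of the multiset $E_{yx}\setminus E_{xy}$ with its third alternatives restricted to $Z$. Consequently the inequality $2\delta_{xy}>\sum_{z\in Z}\max(0,\delta_{yz}+\delta_{zx})$ of the Corollary is literally the inequality $\delta_{xy}>|E_{yx}\setminus E_{xy}|$ of Iterated MOT, computed over the same range $Z$.

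The next step is to match the two trimming procedures. In Iterated MOT the interference sets are trimmed by deleting every alternative found in the previous iterations to be ranked before both $x$ and $y$ or after both $x$ and $y$; in the notation of Definition~\ref{d:3-wise-major-order-iterated-1} these are precisely the alternatives of $L_{x,y}(W'_{k-1})$ and of $R_{x,y}(W'_{k-1})$, so the untrimmed range of the $k$-th iteration equals $Z'_k=C\setminus(L_{x,y}(W'_{k-1})\cup R_{x,y}(W'_{k-1})\cup\{x,y\})$. I would then argue by induction on the iteration index that the family of ordered pairs certified after $k$ iterations of Iterated MOT coincides with $W'_k$. The base case is the equality of the untrimmed conditions established in the first paragraph, which also forces $\delta_{xy}>0$, i.e.\ $x>y$ in a strict majority. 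For the inductive step, identical sets of already-solved pairs produce identical trimming sets $L_{x,y}$ and $R_{x,y}$, hence identical ranges $Z'_k$ and, by the identity, identical inequalities, so the same pairs are certified at the next step; since the transitive closures agree as well, the two procedures determine exactly the same relative orders. This is the asserted equivalence.

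Finally, the ``in particular'' clause is immediate: Corollary~\ref{c:iterated-2-wise-major-order} was already derived from Theorem~\ref{t:improved-2MOT-3.0}, so by the equivalence just established, Theorem~\ref{iterated-2-wise-major-order} is a consequence of Theorem~\ref{t:improved-2MOT-3.0} as well. The main obstacle I anticipate is not the algebra---the cancellation identity is short---but the bookkeeping needed to make the two trimming descriptions provably identical. The statement of Iterated MOT specifies its trimmed interference sets only informally (``alternatives found to rank before or after both $x$ and $y$''), so some care is required to verify that ``before or after both'' corresponds exactly to $L_{x,y}\cup R_{x,y}$ and that the iteration indices line up step for step, so that one obtains a genuine two-sided equivalence rather than merely a one-sided inclusion.
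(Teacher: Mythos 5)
Your proposal is correct and follows essentially the same route as the paper's proof: the same key identity $\delta_{yz}+\delta_{zx}=2(n_{yzx}-n_{xzy})$ derived via Lemma~\ref{l:appendix-observation-decomposition}, the same identification of the multiplicities of $z$ in $E_{yx}$ and $E_{xy}$ with $n_{yzx}$ and $n_{xzy}$ so that $2\delta_{xy}>\sum_{z\in Z'_k}\max\left(0,\delta_{yz}+\delta_{zx}\right)$ becomes $\delta_{xy}>|E^{(k)}_{yx}\setminus E^{(k)}_{xy}|$, and the same matching of the trimmed interference sets with the ranges $Z'_k=C\setminus\left(L_{x,y}(W'_{k-1})\cup R_{x,y}(W'_{k-1})\cup\{x,y\}\right)$. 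Your explicit induction on the iteration index merely makes precise the step-for-step alignment that the paper leaves implicit, and you correctly orient the multiset difference as $E_{yx}\setminus E_{xy}$ (the paper's proof contains a harmless swapped-subscript typo at that point).
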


\begin{proof}
Let $V$ be an election over a finite set of alternatives $C$ and let the notations be as in Theorem~\ref{iterated-2-wise-major-order} and Corollary~\ref{c:iterated-2-wise-major-order}. Then for all distinct alternatives $x,y$, it is not hard to see that 
$|E_{xy}| = \sum_{z \neq x,y} n_{xzy}$ and  
$|E_{yx} \setminus E_{xy}|= \sum_{z \neq x,y}\max\left(0, n_{yzx} - n_{xzy}   \right)$. 
Similarly, let $Z'_k=C\setminus \left( L_{x,y}\left({W}_{k-1}\right) \cup R_{x,y}\left({W}_{k-1}\right) \cup \{x,y\}\right)$ then for all $k \geq 0$, we have $|E^{(k)}_{xy}| = \sum_{z \in Z_k'} n_{xzy}$ and  
$|E^{(k)}_{xy} \setminus E^{(k)}_{yx}| = \sum_{z \in Z'_{k}} \max\left(0, n_{yzx} - n_{xzy}\right)$. Moreover, 
we infer from Lemma~\ref{l:appendix-observation-decomposition} that  
\begin{align*}
    & \delta_{yz} + \delta_{zx} \\ = \,\,& n_{yz} - n_{zy} + n_{zx} - n_{xz} \\ 
     = \,\,& (n_{xyz} + n_{yxz} + n_{yzx}) - (n_{xzy} + n_{zxy} + n_{zyx}) 
    + (n_{yzx} + n_{zyx} + n_{zxy}) - (n_{yxz} + n_{xyz} + n_{xzy})\\
     =\, \,& 2(n_{yzx} - n_{xzy}). 
\end{align*}
\noindent
Therefore, for every $k \geq 0$, the condition $\delta_{xy} > |E^{(k)}_{xy} \setminus E^{(k)}_{yx}| $ in Theorem~\ref{iterated-2-wise-major-order} is equivalent to the condition $2\delta_{xy} > \sum_{z\in Z'_k}  \max\left(0, \delta_{yz} + \delta_{zx} \right)$ in Corollary~\ref{c:iterated-2-wise-major-order}. The proof is thus complete. 
\end{proof}

\section{$3$-wise Major Order Theorem with equality}

In the case when we are only interested in obtaining the relative ordering of a pair of alternatives in \emph{some} $3$-wise median (and thus not necessarily in every $3$-median), we can apply the following augmented version of the $3$-wise Major Order Theorem~\ref{3-wise-major-order}.

\begin{theorem}[\textbf{3MOTe}]
    \label{3-wise-major-order-with-equality}    Let $x,y$ be two alternatives of an election with voting profile $V$ over a set of alternatives $C$. Suppose that $\delta_{xy}\geq 0$ and 
     \begin{enumerate} [\rm (i)] 
     \item $Q_{x,y,z}\geq 0$ for all $z \in C\setminus \{x,y\}$, 
\item 
$2 \delta_{xy}   \geq \sum_{z\neq x,y} \max\left(0, - P_{x,y,z} + \sum_{t \neq x,y,z} \max\left(0, S_{y,x,z,t} \right) \right)$. 
\end{enumerate}
\par 
Then $x$ is ranked before $y$ in some $3$-wise median of the election. 
\end{theorem}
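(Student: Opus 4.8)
The plan is to mirror the argument proving 3MOT (Theorem~\ref{3-wise-major-order}), but replacing the strict inequalities by non-strict ones and weakening the conclusion from ``in every median'' to ``in some median''. I would argue directly rather than by contradiction. Let $\pi$ be an arbitrary $3$-wise median of the election. If $x$ is already ranked before $y$ in $\pi$, there is nothing to prove. Otherwise $y$ precedes $x$, so I may write $\pi \colon L>y>Z>x>R$ with $L,Z,R$ ordered sets of alternatives, and I form the two comparison rankings $\sigma_1^* \colon L>Z>x>y>R$ and $\sigma_2^* \colon L>x>y>Z>R$ exactly as in the proof of 3MOT, setting $\Delta_i = d_{KT}^3(\pi,V) - d_{KT}^3(\sigma_i^*,V)$ for $i\in\{1,2\}$. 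The point is that $x$ precedes $y$ in \emph{both} $\sigma_1^*$ and $\sigma_2^*$.

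The central computation is to invoke Lemma~\ref{appendix-3-wise-major-order-theorem}, which gives the exact identity
\[
\Delta_1 + \Delta_2 = 2\sum_{t\in R} Q_{x,y,t} + 2\delta_{xy} + \sum_{z\in Z} P_{x,y,z} - \sum_{z\in Z,\,t\in Z\cup R,\,z>^{\pi}t} S_{y,x,z,t}.
\]
By condition~(i) the first sum is nonnegative, so it suffices to bound the remaining three terms below by $0$ using condition~(ii). For each fixed $z\in Z$, the inner index set consists of those $t\in Z\cup R$ with $z>^{\pi}t$, which is a subset of $\{t\neq x,y,z\}$; hence $\sum_{t\colon z>^{\pi}t} S_{y,x,z,t}$ is at most $\sum_{t\neq x,y,z}\max(0,S_{y,x,z,t})$. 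This yields, term by term, the estimate $-P_{x,y,z}+\sum_{t\colon z>^{\pi}t}S_{y,x,z,t}\le \max\bigl(0,\,-P_{x,y,z}+\sum_{t\neq x,y,z}\max(0,S_{y,x,z,t})\bigr)$, and summing over $z\in Z\subseteq C\setminus\{x,y\}$ (where every summand is nonnegative) the right-hand side is dominated by the full sum appearing in condition~(ii), which is $\le 2\delta_{xy}$. Rearranging gives $2\delta_{xy}+\sum_{z\in Z}P_{x,y,z}-\sum_{z,t}S_{y,x,z,t}\ge 0$, and therefore $\Delta_1+\Delta_2\ge 0$.

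From $\Delta_1+\Delta_2\ge 0$ I conclude that $\Delta_i\ge 0$ for at least one $i\in\{1,2\}$, i.e. $d_{KT}^3(\sigma_i^*,V)\le d_{KT}^3(\pi,V)$. Since $\pi$ is a $3$-wise median, $\sigma_i^*$ attains the same minimal total $3$-wise distance and is itself a $3$-wise median; as $x$ precedes $y$ in $\sigma_i^*$, this exhibits a median with the desired relative order, completing the proof. I note that the hypothesis $\delta_{xy}\ge 0$ is consistent with (and in fact implied by) condition~(ii), whose right-hand side is a sum of terms $\max(0,\cdot)\ge 0$.

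The main obstacle, and the only genuine departure from 3MOT, is the passage from the strict to the non-strict regime. In 3MOT strictness forces $\Delta_1+\Delta_2>0$, which immediately contradicts $\pi$ being a median; here I only obtain $\Delta_1+\Delta_2\ge 0$, so I cannot derive a contradiction and must instead argue that equality does no harm---one of the swapped rankings $\sigma_i^*$ is merely \emph{as good as} $\pi$ and hence is still a median. The care required is to check that the chain of inequalities bounding the $S$-sum survives without strict slack; this holds because each step replaces a real sum over a restricted index set by a sum of $\max(0,\cdot)$ terms over a larger index set, so that $\delta_{xy}\ge 0$ together with (i) and (ii) delivers $\Delta_1+\Delta_2\ge 0$ exactly.
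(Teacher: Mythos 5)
Your proposal is correct and follows essentially the same route as the paper, which proves 3MOTe by rerunning the 3MOT argument via Lemma~\ref{appendix-3-wise-major-order-theorem} and observing that $\Delta_1+\Delta_2\geq 0$ suffices to make one of $\sigma_1^*,\sigma_2^*$ at least as good as $\pi$, hence itself a median with $x>y$. Your added details (the term-by-term bounding of the $S$-sum and the remark that $\delta_{xy}\geq 0$ already follows from condition~(ii)) are accurate elaborations, not a different method.
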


\begin{proof}
    The proof is the same, \emph{mutatis mutandis}, as the proof of Theorem~\ref{3-wise-major-order} where we notice that the inequality  $\Delta_1 + \Delta_2 \geq 0$ (cf.~\eqref{e:3-wise-major-order-eq-1}) is no longer strict but it is enough to conclude that either the ranking $\sigma_1^*$ or $\sigma_2^*$ is at least as good as $\pi$ as a $3$-wise consensus of the election. 
    The conclusion follows since in the rankings $\sigma_1^*$ and $\sigma_2^*$, we have $x>y$.  
\end{proof}


\section{Concluding remarks with  extended set-wise Kemeny schemes} 
\label{s:conclusion}
\par 
In real-life data, we can occasionally encounter incomplete rankings of alternatives, notably in surveys and political elections using for example the Single transferable vote method. These particular  situations correspond very well to the reasonable assumption  that typical voters only 
pay attention to the ranking of a  shortlist of their favorite alternatives and put a rather random order for the rest of the alternatives in complete votes or simply do not indicate any preferences on such alternatives. 
To deal with such a situation using the $3$-wise Kemeny voting scheme, we shall suppose that in an incomplete vote, all the alternatives which are not ranked will equally occupy the last position. The $3$-wise Kendall-tau distance can still apply for such incomplete rankings as well as the notion of the $3$-wise median, or more generally, $k$-wise median for every $k \geq 2$. 
The above method clearly increases the range of applicability of set-wise Kemeny voting schemes. See Appendix~\ref{appendix-examples} and Table~\ref{table:real-data} for some examples with real elections. 
We also observe that $3$-wise Kemeny voting scheme can be applied for committee elections where we would like to select a shortlist consisting of an arbitrarily fixed number of best alternatives according to certain criteria. 

\appendix

\section{$3$-wise Major Order Theorem as a measure of consensus}
\label{s:measure-consensus}

We can observe that in an election, the proportion of pairs whose relative order are determined by the $3$-wise Major Order Theorems (Theorem~\ref{3-wise-major-order} and Theorem~\ref{3-wise-major-order-iterated}) can be used as a good measure of the consensus level of the electorates. Hence, a high percentage of such pairs means that the electorates, as a whole, agree on the relative rankings of a high percentage of pairs of alternatives, since every $3$-wise median admits the same relative rankings on such pairs.  
We thus arrive as the following definition. 

\begin{definition}
Let $V$ be an election over $n$ alternatives and let $M$ be the number of pairs $(x,y)$ which satisfy  conditions (i) and (ii) in  Theorem \ref{3-wise-major-order}, and let $M^*$ be the number of pairs $(x,y)$ in $\cup_{k \geq 0} U_k$ in 
Theorem~\ref{3-wise-major-order-iterated}. Then the $\mathrm{3MOT}$-consensus level of the election $V$ is defined by: 
\begin{equation}
    \label{e:consensus-level-3-MOT}
    L_{\mathrm{3MOT}}(V) \coloneqq \frac{M}{\frac{n(n-1)}{2}} = \frac{2M}{n(n-1)}. 
\end{equation}
\par 
Similarly, the  $\mathrm{3MOT^*}$-consensus level of the election $V$ is defined by: 
\begin{equation}
    \label{e:consensus-level-3-MOT-iterated}
    L_{\mathrm{3MOT^*}}(V) \coloneqq \frac{M^*}{\frac{n(n-1)}{2}} = \frac{2M^*}{n(n-1)} \geq  L_{\mathrm{3MOT}}(V). 
\end{equation}
\end{definition}
\par 
In particular,  if the consensus levels $ L_{\mathrm{3MOT}}(V)$, $ L_{\mathrm{3MOT^*}}(V)$ of an election $V$ are high, then we have a lot of restrictions on the possibilities of $3$-wise medians and thus 
the total number of possible $3$-wise  medians is small. Clearly, a small number of $3$-wise medians of an election is a good indicator that the voters in the  electorate tend to agree with each other.  
See Table~\ref{table:real-data} for the consensus level $L_{\mathrm{3MOT}}$ and $L_{\mathrm{3MOT^*}}$ of several real-world elections recorded in PREFLIB \cite{prelib} and  Table~\ref{table:proportion-iterated-3mot-1},  Table~\ref{table:proportion-iterated-3mot-2} for the average consensus level of random elections for several  different values of the numbers of  alternatives and votes.


\section{$3$-wise scheme can be more resistant to manipulation}
\label{s:3-wise can be more resistant}
Note  that the unique $2$-wise median of the election $V$ in Example~\ref{ex:major-order-theorem-fail-3-wise} is $t>z>x>y$. Therefore, the unique $2$-wise winner is $t$ while arguably, the more convincing winner should be $z$ instead. 
Indeed, consider the following imaginary election over the same 4 alternatives $x,y,z,t$ representing 4 different political parties. Assume that the electorate consists of 11 million voters and the preferences of 10 million voters are fixed as follows. 
\begin{align*}
&r_1: z>t>x>y \,\,\text{ (5 million votes)},\quad \quad &r_2: y>t>x>z \,\,\text{ (2 million votes)}, \\
&r_3: x>y>t>z \,\,\text{ (0 vote)},\quad \quad &r'_3: x>y>z>t \,\,\text{ (1 million votes)},  \\
&r_4: t>x>y>z \,\,\text{ (2 million votes)}.\quad \quad & 
\end{align*}
\par 
\noindent 
Suppose that the remaining group $G$ of 1 million voters prefer $x$ and $y$ in that order and simply ignore the alternatives $z,t$. Hence, we can reasonably assume that among these voters, roughly half of them chose $t>z$ (see $r_3$) and the other half chose $z>t$ (see $r'_3$). 
Consider the following voting profile $V'$ where exactly half of the group $G$ prefer $z$ over $t$: 
\begin{align*}
&r_1: z>t>x>y \,\,\text{ (5 million votes)},\quad \quad & &r_2: y>t>x>z \,\,\text{ (2 million votes)}, \\
& r_3: x>y>t>z \,\,\text{ (0.5 million votes)},\quad \quad & &r'_3: x>y>z>t \,\,\text{ (1.5 million votes)},  \\
&r_4: t>x>y>z \,\,\text{ (2 million votes)}.\quad \quad &  &
\end{align*}

\par 
Then the only $2$-wise medians of the election $V'$ are $r_4$ and $r_1$. 
Moreover, if more than half of the group $G$ prefer $z$ over $t$ then $r_1$ is the unique $2$-wise median. Similarly, if more than half of the group $G$ prefer $t$ over $z$ then $r_4$ is the unique $2$-wise median. In other words, the group $G$ ultimately decides $z$ or $t$ by the majority rule as the unique $2$-wise winner. 
Consequently, if more than half of the group $G$ chose $x>y>t>z$, for example by strategic voting or under the influence of manipulative forces and even bribery of the political party of $t$, the ranking $r_4$ will be the unique $2$-wise median where $t$ surpasses $z$ and emerges as the unique $2$-wise winner.  
On the other hand, the ranking $r_1$ remains the unique $3$-wise median and thus $z$ is the indisputable $3$-wise winner of the election regardless of the preferences between $z$ and $t$ of voters in the group $G$ (by the computation with the voting profile $V$), even if all the voters in $G$ chose $x>y>t>z$.  
Therefore, we conclude that in such situations, the $3$-wise Kemeny voting scheme should be preferred over the classical Kemeny rule to avoid undesirable outcomes and manipulations.

\section{Simulations on real-world and uniform data and applications} 
\label{appendix-data}

\subsection{Examples with real-world data from PREFLIB}
\label{appendix-examples}

\begin{example} 
To illustrate, consider the 
2007 Glasgow City Council elections which were held by  Ward (cf. \cite[Dataset 00008]{prelib} and the data available on \hyperlink{https://en.wikipedia.org/w/index.php?title=2007_Glasgow_City_Council_election&oldid=1122885191}{Wikipedia}).   
Out of the total 21 Wards, the EastCentre Ward was allocated 4 seats in the city council which were   chosen using the Single transferable vote method among 13 alternatives by 9078 valid votes. The alternatives are:  
\begin{multicols}{3}
\begin{enumerate}[\rm 1)] 
    \item Jim Adams
\item Patricia Chalmers
\item Elaine Cooper
\item Drew Dickie
\item Frank Docherty
\item Jennifer Dunn
\item Stuart Grieve
\item David Johnston
\item John Kerr
\item Elaine Mcdougall
\item William Mclachlan
\item Daniel O'Donnell
\item Randle Wilson
\end{enumerate}
\end{multicols}
\par 
We shall identify each alternative with the order given above.  
Under the above described  extended $3$-wise Kemeny voting scheme, the $3$-wise Major Order Theorem \ref{3-wise-major-order} determines 67 pairs (out of 78 possible pairs) of the relative order of the form $(x,y)$, which means that $x>y$ in every $3$-wise median: 
\begin{align*}
 &  (1, 4), (1, 8), (1, 9), (1, 11), (1, 12), (1, 13), (2, 1), (2, 3), (2, 4), (2, 7), (2, 8), (2, 9),\\& (2, 11), (2, 12), (2, 13), (3, 4), (3, 8), (3, 9), (3, 11), (3, 12), (3, 13),(4, 8), (4, 11),\\ & (5, 1), (5, 3), (5, 4), (5, 7), (5, 8), (5, 9), (5, 10),  (5, 11), (5, 12), (5, 13), \\& (6, 1), (6, 3), (6, 4), (6, 7), (6, 8), (6, 9), (6, 11), (6, 12), (6, 13),\\& (7, 4), (7, 8), (7, 9), (7, 11), (7, 12), (7, 13), (9, 8), (9, 11),\\ &(10, 1), (10, 3), (10, 4), (10, 7), (10, 8), (10, 9), (10, 11), (10, 12), (10, 13),\\ & (12, 4), (12, 8), (12, 9), (12, 11), (13, 4), (13, 8), (13, 9), (13, 11). 
\end{align*}
\par 
Therefore, in every $3$-wise median, we have 
\begin{align*}
    5&>1,3,4,7,8,9,10, 11,12,13\\
   2, 6, 10& >1,3,4,7,8,9,11,12,13. 
\end{align*}
\par 
Hence, according to the $3$-wise Kemeny scheme, the alternatives 2,5,6,10, namely, Patricia Chalmers,  Frank Docherty, Jennifer Dunn, and Elaine Mcdougall, must win the Ward election. It turns out that the above $3$-wise election result coincides with the official result using the Single transferable vote method used in the 2007 Glasgow City Council elections.  
\par
Similarly, for the Shettleston Ward, there were 8803 valid  votes to determine 4 seats from 11 alternatives: 

\begin{multicols}{3}
\begin{enumerate}[\rm 1)] 
    \item  Mick Eyre
\item  Walter Hamilton
\item Wheatley Harris
\item David Jackson
\item Gordon Kirker
\item Catherine Maguire
\item Tom Mckeown
\item John F. Mclaughlin
\item Euan Mcleod
\item George Ryan
\item Alex White
\end{enumerate}
\end{multicols} 
\par 
The $3$-wise Major Order Theorem \ref{3-wise-major-order} determines the relative order of 45 pairs (out of 55 possible pairs) from which we deduce that:  
\begin{align*}
    7,8,9,10 &> 1,2,3,4,5,6,11,
\end{align*}
in every $3$-wise median. 
Hence, if we were to follow the $3$-wise Kemeny voting scheme, the 4 seats would belong to Tom Mckeown, John F. Mclaughlin, Euan Mcleod, and George Ryan, which is exactly the same official result of the election using the Single transferable vote method. 
\par 
The Newlands Ward has 3 seats to be determined from 8654 valid votes among 9 alternatives: 

\begin{multicols}{3}
\begin{enumerate}[\rm 1)] 
    \item  Kay Allan
\item Charlie Baillie
\item Eamonn Coyle
\item Stephen Curran
\item Colin Deans
\item Robert Mcelroy
\item Jim Mcnally
\item Gordon Morgan
\item Robert W Stewart
\end{enumerate}
\end{multicols} 
\par 
The $3$-wise Major Order Theorem \ref{3-wise-major-order} implies that 
 $4,5,7>1,2,3,6,7,8$ 
in every $3$-wise Kemeny. Hence, Stephen Curran, Colin Deans, and Jim Mcnally win  according to the $3$-wise Kemeny rule. Again, the same official result holds using the Single transferable vote method. 
\end{example} 
\par 
Our next example concerns one of the most important elections conducted with the Instant Runoff Voting (IRV) method.

\begin{example}
    Consider the UK Labor Party Leadership election in 2010  (see \cite[Dataset 00030]{prelib} and \href{https://rangevoting.org/LabourUK2010}{$\mathrm{rangevoting.org/LabourUK2010}$}) whose valid  electorate consists of 266 voters to vote for the following 5 alternatives identified with their number: 

    \begin{multicols}{3}
    \begin{enumerate} [\rm 1)]
        \item   Diane Abbott
\item Ed Balls
\item Andy Burnham
\item
David Miliband
\item
Ed Miliband.
    \end{enumerate}
     \end{multicols} 
  \par 
  While the number of pairs of the form $(x,y)$, which means that $x>y$ in every median, determined by the Unanimity property is zero, the Extended $2$-wise Always Theorem \cite{kien-sylvie-condorcet} applies and finds 6 such pairs out of 10 pairs in total, namely, $
(2, 1)$, $(3, 1)$, $(4, 1)$, $(5, 1)$, $(4, 3)$, $(5, 3)$. Hence, we deduce that 
$4,5>3>1$  and $2>1$ 
in every $2$-wise median of the UK Labor Leadership election in 2010. In fact, by  the $2$-wise Major Order Theorem (Theorem~\ref{2-wise-major-order}), the unique $2$-wise median of the election is $4>5>2>3>1$. 
\par 
With respect to the $3$-wise Kemeny rule, the $3$-wise Extended  Always Theorem \cite{kien-sylvie-condorcet} (Theorem~\ref{t:3-wise-unanimity-general})  determines the preferences of each pair (in the given order) $(2, 1)$, $(3, 1)$, $(4, 1)$, $(5, 1)$. Therefore, we have in every $3$-wise median that $
2,3,4,5>1$. 
\par 
On the other hand, the 3MOT (Theorem~\ref{3-wise-major-order}) determines 9 out of the total  10 pairs: 
\[
(2, 1), (2, 3), (3, 1), (4, 1), (4, 2), (4, 3), (5, 1), (5, 2), (5, 3)
\]
\par 
The remaining pair $(4, 5)$ is found by Iterated 3MOT (Theorem~\ref{3-wise-major-order-iterated}) after two iterations and thus the unique $3$-wise median of the election is given by 
\[
4>5>2>3>1  
\]
which coincides with the unique $2$-wise median. 
Consequently, the unique $3$-wise winner is David Miliband, who is also the winner with plain plurality (first-past-the-post), a method used in a preceding election in 1994. 
Note that the official winner (under the IRV method) is Ed Miliband, who won only by a small margin against David Miliband. 

\end{example}

\begin{example}

Table~\ref{table:real-data} describes the efficiency of the $3$-wise Major Order Theorems on real-world examples taken from PREFLIB of elections with a medium or large number of alternatives in various domains ranging from sport competitions, web search, and  university rankings to food and music preferences. The Python codes for Table~\ref{table:real-data} are available at https://github.com/XKPhung. 

\newpage 
\begin{table}[h!]
\centering
\caption{Proportion of pairs (out of all possible pairs) with relative order solved by the 3-wise Major Order Theorems \ref{3-wise-major-order} and \ref{3-wise-major-order-iterated} for real-world elections taken from PREFLIB \cite{prelib} where $n$ is the total  number of alternatives, $m$ is the total number of votes, and $n(n-1)/2$ is the total number of pairs of alternatives. In the column Iterated 3MOT,  ($k$ iter.) signifies that Iterated 3MOT does not generate new information starting from the $(k+1)$-th iteration. The symbol $\geq$ in the column Iterated 3MOT means that, due to lack of time, we did not run  Iterated 3MOT on the corresponding instance. However, the result must be at least as good as 3MOT. The last column gives a lower bound on the reduction rate of the search space by Iterated 3MOT (cf. \eqref{e:bound-restriction-space}).  
Thus, the size of the search space for 3-wise medians is bounded by $n! / (\text{reduction rate})$.} 
 
\label{table:real-data}
\begin{tabularx}{\textwidth}{ 
  | >{\hsize=1.4\hsize \centering\arraybackslash}X 
  | >{\hsize=0.4\hsize \centering\arraybackslash}X 
  | >{\hsize=0.4\hsize \centering\arraybackslash}X 
   | >{\hsize=0.8\hsize \centering\arraybackslash}X 
  | >{\hsize=0.7\hsize\centering\arraybackslash}X 
 | >{\hsize=1.2\hsize\centering\arraybackslash}X 
  | >{\hsize=1.0\hsize\centering\arraybackslash}X |}
\hline
\textbf{Dataset number} & \textbf{n} & \textbf{m}  & \textbf{n(n - 1)/2} & \textbf{3MOT}  &\textbf{Iterated 3MOT} & \textbf{Reduction rate $\geq$}\\ 
\hline  \hline   
  00056-00003990.soc
  & 256 & 17 &  32640 & 60.82$\%$ &  $\geq$ 60.82$\%$ & $2.94\times 10^{69}$
\\  
\hline   
 00015-00000002.soc  & 240 & 5 & 28680 & 56.86$\%$ &  $\geq$ 56.86$\%$     & $1.74\times 10^{55}$
\\ 
\hline   
 00015-00000023.soc  & 142 & 4 & 10011 & 45.71$\%$ &  $\geq$ 45.71$\%$  & $4.39\times 10^{18}$
\\   
\hline   
 00056-00003986.soc   & 130 & 5 &  8385 & 87.67$\%$ &  $\geq$ 87.67$\%$   & $4.66\times 10^{95}$
 \\ 
\hline   
00015-00000033.soc  & 128 & 4 & 8128 & 45.14$\%$ &  $\geq$ 45.14$\%$   & $1.62\times 10^{16}$
 \\ 
\hline   
00015-00000018.soc  & 115 & 4 &  6555 & 46.90$\%$ &  48.88$\%$ (3 iter.) & $8.82\times 10^{17}$
\\   
\hline   
00014-00000002.soi  & 100 & 5000 &  4950 &  85.17$\%$ &  $\geq$ 85.17$\%$     & $2.91\times 10^{65}$
\\   
\hline   
00015-00000069.soc  & 81 & 4 & 3240 & 52.65$\%$ &  59.14$\%$ (10 iter.)  &  $9.55\times 10^{19}$ \\ 
\hline 
00049-00000010.soi & 64 & 11 & 2016 & 72.47$\%$ & 93.80$\%$ (19 iter.)  & $5.20\times 10^{58}$
\\
\hline 
00049-00000049.soi & 55 & 27 & 1485 & 40.60$\%$ & 43.43$\%$ (4 iter.)   & $1.15 \times 10^6$
\\
\hline 
00049-000000114.soi & 40 & 40 & 780 & 64.49$\%$ & 73.97$\%$ (4 iter.)   & $1.58 \times 10^{16}$
\\
\hline 
00049-00000083.soi & 38 & 38 & 703 & 57.75$\%$ & 64.01$\%$ (4 iter.)   & $5.48 \times 10^{10}$
\\
\hline 
00049-00000116.soi & 29 & 45 & 406 & 64.29$\%$ & 70.44$\%$ (3 iter.) & $9.45 \times 10^{9}$ \\ 
\hline 
00006-00000028.soc & 24 &  9 & 276 & 89.86$\%$ & 97.46$\%$ (3 iter.)  & $1.05 \times 10^{21}$
\\ 
\hline
00052-00000070.soc & 20 & 21 & 190 & 62.63$\%$ & 94.74$\%$ (5 iter.)   & $5.19 \times 10^{14}$
\\
\hline 
00053-00000362.soi & 19 & 68 & 171 & 78.36$\%$ & 96.49$\%$ (4 iter.)& $5.13 \times 10^{14}$
\\ 
\hline 
 00053-00000383.soi  & 17 & 50 &  136 & 67.65$\%$ &  92.65$\%$ (4 iter.) & $9.28 \times 10^{10}$\\   
\hline  
 00035-00000002.soc & 15 & 42 & 105 & 50.48$\%$ & 68.57$\%$ (4 iter.) & $1.69 \times 10^4$
 \\
\hline 
00035-00000003.soc & 15 & 42 & 105 & 71.43$\%$ & 90.48$\%$ (4 iter.) & $4.04\times 10^8$
\\ 
\hline 
00035-00000004.soc & 15 & 42 & 105 & 41.90$\%$ & 79.05$\%$ (5 iter.) & $9.21\times 10^5$
\\
\hline 
00035-00000005.soc & 15 & 42 & 105 & 35.24$\%$ & 72.38$\%$ (5 iter.) & $6.40 \times 10^4$
\\ 
\hline
00035-00000006.soc & 15 & 42 & 105 & 48.57$\%$ & 88.57$\%$ (5 iter.) & $1.21 \times 10^8$
\\ 
 \hline 
 00035-00000007.soc & 15 & 42 & 105 & 66.67$\%$ & 91.43$\%$ (4 iter.) & $7.64 \times 10^8$
 \\ 
\hline 
00001-00000003.soi & 14 & 6481 & 91 & 79.12$\%$ & 92.31$\%$ (2 iter.) & $2.10 \times 10^8$
\\
\hline
 
\end{tabularx}
\end{table}

\end{example}

\subsection{Tests on uniform data} 

For the hard case of uniformly generated data,  we shall compare the performance of our $3$-wise Major Order Theorems with the following two theoretical space reduction results recently found in \cite{kien-sylvie-condorcet}. The first one is the $3$-wise Extended Always theorem 
 which generalizes the unanimity property \cite[Proposition 5]{setwise} also known as the Pareto efficiency and the second one is the $5/6$-majority rule which is the $3$-wise counterpart of the well-known $3/4$-majority rule of Betzler et al. \cite{3/4}. 

\begin{theorem}[\textbf{3AT}, \cite{kien-sylvie-condorcet}] 
\label{t:3-wise-unanimity-general}
Let $x,y$ be candidates in an election with $n \geq 2$ candidates. Suppose that $x \geq_\alpha y$ for some $\alpha \in [0,1] $ such that 
\begin{equation*}
 \alpha > g(n)= 1 -  \frac{1}{n^2-3n+4}. 
\end{equation*}
 \par 
 Then $x>y$ in every $3$-wise median of the election. 
\end{theorem}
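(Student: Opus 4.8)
The plan is to argue by contradiction, reusing the two‑ranking comparison devised for 3MOT. Suppose some $3$-wise median $\pi$ places $y$ before $x$, and write $\pi\colon L>y>Z>x>R$ with $L,Z,R$ linearly ordered blocks. Exactly as in the proof of Theorem~\ref{3-wise-major-order}, I would bring $x$ and $y$ into the major order through the two repairs $\sigma_1^*\colon L>Z>x>y>R$ and $\sigma_2^*\colon L>x>y>Z>R$, and set $\Delta_i=d^3_{KT}(\pi,V)-d^3_{KT}(\sigma_i^*,V)$. By Lemma~\ref{appendix-3-wise-major-order-theorem},
\[
\Delta_1+\Delta_2=2\delta_{xy}+2\sum_{t\in R}Q_{x,y,t}+\sum_{z\in Z}P_{x,y,z}-\sum_{z\in Z,\,t\in Z\cup R,\,z>^{\pi}t}S_{y,x,z,t}.
\]
It then suffices to show this sum is strictly positive, since that forces $\Delta_i>0$ for some $i$, so $\sigma_i^*$ beats $\pi$, contradicting the optimality of $\pi$. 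The whole problem is thus reduced to a quantitative lower bound on $\Delta_1+\Delta_2$ that uses only the hypothesis $x\geq_\alpha y$.

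For the estimate I would split the votes into the $g=n_{xy}$ \emph{good} votes (those with $x>y$) and the $b=n_{yx}$ \emph{bad} votes, so that $x\geq_\alpha y$ gives $b\le(1-\alpha)m$ and $\delta_{xy}=g-b=m-2b$. The term $2\delta_{xy}$ is the decisive gain; every remaining summand involves a third (and possibly a fourth) alternative drawn from $C\setminus\{x,y\}$, and the point is that these can be charged to the few bad votes only. The cleanest way to see this is to analyse $\Delta_1(v)+\Delta_2(v)$ vote by vote, pairing each affected subset of the form $\{x\}\cup T$ (whose top changes only under $\sigma_2^*$) with its companion $\{y\}\cup T$ (whose top changes only under $\sigma_1^*$). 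A short case check on the relative position of the elements of $T$ in the vote $v$ — invoking Lemma~\ref{l:transitive} exactly where one needs that $x$ beats both of its partners in at least half of the votes, as in the proof of Theorem~\ref{l:second-best} — should show that every good vote contributes non‑negatively to each such pair, and in fact $+2$ through the pair $\{x,y\}$ alone, whereas a bad vote can lower each pair by at most a bounded amount.

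Summing over the votes then yields a bound of the shape $\Delta_1+\Delta_2\ge 2(m-2b)-c(n)\,b$, where $c(n)$ counts the interfering pairs with their per‑vote weights. Since the third and fourth alternatives lie in the $(n-2)$-element set $C\setminus\{x,y\}$, the relevant subsets number at most $n-2$ companion pairs, $n-2$ triples through $\{x,y\}$, and $\binom{n-2}{2}$ triples avoiding it, so $c(n)$ is quadratic in $n$; carrying the weighting through is arranged to give $c(n)$ so that $2(m-2b)-c(n)b>0$ becomes $b<m/(n^2-3n+4)$, which is precisely $\alpha>g(n)$. I expect the main obstacle to be exactly this final bookkeeping: one must reorganise $\Delta_1+\Delta_2$ so that each bad vote is charged to the right subsets with the right multiplicity, and verify the sign claim that good votes never add to the interference, because it is the exact multiplicities — not merely the order of magnitude — that pin down the sharp constant $n^2-3n+4$; any slack in the per‑pair count would only weaken the resulting threshold.
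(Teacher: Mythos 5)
You should first be aware that this paper contains no proof of Theorem~\ref{t:3-wise-unanimity-general}: it is recalled verbatim from \cite{kien-sylvie-condorcet} as background, so there is no in-paper argument to compare yours against. Judged on its own merits, your skeleton --- contradiction against a median $\pi\colon L>y>Z>x>R$, the two repairs $\sigma_1^*,\sigma_2^*$, and the identity of Lemma~\ref{appendix-3-wise-major-order-theorem} --- is sound, and the bookkeeping you deferred can be closed cleanly, without any pairing of subsets $\{x\}\cup T$ with $\{y\}\cup T$. The decisive observation is simply that in each of $Q_{x,y,t}$, $P_{x,y,z}$ and $-S_{y,x,z,t}$, every pattern carrying a negative coefficient ranks $y$ before $x$, so only the $b=n_{yx}\le(1-\alpha)m$ bad votes can hurt: per bad vote, $Q_{x,y,t}$ loses at most $1$ (patterns $yxt$, $ytx$), $P_{x,y,z}$ loses at most $4$ (pattern $yzx$), and $S_{y,x,z,t}$ gains at most $2$ (patterns $yztx$, $yzxt$), while good votes contribute nonnegatively everywhere. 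Writing $p=|Z|$, $q=|R|$, $p+q\le n-2$, Lemma~\ref{appendix-3-wise-major-order-theorem} then gives
\[
\Delta_1+\Delta_2 \;\ge\; 2(m-2b)-2qb-4pb-2b\left(\binom{p}{2}+pq\right)\;=\;2m-b\left(4+3p+2q+p^2+2pq\right)\;\ge\; 2m-b\,(n^2-n+2),
\]
the coefficient being maximized at $p=n-2$, $q=0$. Since $(n-2)(n-3)\ge 0$ for all integers $n\ge 2$, one has $1/(n^2-3n+4)\le 2/(n^2-n+2)$, so the hypothesis $\alpha>g(n)$ forces $b\le(1-\alpha)m<2m/(n^2-n+2)$, hence $\Delta_1+\Delta_2>0$ and some $\sigma_i^*$ strictly beats $\pi$ --- the desired contradiction. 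Your route therefore proves the theorem, and in fact yields the threshold $1-2/(n^2-n+2)$, which coincides with $g(n)$ for $n\in\{2,3\}$ and is strictly weaker (i.e., stronger as a result) for $n\ge 4$.

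Two missteps in your write-up need correcting. First, the appeal to Lemma~\ref{l:transitive} is unusable here: the only frequency hypothesis available is on the single pair $(x,y)$, so no statement of the form $x\ge_s z$ for a third alternative $z$ exists to feed into that lemma (unlike in Theorem~\ref{l:second-best}, where such hypotheses are assumed); fortunately, as shown above, it is also unnecessary, since the sign analysis is pattern-by-pattern rather than transitivity-based. Second, your expectation that the ``exact multiplicities'' of the charging scheme would pin down the sharp constant $n^2-3n+4$ is mistaken: this $\Delta_1+\Delta_2$ route does not reproduce that constant (it produces $n^2-n+2$ with a factor $2$, a different and for $n\ge4$ better trade-off), and the theorem follows a fortiori rather than by matching $g(n)$ exactly; insisting that the bookkeeping land on $n^2-3n+4$ would send you in circles. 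As submitted, the proposal is a correct plan with the key sign-check and the worst-case maximization left open; with the two fixes above it becomes a complete proof.
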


\begin{theorem}[\textbf{5/6-majority rule}, \cite{kien-sylvie-condorcet}]
\label{l:5/6-majority-3-wise-main}
Let 
$x$ be a non-dirty candidate in an election 
 with respect to the $5/6$-majority rule and let $I= \{z \neq x \colon z\geq_{5/6}x\}$. 
Let $r$ be a $3$-wise median  such that $z>x$ in $r$ for all $z \in I$. Suppose that  $|I|(|I|-4) \leq 3 |\{ z \colon x> z \text{ in } r\}|$. 
Then for every candidate  $y\neq x$ with $x \geq_{5/6} y$, we have $x > y$  in $r$. 
\end{theorem}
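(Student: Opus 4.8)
The plan is to proceed by contradiction with an exchange argument, in the same spirit as the proofs of Theorem~\ref{3-wise-major-order} and Lemma~\ref{appendix-3-wise-major-order-theorem}. First I would fix the structural decomposition coming from the non-dirtiness of $x$: the alternatives split as $C = I \cup \{x\} \cup J$, where $I=\{z : z\geq_{5/6}x\}$ and $J=\{y : x\geq_{5/6}y\}$, and by hypothesis every $z\in I$ already satisfies $z>x$ in the median $r$. Set $B=\{y\in J : y>x \text{ in } r\}$ (the misplaced alternatives) and $T=\{w : x>w \text{ in } r\}\subseteq J$; the statement to prove is that $B=\varnothing$. I would assume $B\neq\varnothing$ and produce a ranking $r'$ with $d^3_{KT}(r',V)<d^3_{KT}(r,V)$, contradicting the optimality of $r$.

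The comparison ranking $r'$ is obtained by sliding $x$ upward so that it is ranked above every alternative of $B$; because elements of $I$ and $B$ may be interleaved above $x$, this move also forces $x$ to overtake the elements of $I$ lying below the topmost member of $B$, which I will call $I'\subseteq I$. I would then expand $\Delta=d^3_{KT}(r,V)-d^3_{KT}(r',V)$ exactly as in Lemma~\ref{appendix-3-wise-major-order-theorem}, as a signed sum over the $2$-subsets $\{x,\cdot\}$ and the $3$-subsets $\{x,\cdot,\cdot\}$ whose top element is changed by the move. The favorable part is easy to lower bound: each pair $\{x,y\}$ with $y\in B$ contributes $\delta_{xy}\geq(2/3)|V|>0$ since $x\geq_{5/6}y$, and the triples $\{x,y,y'\}$ with $y,y'\in B$ together with the triples $\{x,y,w\}$ with $y\in B$, $w\in T$ contribute positively as well, the latter being governed by Lemma~\ref{l:transitive} (which yields $x>y,w$ in at least $(2/3)|V|$ votes) and scaling linearly in $|T|$.

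The heart of the proof, and what I expect to be the main obstacle, is to bound the adverse contributions created by overtaking $I'$. Here I would use the weak transitivity of Lemma~\ref{l:transitive} to record that $z,z'>x$ in at least $(2/3)|V|$ votes for $z,z'\in I$, so that each triple $\{x,z,z'\}$ with $z,z'\in I'$ loses a controlled amount, producing an aggregate loss of order $\binom{|I'|}{2}\leq\binom{|I|}{2}$; the remaining mixed triples $\{x,z,w\}$ and $\{x,y,z\}$ with $z\in I'$ must be estimated by the same linear-programming and extremal-distribution reasoning used in Lemma~\ref{l:LP-1} and Theorem~\ref{l:second-best}(c). Matching the quadratic loss $\sim\binom{|I|}{2}$ against the linear-in-$|T|$ gain is precisely the role of the hypothesis $|I|(|I|-4)\leq 3|\{w : x>w \text{ in } r\}|$: rewriting $|I|(|I|-4)=2\binom{|I|}{2}-3|I|$, this inequality is exactly the threshold at which the harvested gain dominates the loss, forcing $\Delta>0$ and hence the contradiction. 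The delicate points are the bookkeeping of which triples change their top element under interleaving and the verification of the extremal vote profiles that make the per-triple estimates tight; once these are in place, $B=\varnothing$ follows and $x>y$ in $r$ for every $y$ with $x\geq_{5/6}y$.
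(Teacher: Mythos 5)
The paper itself does not prove Theorem~\ref{l:5/6-majority-3-wise-main} --- it is quoted from \cite{kien-sylvie-condorcet} as background for the simulations --- so your proposal has to stand on its own, and as written it contains a decisive gap: you chose the wrong exchange. Sliding $x$ upward past the topmost element of $B$ forces $x$ to overtake the set $I'\subseteq I$ interleaved below that element, and this creates adverse families your bookkeeping never controls. Each crossed pair $\{x,z\}$ with $z\in I'$ contributes $\delta_{xz}\leq -\tfrac{2}{3}|V|$ (since $z\geq_{5/6}x$), a loss of exactly the same order as your per-pair gain $\delta_{xy}\geq \tfrac{2}{3}|V|$ for $y\in B$. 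Worse, each mixed triple $\{x,z,w\}$ with $z\in I'$ and $w\in T=\{w\colon x>w \text{ in } r\}$ has its top changed from $z$ to $x$; since $z>x>w$ holds in at least $\tfrac{2}{3}|V|$ votes (Lemma~\ref{l:transitive}) while $x$ tops $\{x,z,w\}$ in at most $\tfrac{1}{6}|V|$ votes, each such triple contributes at most $-\tfrac{1}{2}|V|$, and there are $|I'|\,|T|$ of them against your $|B|\,|T|$ favorable triples $\{x,y,w\}$. So the $|T|$-linear part of $\Delta$ has coefficient proportional to $|B|-|I'|$, and the pairwise part likewise; taking $B$ to be a single alternative ranked above all of $I$ (so $|B|=1$, $|I'|=|I|$) makes every family you can bound nonpositive, and the contradiction $\Delta>0$ is unreachable no matter how you invoke $|I|(|I|-4)\leq 3|T|$ --- an inequality whose right-hand side carries no dependence on $|B|$ and is visibly not calibrated for this move.

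The hypothesis does match a different exchange, which is the one to use: keep $x$ fixed and demote a \emph{lowest} misplaced $y\in B$ to the position immediately below $x$ (every alternative strictly between this $y$ and $x$ lies in $I$, say a subset $M$). Then all changed subsets are favorable except a single quadratic family: the pair $\{x,y\}$ gains $\delta_{xy}\geq\tfrac{2}{3}|V|$; each crossed pair $\{y,z\}$, $z\in M$, gains at least $\tfrac{1}{3}|V|$ because $z>x>y$ in at least $\tfrac{2}{3}|V|$ votes; the triples $\{x,y,z\}$, $\{x,y,w\}$, and $\{y,z,w\}$ with $z \in M$, $w\in T$ gain at least $\tfrac{1}{2}|V|$, $\tfrac{1}{2}|V|$, and $\tfrac{1}{6}|V|$ respectively; and the only adverse terms are the at most $\binom{|M|}{2} \leq \binom{|I|}{2}$ triples $\{y,z,z'\}$, each costing at most $\tfrac{1}{3}|V|$ (here $n_{y\,\mathrm{tops}}\leq \tfrac{1}{3}|V|$). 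Summing, $\Delta>0$ reduces to an inequality of the shape $|M|(|M|-1)< 4+5|M|+3|T|+|M||T|$, which $|I|(|I|-4)\leq 3|T|$ comfortably supplies --- the quadratic-in-$|I|$ loss against the $3|T|$ gain is precisely the balance the hypothesis encodes, with no role for $|B|$. Your general template (assume a violating median, perform an exchange, bound each changed pair and triple via Lemma~\ref{l:transitive} and LP-style extremal estimates, as in Theorem~\ref{3-wise-major-order} and Lemma~\ref{appendix-3-wise-major-order-theorem}) is the right one, but with the promotion of $x$ the inequalities point the wrong way; the statement's hypothesis is the fingerprint of the demotion argument.
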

\par 
\noindent 
The simulation results are given Table~\ref{table:proportion-iterated-3mot-1} and Table~\ref{table:proportion-iterated-3mot-2} below. Observe that in general, the performance of the $3$-wise Major Order Theorems is better when the number $m$ of votes is odd than when the number of votes (with a comparable size) is even. This phenomenon is explained by the property that unlike the case when $m$ is odd, the probability that $\delta_{xy} =0$ is strictly higher than 0 when $m$ is even, which reduces the applicability of the $3$-wise Major Order Theorems. Moreover, as for the classical $3/4$-majority rule of Betzler et al. \cite{3/4}, the applicability of the $3$-wise $5/6$ majority rule drops quickly to almost $0\%$ on average for elections with at least $10$ alternatives and at least $15$ votes. Hence, for ease of reading, we only report the simulation results for the $3$-wise Extended Always Theorem \cite{kien-sylvie-condorcet} and the 3-wise Major Order Theorems \ref{3-wise-major-order},  \ref{3-wise-major-order-iterated}, and \ref{3-wise-major-order-with-equality}.  
The Python codes for Table~\ref{table:proportion-iterated-3mot-1}, Table~\ref{table:proportion-iterated-3mot-2}  are available  at https://github.com/XKPhung. Note also that for each pair  $(n,m)$ with $3\leq n,m \leq 30$, the approximate average maximal number of iterations of Iterated 3MOT ranges from 2 to 6. For Kemeny's rule, due to lack of time, we do not provide similar complete comparison between the performance of Theorem~\ref{t:improved-2MOT-3.0} and Theorem~\ref{iterated-2-wise-major-order} on uniformly generated data. However, by realizing several quick simulations, we find that for $n,m \leq 30$, Theorem~\ref{t:improved-2MOT-3.0} can solve approximately from 2$\%$ to 7$\%$ more pairs than Theorem~\ref{iterated-2-wise-major-order}.

\newpage

\begin{table}[h!]
\centering
\caption{Proportion of pairs with relative order solved by the $3$-wise  Extended Always Theorem \cite{kien-sylvie-condorcet} and the 3-wise Major Order Theorems \ref{3-wise-major-order},  \ref{3-wise-major-order-iterated}, and \ref{3-wise-major-order-with-equality} over 100 000 uniformly generated random instances where $n$ is the total  number of alternatives and $m$ is the total number of votes.}

\label{table:proportion-iterated-3mot-1}
\begin{tabularx}{1\textwidth}{ 
  | >{\centering\arraybackslash}X 
  | >{\centering\arraybackslash}X 
  | >{\centering\arraybackslash}X 
  | >{\centering\arraybackslash}X 
  | >{\centering\arraybackslash}X 
  | >{\centering\arraybackslash}X |}
\hline
\textbf{n} & \textbf{m} & \textbf{3AT}  &\textbf{3MOT} & \textbf{3MOTe} & \textbf{Iterated 3MOT}\\ 
\hline  \hline 
3 & 3 &   25.11\% &  86.14\% & 94.50\% &  94.46\%  
\\ 
\hline  
- & 4 &   12.44\% &  61.39\%  & 98.30\%  &64.48\%
\\ 
\hline  
 - & 5 &    37.57\% &  86.16\%  & 86.16\%  &94.63\%
 \\ 
\hline
 - & 6 &    21.76\% &  65.42\%  & 97.56\%  & 68.55\%
 \\ 
\hline
 - & 7 &    12.51\% & 83.17\%  & 86.05\% & 93.75\%
 \\ 
\hline
 - & 8 &  7.00\% & 68.10\%  & 93.36\%  & 71.76\%
 \\ 
\hline
 - & 9 &  18.03\% &  81.91\%  & 84.82\% & 91.90\%
 \\ 
\hline
- & 10 &  11.04\% & 68.79\%  & 92.30\%  &73.15\%
 \\ 
\hline
- & 15 &   3.51\% &  81.13\%  & 82.85\%   &90.92\%
 \\ 
\hline\hline 
4 & 3 &   24.93\% &  75.62\% & 81.44\% & 89.63\%
 \\ 
\hline
- & 4 &    12.44\% & 59.35\%  & 80.53\% &64.42\%
 \\ 
\hline
- & 5 &    6.26\% & 72.59\%   & 75.12\%  &88.37\%
 \\ 
\hline
- & 6 &   3.10\% &  60.55\%  &  78.52\% & 65.75\%
 \\ 
\hline
- & 7 &    1.52\% &  70.10\%  & 73.96\% &86.20\%
 \\ 
\hline
- & 8 &    0.78\% & 61.55\%   & 75.45\%  &68.14\%
 \\ 
 \hline 
 - & 9 &    3.87\% &  69.40\%  & 72.11\%  & 84.19\%
 \\ 
\hline
- & 10 & 2.12\% &  61.75\%  & 74.95\% &68.99\%
 \\ 
\hline
- & 15 &    0.10\% &   68.02\%  & 69.03\%  & 82.21\%
 \\ 
\hline\hline 
5 & 3 &    25.05\% &  66.46\%  & 71.34\% & 85.20\%
 \\ 
\hline
- & 4 &    12.52\% &   54.90\% & 68.96\% & 60.90\%
 \\ 
\hline
- & 5 &   6.26\% &   62.58\% & 65.63\%  & 81.97\%
 \\ 
\hline
- & 6 &    3.15\% & 55.43\%  & 66.53\% & 62.12\%
 \\ 
 \hline
 - & 7 &    1.59\% &   60.46\%   & 64.13\% &78.60\%
 \\ 
\hline
- & 8 &   0.80\% & 55.34\%   & 64.00\% &63.77\%
 \\ 
\hline
- & 9 &    0.40\% &  59.85\%  & 62.29\% &76.85\%
 \\ 
\hline
- & 10 &    0.21\% &   55.23\% & 63.52\%  &64.17\%
 \\ 
\hline
- & 15 &    0.10\% &  58.57\%  & 60.35\%  &74.49\%
 \\ 
\hline
\end{tabularx}
\end{table}

\newpage 
 
\begin{table}[h!]
\centering
\caption{Proportion of pairs with relative order solved by the $3$-wise Extended Always Theorem \cite{kien-sylvie-condorcet} and the 3-wise Major Order Theorems \ref{3-wise-major-order},  \ref{3-wise-major-order-iterated}, and \ref{3-wise-major-order-with-equality} over 100 000 random  instances where $n$ is the total  number of alternatives and $m$ is the total number of votes. For 3MOT and 3MOTe with $n=15$, the simulation is realized with random 10 000 instances. For $n=8$, resp. $n\in \{10,15\}$, the simulation for Iterated 3MOT is realized with random 10 000, resp. 1000,  instances. For $n=20$, all the simulations are realized with random 1000 instances.  All instances are uniformly generated. 
}
\label{table:proportion-iterated-3mot-2}

\begin{tabularx}{1\textwidth}{ 
  | >{\centering\arraybackslash}X 
  | >{\centering\arraybackslash}X 
  | >{\centering\arraybackslash}X 
  | >{\centering\arraybackslash}X 
  | >{\centering\arraybackslash}X 
  | >{\centering\arraybackslash}X |}
\hline
\textbf{n} & \textbf{m} & \textbf{3AT} & \textbf{3MOT} & \textbf{3MOTe} &\textbf{Iterated 3MOT}\\ 
\hline  \hline 
8 & 3 &   24.93\% & 48.97\% & 52.10\% &  72.12\%
\\ 
\hline  
- & 4 &   12.52\%  & 43.22\%  & 48.86\%  &50.57\%
\\ 
\hline  
 - & 5 &    6.25\% & 43.97\%  &  46.71\% &64.33\%
 \\ 
\hline
 - & 6 &    3.12\% & 41.71\%  &  46.00\% & 50.82\%
 \\ 
\hline
 - & 7 &    1.56\% & 42.45\%  &  44.93\% &59.95\%
 \\ 
\hline
 - & 8 &  0.79\% & 40.91\%  &  44.35\% &51.15\%
 \\ 
\hline
 - & 9 &  0.39\% & 41.94\%  & 43.76\%  &58.32\%
 \\ 
\hline
- & 10 &  0.19\% & 40.38\%  &  43.57\% &50.83\%  
 \\ 
\hline
- & 15 &   0.01\% & 40.57\%  & 41.92\%  &55.65\%
 \\ 
\hline\hline 
10 & 3 &   24.97\% &  42.83\% & 45.14\% &65.00\%
 \\ 
\hline
- & 4 &  12.51\% & 37.37\%  & 41.14\% &45.24\%
 \\ 
\hline
- & 5 &    6.23\% &36.64\%   & 38.85\% &54.17\%
 \\ 
\hline
- & 6 &  3.13\%  & 35.14\%  & 38.01\% &41.44\%
 \\ 
\hline
- & 7 & 1.55\% & 35.21\%  & 37.17\%   &50.60\%
 \\ 
\hline
- & 8 &  0.79\%    & 34.18\%   & 36.52\%  & 44.02\%
 \\ 
 \hline
 - & 9 &  0.39\%  & 34.59\%  &  36.08\% &50.07 \%
 \\ 
\hline
- & 10 & 0.20\% & 33.62\%  & 35.75\%  &43.48\%
 \\ 
\hline
- & 15 &    0.01\% &  33.22\%  &  34.34\%  &46.74\%
 \\ 
\hline\hline 
15 & 3 &  24.99\% & 35.20\%  & 36.43\%  & 51.36\%
 \\ 
\hline
- & 4 &    12.49\% & 28.14\% & 30.01\%  & 35.45\% 
 \\ 
\hline
- & 5 &   6.27\% &  25.92\% & 27.32\%  & 38.54\%
 \\ 
\hline
- & 6 &    3.12\% & 24.72\%  & 26.22\% & 32.62\%
 \\ 
 \hline
 - & 7 &    1.57\% & 24.19\%   &  25.41\% & 34.56\%
 \\ 
\hline
- & 8 &    0.78\% &  23.76\%   & 25.01\% & 31.90\% 
 \\ 
\hline
- & 9 &    0.39\% &  23.5\%  & 24.51\%  & 33.16\%
 \\ 
\hline
- & 10 &    0.20\% &   23.17\% & 24.28\% & 30.82\% 
 \\ 
\hline
- & 15 &     0.01\% & 22.25\%  & 22.98\%  & 30.91\% 
 \\ 
\hline \hline 
20 & 3 & 25.38\%  & 32.07\% & 32.81\%  & 42.90\% 
\\\hline 
- & 4 & 12.40\% & 23.04\% & 24.13\%  & 28.65\%
\\\hline 
- & 5 & 6.06\% & 19.77\% &20.74\%  & 28.11\%
\\\hline 
- & 6 & 3.16\% & 18.89\%  &19.88\% & 25.16\%
\\\hline 
- & 7 & 1.56\% & 18.33\%  &19.16\% & 25.68\% 
\\\hline 
- & 8 & 0.77\% & 17.69\% & 18.50\% & 23.76\% 
\\\hline 
- & 9 & 0.39\% & 17.14\% & 17.87\% & 24.27\%
\\\hline 
- & 10 & 0.19\% & 17.12\% & 17.87\% & 23.30\% 
\\\hline 
- & 15 & 0.01\% & 16.29\% & 16.82\%  &  22.44\% 
\\\hline 

\end{tabularx}
\end{table}

\clearpage

\bibliography{Kemeny}

\end{document}